\newcommand{\Z}{\mathbb{Z}}
\newcommand{\R}{\mathbb{R}}
\newcommand{\T}{\mathbb{T}}
\newcommand{\E}{\mathbb{E}}
\newcommand{\bvepsilon}{\bar{\varepsilon}}
\newcommand{\spec}{{\rm Spec}}
\newcommand{\hatepsilon}{\widehat{\varepsilon}}
\renewcommand{\P}{\mathbb{P}}
\newcommand{\mcA}{\mathcal{A}}
\newcommand{\mcC}{\mathcal{C}}
\newcommand{\mcE}{\mathcal{E}}
\newcommand{\mcF}{\mathcal{F}}
\newcommand{\mcH}{\mathcal{H}}
\newcommand{\mcS}{\mathcal{S}}
\newcommand{\mcI}{\mathcal{I}}
\newcommand{\mcR}{\mathcal{R}}
\newcommand{\mcO}{\mathcal{O}}
\newcommand{\mcG}{\mathcal{G}}
\newcommand{\mcN}{\mathcal{N}}
\newcommand{\whT}{\widehat{T}}
\newcommand{\whx}{\widehat{\xi}}
\newcommand{\wta}{\widetilde{a}}
\newcommand{\trim}{{\rm Trim}}
\newcommand{\whP}{\widehat{P}}
\renewcommand{\Re}{{\rm Re}}
\newcommand{\dist}{{\rm dist}}
\newcommand{\supp}{{\rm supp}}
\newcommand{\proj}{{\rm Proj}}
\newtheorem{thm}{Theorem}[section]
\newtheorem{lem}[thm]{Lemma}
\newtheorem{rmk}{\bf Remark}[section]
\theoremstyle{definition}
\numberwithin{equation}{section}
\keywords{Random Schr\"odinger operators, long-range hopping, Alloy-type Bernoulli potential, Anderson localization, multi-scale analysis, Green's function estimates, periodic approximation, uncertainty principle.}
\begin{document}
\title[Long-range Bernoulli model]{On localization for  the alloy-type Anderson-Bernoulli model with long-range hopping}

\author[Liu]{Shihe Liu}
\address[S. Liu] {School of Mathematical Sciences,
Peking University,
Beijing 100871,
China}
\email{2301110021@stu.pku.edu.cn}

\author[Shi]{Yunfeng Shi}
\address[Y. Shi] {School of Mathematics,
Sichuan University,
Chengdu 610064,
China}
\email{yunfengshi@scu.edu.cn}

\author[Zhang]{Zhifei Zhang}
\address[Z. Zhang] {School of Mathematical Sciences,
Peking University,
Beijing 100871,
China}
\email{zfzhang@math.pku.edu.cn}

\begin{abstract}
In this paper, we prove the Anderson localization near the spectral edge  for some {\it alloy-type}  Anderson-Bernoulli model  on $\Z^d$ with exponential long-range hopping. This extends the  work of Bourgain [{\it Geometric Aspects of Functional Analysis}, LNM 1850: 77--99,
2004], in which  he pioneered a novel  multi-scale analysis to treat  Bernoulli random variables. Our proof  is mainly based on  Bourgain's method. However,  to establish the initial scales Green's function estimates,  we  adapt  the approach   of Klopp [{\it Comm. Math. Phys}, Vol. 232, 125--155, 2002],  which is based  on the Floquet-Bloch theory and a certain quantitative uncertainty principle.   Our proof also applies to an analogues model on $\R^d.$
\end{abstract}

\maketitle

\tableofcontents

\section{Introduction and main results}
The study of Anderson localization (i.e., pure point spectrum with exponentially decaying eigenfunctions)   for Schr\"odinger operator on $\Z^d$ with  i.i.d. random potentials  has attracted great attention over the years.   For multi-dimensional operators with {\it continuous}   (e.g., H\"older continuous or even absolutely continuous) random potentials, localization can be established via either multi-scale analysis (MSA) method \cite{FS83} or fractional moment method (FMM) \cite{AM93}, cf. \cite{Kir08, AW15} for more results. Nevertheless,  if one  tries to prove localization for random Schr\"odinger operators with {\it singular}  potentials,  such as the Bernoulli one, known as the Anderson-Bernoulli model,  there comes essential difficulty: the absence of a priori  Wegner estimate required in the traditional MSA scheme \footnote{The proof of localization via  FMM  even requires the  absolute continuity of the random potential.}. In the one dimensional Anderson-Bernoulli models case,  this difficulty can be resolved via the transfer matrix type method, such as the Furstenberg-LePage approach \cite{CKM87},  cf. e.g., \cite{SVW98, DSS02, JZ19} for more results in this case.   However, the  transfer matrix formalism may not be available in higher dimensions, and the proof of localization for multi-dimensional Anderson-Bernoulli models  becomes significantly challenging.

Actually, Bourgain  \cite{Bou04} made the first important contribution toward the proof of   Anderson  localization for multi-dimensional Anderson-Bernoulli type models. He considered certain  {\it alloy-type} Anderson-Bernoulli model  on $\Z^d$, i.e., the single-site  random potential is $$D_{n}(\varepsilon)=\sum\limits_{m} 2^{-|m-n|}\varepsilon_m, \ \varepsilon =\{\varepsilon_m\}\in\{\pm 1\}^{\Z^d}$$  and established localization near the spectral edge.  In this remarkable work, Bourgain developed a novel MSA scheme,  which   used    the {\it free sites argument}  together with Boolean functions  analysis  to  obtain  the  Wegner estimate along  the iterations.  In fact, the variable $\varepsilon_n$ on  {\it free sites} can be made  continuous  without affecting the Green’s function estimates at that scale, which  allows   the application of the first-order eigenvalue  variation.  In addition,  for  the proof of the Wegner estimate, the non-vanishing correction coefficient  of  $2^{-|n|}$ ($n\in\Z^d$)  provides the {\it transversality}  condition  required  in the probabilistic Lojasiewicz inequality (cf. Lemma \ref{dislem}),  and  plays an essential role there. Later, the method of \cite{Bou04} was largely extended by Bourgain-Kenig \cite{BK05}, in which they  achieved the breakthrough and proved Anderson localization near the spectral  edge  for the standard Anderson-Bernoulli model on $\R^d$ via,  particularly,  introducing a refined version of the unique continuation principle, cf. \cite{AGKW, GK13} for  more results. Since the work  \cite{BK05} does  not dispose of a discrete version of unique continuation principle, the case of the version of the Anderson-Bernoulli model on $\Z^d$ (for $d\geq 2$) remains unsettled until the recent important work of  Ding-Smart \cite{DS20}. In \cite{DS20}, the authors  proved the Anderson localization near the spectral edge for the standard Anderson-Bernoulli model (i.e., $2^{-|m-n|}\to \delta_{mn}$ in  $D_n(\varepsilon)$) on $\Z^2$, and among others, they established  a new probabilistic version of  unique continuation result related to Buhovsky et al. \cite{BLMS}.  Recently, Li-Zhang \cite{LZ22}  extended the work of \cite{DS20} to the $\Z^3$ case, and to the best of our knowledge,  the problem   remains open for  $\Z^d, d\geq 4$.  We also mention the work  of Imbrie \cite{Imb21}, where the localization has been proved for random Schr\"odinger operators on $\Z^d$ with single-site potential having a discrete distribution taking $N$ values, with $N$ large. 
 
In this paper, we aim to extend the work of Bourgain \cite{Bou04}  to the exponential long-range hopping case and prove  Anderson localization near the spectral edge.  Indeed,  there has been a lot of research  on localization  for  operators on  $\Z^d$ with  long-range hopping and {\it continuous}  random potentials, cf. e.g.,  \cite{SS89, Wan91, AM93, Kle93, Gri94, JM99, Klopp02, Shi21, SWY25}. As the localization phenomenon  is  universe,  it is reasonable to expect that it should occur  for Bernoulli type potentials, which is one of our main motivations.   The main scheme of our proof is definitely adapted from Bourgain \cite{Bou04}.  However,  to handle long-range hopping in the  initial scales Green's function estimates,   we use  some  ideas  of Klopp \cite{Klopp02} based on the Floquet-Bloch theory and a certain quantitative uncertainty principle.   Besides,  we give both a clarification and streamlining of Bourgain's approach \cite{Bou04} via  several important technical improvements.   Our proof also applies  to an analogues model on $\R^d.$  We want to remark  that,   since the  transfer matrix formalism is currently only available for operators with a finite-range (i.e., Laplacian)  hopping,  our localization result is even new in the one-dimension case. 

\subsection{Main results}
Denote   $|n|_1=\sum\limits_{i=1}^d|n_i|$ and  $|n|=\max\limits_{1\leq i\leq d}|n_i|$  for $n=(n_1,\cdots,n_d)\in\Z^d$. 

 In this paper,  we   study the  long-range model   on $\Z^d$ 
\[H(\varepsilon)=T+D(\varepsilon),\]
where  $D(\varepsilon)={\rm diag}\{D_n(\varepsilon)\}$  is the {\it alloy-type}  Bernoulli potential, with the single-site elements given by   
\[D(\varepsilon)_n=\lambda \sum_{m\in \Z^d} 2^{-|n-m|} \varepsilon_m,\ \lambda>0.\]
Here we assume that $\varepsilon=\{\varepsilon_n\}_{n\in \Z^d}\in \{\pm 1\}^{\Z^d}$ are the i.i.d. Bernoulli random variables obeying  
\[\mathbb{P}(\varepsilon_n=\pm 1)=\frac{1}{2}.\]
 The long-range hopping  $T$  is   a Toeplitz   operator   with  $T(n,n')=T(n-n')$.  Under the  Fourier transform
\begin{equation}\label{Fourier transform lattice}
  \mathcal{F}:\ell^2(\Z^d)\longrightarrow L^2(\T^d),\ (\mathcal{F}a)(x)=\sum_{n\in \Z^d}a_n e^{2\pi i n\cdot x}, 
\end{equation}
$T$ has  the  symbol of 
\[\widehat{T}(x)=\sum_{n\in\Z^d}T(n)e^{2\pi i n\cdot x}, \ x\in \T^d.\]
Throughout this paper, we assume $T$ satisfies   the following assumptions:
\begin{itemize}
  \item[\textbf{(A1)}] $\widehat{T}(x)$ is bounded and real-valued, so that $T$ is bounded and self-adjoint; 
  \item[\textbf{(A2)}] $|T(n)|\leq e^{-c|n|}$ for some $c>0$;
  \item[\textbf{(A3)}] Denote  \begin{align}\label{Mdefn}
M=\max_{x\in \T^d} \widehat{T}(x).
\end{align}
We assume  that 
\[\widehat{T}^{-1}(\{M\})=\{\theta_1,\theta_2,\cdots,\theta_J\} \subset \T^d, \]
and there exists  some constant $\Theta>0$ such that  for all $x\in\T^d,$
\[M-\whT(x) \geq \Theta \min_{1\leq j\leq J}\|x-\theta_j\|_{\T^d}^2.\]
\end{itemize}
\begin{rmk}
Since we study localization near the spectral edges (related to the Lifshitz tails argument), the non-degeneracy  assumption  {\bf (A3)} is reasonable. Indeed, this assumption was  needed  even in the continuous random potentials case, cf. e.g.,  \cite{Klopp98,Klopp02,GRM22}.
\end{rmk}

Clearly, the discrete Laplacian $T=\Delta$ defined by  $\Delta(n,n')=\delta_{|n-n'|_1,1}$ satisfies all the above assumptions.  

It is easy to see that for a.e. $\varepsilon,$
 \[ \sup \sigma(H(\varepsilon))=M+\lambda \sum_{m\in\Z^d}2^{-|m|}:= E^*,\] 
 where $\sigma(\cdot)$ denotes the spectrum of an operator. 
In \cite{Bou04}, Bourgain proved that for  a.e. $\varepsilon$,  $H_0(\varepsilon)=\Delta+D(\varepsilon)$ exhibits  Anderson localization near the spectral edge.   Our main result below is an extension of the work \cite{Bou04} to the long-range hopping setting. More precisely, we have 
\begin{thm}\label{Main}
Under the assumptions of  \textbf{(A1)$\sim$(A3)}, for any $\lambda >0$, there exists  a small $\delta=\delta (\lambda, T, d)>0$ such that, for a.e. $\varepsilon$, $H(\varepsilon)$ exhibits Anderson localization on  $[E^*-\delta,E^*]$.
\end{thm}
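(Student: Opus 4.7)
The overall plan is to implement a multi-scale analysis (MSA) in the spirit of Bourgain \cite{Bou04}, establishing on a sequence of scales $L_k \to \infty$ a pair of inductive statements: exponential off-diagonal decay of the finite-volume Green's function $G_\Lambda(E) = (H_\Lambda(\varepsilon)-E)^{-1}$ for energies $E$ near $E^*$, together with a probabilistic Wegner-type bound on the measure of $\varepsilon$ for which such decay fails. The induction on scales will be coupled to a \emph{free sites} bookkeeping: for each intermediate scale one keeps a prescribed fraction of the Bernoulli variables $\varepsilon_n$ frozen and the rest ``free'' (treated as continuous parameters), so that first-order eigenvalue perturbation and the transversality provided by the alloy weights $2^{-|n-m|}$ (which never vanish) can be combined through the probabilistic Lojasiewicz inequality (Lemma \ref{dislem}) to produce the Wegner estimate along the iteration.

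For the initial scale $L_0$, the absence of a priori Wegner control forces a non-perturbative input, which is where the long-range hopping requires departing from \cite{Bou04}. Here I would follow Klopp \cite{Klopp02}: use \textbf{(A3)} together with Floquet-Bloch theory for a periodic approximation of $H(\varepsilon)$ to locate the bottom (top) of the spectrum via the non-degenerate maxima $\theta_1,\dots,\theta_J$ of $\widehat{T}$, then invoke a quantitative uncertainty principle (of Logvinenko-Sereda type adapted to the $\T^d$ side) to compare the $L^2$ mass of a quasi-mode on a box with its mass on the ``support'' of the potential fluctuation. Combined with a Lifshitz-tails style density-of-states estimate, this yields, with very high probability, that no eigenvalue of $H_{\Lambda_{L_0}}(\varepsilon)$ lies close to $E^*$ and hence provides the required Green's function smallness at scale $L_0$, uniformly in the long-range hopping.

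With the initial estimate in hand, the inductive step proceeds as in Bourgain: cover a large box at scale $L_{k+1}$ by overlapping good subboxes at scale $L_k$, and propagate Green's function decay via a resolvent identity. Two modifications are needed to accommodate the exponential hopping in \textbf{(A2)}. First, the geometric resolvent identity must be written for the full non-local $T$, so that boundary terms become sums weighted by $|T(n-n')| \leq e^{-c|n-n'|}$; these are absorbed because the inductive decay rate of $G_\Lambda$ will be chosen strictly smaller than $c$, ensuring convergence of the convolution of decays. Second, the free-site / Wegner step is run on slightly enlarged boxes (adding a collar of size $\sim \log L_k$) so that the effective potential of the exterior still has a uniformly transversal dependence on the free $\varepsilon_m$ through the alloy coefficients; the Boolean/Sperner argument of \cite{Bou04} then applies verbatim to produce the Wegner bound and to upgrade the pair of inductive hypotheses from $L_k$ to $L_{k+1}$. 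Finally, a standard Borel-Cantelli plus generalized-eigenfunction argument converts the MSA output into pure-point spectrum with exponentially decaying eigenfunctions on $[E^*-\delta,E^*]$ for a.e.\ $\varepsilon$.

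The main obstacle, as I see it, is the interaction of the long-range hopping with the free-sites Wegner step at arbitrary scales: one must simultaneously (i) keep the collar large enough for \textbf{(A2)} to render the influence of ``far'' sites negligible, (ii) keep the number of free sites small enough that the remaining combinatorial complexity in the Boolean analysis is controlled, and (iii) keep it large enough that the first-order variation in the free $\varepsilon_m$ is genuinely transversal with a quantitative lower bound (this is where the non-vanishing of $2^{-|n-m|}$ is crucial). Balancing these three requirements, and verifying that the resulting parameters close the MSA induction without degrading the Klopp-type initial estimate, is the technical heart of the argument.
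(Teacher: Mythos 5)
Your plan follows essentially the same route as the paper: a Klopp-style initial-scale estimate via periodic approximation, Floquet--Bloch reduction and the quantitative uncertainty principle; Bourgain's free-sites MSA with the Boolean/Sperner distributional inequality supplying the Wegner bound; a resolvent-identity coupling adapted to the exponential hopping; and a final conversion to localization. The only step you compress is the elimination of the energy dependence in the probabilistic bounds (the paper's Theorem \ref{Kirsch prob}, a two-disjoint-blocks estimate uniform over $E\in[E^*-\delta/2,E^*]$, fed into Shnol's theorem), but this is the standard conversion you allude to and does not change the overall architecture.
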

\begin{rmk}
We also prove a similar localization result  for an analogues model on $\R^d$ (cf. Appendix \ref{conmod}). 
\end{rmk}

The proof of  Theorem \ref{Main} is based on  the  following multi-scale analysis type  Green's function estimates.  So denote 
\[G_N(E;\varepsilon)=\left(R_{\Lambda_N} (H(\varepsilon)-E+io)R_{\Lambda_N}\right)^{-1},\]
where  $R_{\Lambda}$ is  the restriction  operator  on  $\Lambda\in \Z^d$ and  $$\Lambda_N=[-N,N]^d,\ \Lambda_N(k)=\Lambda_N+k\ {\rm for}\ k\in \Z^d.$$  For simplicity,  we write  $R_N=R_{\Lambda_N}$.

Denote by $\| \cdot\|$ the operator norm. We have 
\begin{thm}\label{Green function estimates}
Under the assumptions of Theorem \ref{Main},  there exist  some constant $\gamma>0,c>0$ such that, for each $E\in [E^*-\delta,E^*]$ and  $N\gg1$, there is a set $\Omega_N(E)\subset\{\pm 1\}^{\Z^d}$ satisfying 
\begin{equation}\label{N scale bad event prob}
  \mathbb{P}(\Omega_N(E))<e^{-c\frac{(\log N)^2}{\log \log N}},
\end{equation}
so that the following holds true. For  $\varepsilon\notin\Omega_N(E),$
we have
\begin{align}
 \label{Green L2 norm} \| G_N(E;\varepsilon)\|&<e^{N^{\frac{9}{10}}},\\
\label{Green off-diagonal decay}
  |G_N(E;\varepsilon)(n,n')|&<e^{-\gamma|n-n'|} \ {\rm for} \ |n-n'|>\frac{N}{10}. 
\end{align}
\end{thm}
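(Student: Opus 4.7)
The plan is to prove Theorem \ref{Green function estimates} by a multi-scale induction on a geometric sequence of scales $N_0 < N_1 < N_2 < \cdots$, following the architecture of Bourgain \cite{Bou04} but with the initial scale step provided by Klopp's \cite{Klopp02} Floquet--Bloch/uncertainty-principle method to accommodate the long-range hopping. The induction propagates \eqref{Green L2 norm} and \eqref{Green off-diagonal decay} jointly, together with a Wegner-type probability bound of the shape \eqref{N scale bad event prob}.

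For the initial scale $N_0$, I would periodise $H(\varepsilon)$ with period comparable to $N_0$ and apply the Floquet--Bloch decomposition to the periodised operator. Assumption \textbf{(A3)} provides quadratic nondegeneracy of $M-\whT(x)$ at the finite set $\{\theta_j\}$, so a Lifshitz-tails argument yields that with probability $\geq 1-e^{-cN_0^{\alpha}}$ the spectrum of $H(\varepsilon)|_{\Lambda_{N_0}}$ avoids $[E^*-\delta, E^*]$. Assumption \textbf{(A2)} bounds the error from periodisation exponentially in $N_0$, and a quantitative uncertainty principle then converts this spectral gap into \eqref{Green L2 norm}--\eqref{Green off-diagonal decay} at scale $N_0$.

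For the inductive step $N_k \to N_{k+1}$, I would cover $\Lambda_{N_{k+1}}$ by translates $\Lambda_{N_k}(y)$, declare a translate \emph{good} when \eqref{Green L2 norm}--\eqref{Green off-diagonal decay} hold at scale $N_k$, and assemble a global inverse by iterating the resolvent identity. The crucial input is a Wegner-type bound ensuring that only a sub-polynomial fraction of translates are bad. I would obtain this via Bourgain's \emph{free sites} trick: fix the Bernoulli variables outside a sparse set $F$ of \emph{free} sites, and on $F$ couple the $\{\pm 1\}$-valued variables to continuous ones so that first-order eigenvalue perturbation becomes applicable. Because the single-site profile $2^{-|m-n|}$ is strictly positive, the map $\varepsilon_F \mapsto E_j(\varepsilon)$ is transversal in the relevant direction, and the probabilistic Lojasiewicz inequality (Lemma \ref{dislem}) converts this transversality into the sub-exponential probability bound \eqref{N scale bad event prob}. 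Sparsity of bad boxes then yields \eqref{Green L2 norm}--\eqref{Green off-diagonal decay} at scale $N_{k+1}$ by a standard geometric-resolvent argument.

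The hardest step, and the principal adaptation required beyond \cite{Bou04}, is the resolvent gluing in the presence of the long-range hopping $T$. Whereas a nearest-neighbour Laplacian couples only adjacent boxes, here every pair of boxes is coupled directly, so the boundary term in the resolvent identity picks up contributions from arbitrarily distant sites. I would control this by exploiting \textbf{(A2)}: the coupling between distant sites carries the factor $|T(n-n')|\leq e^{-c|n-n'|}$, which, provided the induction's decay rate $\gamma$ is fixed strictly below $c$, absorbs the norm loss $e^{N_k^{9/10}}$ and preserves the exponential decay rate across scales. A secondary subtlety is that the free-sites perturbation must still yield nontrivial transversality despite the long-range coupling, but since $\partial_{\varepsilon_m}(D(\varepsilon)_n) = \lambda\cdot 2^{-|n-m|}\neq 0$ pointwise, the transversality input to Lemma \ref{dislem} carries over from \cite{Bou04} essentially verbatim.
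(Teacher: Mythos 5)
Your proposal follows essentially the same route as the paper: Floquet--Bloch periodisation plus Klopp's quantitative uncertainty principle at the initial scale, and Bourgain's free-sites argument with the Sperner-type distributional inequality (Lemma \ref{dislem}) driving the multi-scale induction, with \textbf{(A2)} absorbing the long-range coupling in the resolvent gluing exactly as in the paper's coupling Lemma \ref{MSA}. Two imprecisions are worth flagging: the uncertainty principle is not what converts the good event into \eqref{Green L2 norm}--\eqref{Green off-diagonal decay} (that is a Neumann series in $(T_{N_0}+1)(E+1-D_{N_0}(\varepsilon))^{-1}$, and the resulting initial-scale probability is only $e^{-(\log N_0)^3}$, not $e^{-cN_0^{\alpha}}$) --- rather it enters the probabilistic estimate by reducing an approximate-eigenfunction condition to a statement about local averages of the potential; and your inductive step tacitly treats distinct boxes as independent, whereas $D_n(\varepsilon)$ depends on all $\varepsilon_m$, so one must first trim the events to depend only on variables in $\Lambda_{\frac{11}{10}N}$ (at the cost of an exponentially small perturbation of the potential) before any independence, union bound, or free-sites decomposition can be invoked.
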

 \begin{rmk}
 As we will see in the proof,  this theorem holds for $N\geq N_0\gg1$ with 
 $ \delta\sim  (\log N_0)^{-10^3}.$ We also have 
 $$\gamma\sim (\log N_0)^{-2\times 10^3}.$$
 So  $\gamma\sim \delta^2\to 0$ as $\delta\to 0.$
 \end{rmk}

Theorem \ref{Green function estimates} will be proven inductively on the scale $N$. In this process, as in \cite{Bou04}, additional Green's function estimates involving {\it continuous variables}  are required. Define  $H(r)$ to be  the continuous  extension of $H(\varepsilon)$ from $\varepsilon\in \{\pm 1\}^{\Z^d}$ to $r\in [-1,1]^{\Z^d}$. 
We have 
\begin{thm}\label{Green function estimates, continuous version}
For each  $E\in [E^*-\delta,E^*]$ and  $N\gg1$, the extended Green's function 
\begin{equation*}
  G'_{N}(E;t,\varepsilon):= G_N(E;r_0=t,r_j=\varepsilon_j (j\neq 0))
\end{equation*}
satisfies  \eqref{Green L2 norm} and \eqref{Green off-diagonal decay} for  any $ t\in [-1,1]$ and $\varepsilon$ outside a  set  $\Omega'_N(E)\subset \{\pm 1\}^{\Z^d\setminus\{0\}}$ with  
\begin{equation}\label{N scale bad event prob, continue version}
  \mathbb{P}(\Omega'_N(E))<\frac{1}{100}. 
\end{equation}
\end{thm}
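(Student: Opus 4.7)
My plan is to treat Theorem \ref{Green function estimates, continuous version} as a free-site extension of Theorem \ref{Green function estimates}, upgrading the pointwise Bernoulli Green's-function estimate to a bound uniform in $t\in[-1,1]$ at the price of the weaker probability budget $1/100$. The structural input is the decomposition
\[H(t,\{\varepsilon_n\}_{n\neq 0}) = \tilde H(\{\varepsilon_n\}_{n\neq 0}) + \lambda t V,\qquad V=\mathrm{diag}\!\bigl(2^{-|n|}\bigr)\succeq 0,\]
from which Feynman--Hellmann gives that each eigenvalue $\mu_j(t)$ of $H_N(t,\varepsilon)|_{\Lambda_N}$ is real-analytic, monotone non-decreasing, and Lipschitz in $t$ with constant at most $\lambda$. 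Hence every trajectory $\{\mu_j(t):t\in[-1,1]\}$ is an interval of length at most $2\lambda$.

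I would sample a polynomial-in-$N$ grid $\{t_k\}_{k=1}^K\subset[-1,1]$ and apply Theorem \ref{Green function estimates} at each $t_k$, treating $r_0=t_k$ as a fixed deterministic parameter absorbed into the non-random part of $H$. The proof of Theorem \ref{Green function estimates} uses only the Bernoulli randomness of $\{\varepsilon_n\}_{n\neq 0}$, so it is insensitive to this modification. A union bound produces an exceptional set of Bernoulli configurations of probability $K\cdot e^{-c(\log N)^2/\log\log N}\ll 1/100$, outside which \eqref{Green L2 norm}--\eqref{Green off-diagonal decay} hold simultaneously at every $t_k$. I would then upgrade pointwise-in-$t$ control to a uniform-in-$t$ statement via monotonicity: a failure at $t\in(t_k,t_{k+1})$ forces some trajectory $\mu_j$ to straddle $E$, i.e.\ $\mu_j(t_k)<E<\mu_j(t_{k+1})$. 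The total number of such straddlings along $[-1,1]$ is bounded by $\#\{j:\mu_j(-1)<E<\mu_j(+1)\}\leq \#\{j:|\mu_j(0)-E|<2\lambda\}$, which near the spectral edge $E^*$ is small by a Lifshitz-tails count. The straddling events themselves are then excluded by a Wegner-type argument based on the probabilistic Lojasiewicz inequality (Lemma \ref{dislem}), with the non-degenerate alloy coefficients $2^{-|n|}$ and the transverse Bernoulli directions at sites $n\neq 0$ supplying the needed transversality.

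The main obstacle is precisely this uniform-in-$t$ interpolation: because $\|G'_N\|$ can be as large as $e^{N^{9/10}}$, naive resolvent perturbation across the grid is hopeless, and one must genuinely exploit eigenvalue monotonicity together with a Lifshitz-tails count of spectral density near the edge and a Wegner estimate in the transverse Bernoulli variables. The loose probability bound $1/100$ is chosen precisely to accommodate these additional exceptional events, in contrast to the super-polynomial bound of Theorem \ref{Green function estimates}.
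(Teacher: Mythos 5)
Your route is genuinely different from the paper's, and the step on which it pivots does not work as written. The interpolation between grid points is the whole content of the theorem, and you dispose of it with two claims that both fail. First, the count $\#\{j:\mu_j(-1)<E<\mu_j(+1)\}\leq\#\{j:|\mu_j(0)-E|<2\lambda\}$ is \emph{not} controlled by Lifshitz tails: the window $[E-2\lambda,E+2\lambda]$ has macroscopic length $\sim\lambda$, far outside the Lifshitz regime $[E^*-\delta,E^*]$ with $\delta=(\log N_0)^{-10^3}\ll\lambda$, so this count is typically a positive density times $N^d$, not small. Second, "the straddling events are excluded by Lemma \ref{dislem}" is not available: that lemma bounds $\P(|f(\varepsilon)-E|<\kappa)$ for a \emph{fixed} small window $\kappa$, whereas the straddling event is $E\in[\mu_j(-1),\mu_j(+1)]$, a random interval whose length $2\lambda\,\overline{\sum_n 2^{-|n|}|\xi_n|^2}$ can be of order $\lambda$; moreover the influence bounds $a^{-l}\leq I_l\leq b^{-l}$ needed to invoke the lemma require the eigenfunction decay \eqref{eigenfunction decay}, i.e.\ the entire free-sites construction of Section \ref{LGFS}, which you have not set up. To repair this you would have to split eigenvalues by how much their trajectories sweep, handle the slowly-moving ones by the full Wegner machinery and the fast-moving ones (eigenfunctions concentrated near the origin, near the spectral edge) by the Section \ref{section 2} analysis — at which point you are rebuilding both halves of the paper. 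Separately, your proposal never addresses how the off-diagonal decay \eqref{Green off-diagonal decay} is obtained at non-grid $t$; monotonicity only speaks to the resolvent norm, and Neumann perturbation across a grid spacing $N^{-C}$ is defeated by $\|G'_N\|\sim e^{N^{9/10}}$, as you yourself note.

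The paper avoids all of this: Theorem \ref{Green function estimates, continuous version} is proved by induction on scales jointly with Theorem \ref{Green function estimates}, with \emph{no} discretization in $t$ and \emph{no} Wegner estimate. At the initial scale the $t$-contribution of the single site $0$ to the correlation sum $\mcN(j,j',k')$ is bounded deterministically by $\lambda(2L'+1)^{-d}\sum_m A_m$ and absorbed, leaving an event $\Omega'_{N_0}$ independent of both $E$ and $\varepsilon_0$ and valid uniformly in $t$. At scale $N_1$ one covers $\Lambda_{N_1}$ by a central $L_1$-block containing the origin, controlled uniformly in $t$ by the induction hypothesis, and by good $N$-blocks away from the origin whose (trimmed) bad events do not involve $\varepsilon_0$; the coupling Lemma \ref{MSA} then delivers both \eqref{Green L2 norm} and \eqref{Green off-diagonal decay} at scale $N_1$, the resolvent bound being \emph{derived} from \eqref{center block L2 estimate} rather than from an eigenvalue-variation argument. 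The probability $1/100$ is simply the additive accumulation $\P(\Omega'_{L_1})+CN^d\P(\Omega_N)\lesssim 1/N_0$ along the induction, not a budget for extra Wegner-type exceptional events.
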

\begin{rmk}
This theorem is designed to  perform the {\it free sites argument}, and also will be proven  via the induction on scales $N$. Note that in  the estimate of  \eqref{N scale bad event prob, continue version}, there is no ``propagation of smallness'' phenomenon. 
\end{rmk}

\subsection{New ingredients  of the proof}

The main scheme of our proof  is  taken  from Bourgain's work \cite{Bou04}, which developed a novel MSA scheme to establish Green's function estimates and  thus localization.  Compared with \cite{Bou04}, our main new ingredients can be summarized as follows. 

\begin{itemize}
\item To control  the probability  that the Green's function has ``good''  estimates at the initial scales, Bourgain \cite{Bou04}  employed  the fact that the free Laplacian can be decomposed into a combination of shifts in each direction, namely, 
\[\Delta =\sum_{1\leq i\leq d} (S_{\delta_i}+S_{-\delta_i}),\ S_n f(m)=f(m-n) \ {\rm for}\  \forall f\in \ell^2(\Z^d).\]
 The key point  is that in the regime of ``bad''  Green's function estimates,  one can find a vector  (nearly) belonging  to the eigenspace of the edge  spectrum of  both $\Delta$ and  $D(\varepsilon)$. The  character of the  eigenspace of $\Delta$ ensures that 
\begin{equation}\label{shift invariant}
  \| \xi -S_{\delta_i}\xi\|\ll 1,
\end{equation}
namely, $\xi$ is nearly  shift-invariant. This  forces  the Bernoulli potential to have some correlation  property, and thus leads to  the desired probability estimate. 

However, in the long-range hopping model, \eqref{shift invariant} has to  be replaced by a more essential fact:  the support of Fourier transform of $\xi$ nearly concentrates on  the maxima of $\whT(x)$. Unfortunately, the {\it non-uniqueness}  of the maxima of $\whT(x)$  poses the key challenge: it is hard to find such a shift-invariant property  of the form 
\[\| \xi-S_n \xi  \|\ll 1.\]
To resolve this difficulty, inspired by the idea of  Klopp   \cite{Klopp98, Klopp02}, we adapt a quantitative version of  the uncertainty principle as a replacement of \eqref{shift invariant}, together with  the Floquet-Bloch theory on some periodic approximation of  the restricted  random  operator (cf. Section \ref{section 2}).  In the continuum model, since  one can obtain the approximation of the identity via the dilation of a bump function, the concentration of $\whx(x)$ can be directly transferred  to the potential  correction via  some geometric projections (cf. Appendix \ref{conmod}).  We mention that in the final probability estimate, we  introduce the Dudley's estimate on sub-Gaussian random variables. 

\item Even for the large scales, we provide more detailed analysis of Bourgain's approach, such as the trim arguments on $\Omega_N(E)$ (cf. Subection \ref{trim}), a key coupling lemma (cf. Lemma \ref{MSA}) involving  the iteration of resolvent identities, and the energy-free Green's function estimates (cf. Section \ref{elimation}). 

\end{itemize}

\subsection{Sructure of the paper and the notation}
The paper is  organized as follows. In \S \ref{section 2}, we prove the Green's function estimates at  initial scales. In \S \ref{LGFS}, we use multi-scale analysis to establish the Green's function estimates for all scales, thus complete the proof of Theorems \ref{Green function estimates}, \ref{Green function estimates, continuous version}. In \S \ref{elimation}, we prove our main theorem (i.e., Theorem \ref{Main}) on the localization. In Appendix \ref{conmod},  we discuss an analogues model on $\R^d$ following  \cite{Bou04}. 
\begin{itemize}
\item In this paper, the notation $B=\mcO(A)$ means that $c_1 A\leq B\leq c_2 A$ ($A>0,B>0$) for some absolute  positive  constants  $c_1,c_2$. 
\item We denote $A\lesssim B$ if there is a constant $c>0$ independent of $A, B$ so that $A\leq cB$. By $A\sim B$, we mean $A\lesssim B$ and $B\lesssim A.$  We denote $A\ll B$   if $A$ is much smaller than $B$.
\item We denote by  $\lfloor \cdot\rfloor$   the integer part of a real number. 
\item We let $(\cdot)^c$ be the complement of a set.
\item The notation $\langle\cdot\rangle$ represents the standard inner product on $\Z^d, \R^d, \ell^2, L^2$ etc.  In general,  $\|\cdot\|$ denotes the norm induced by this inner product, or the operator norm (for an operator). 
\end{itemize}

\section{Green's function estimates at the initial scales}\label{section 2}
In this section, we will establish large deviation estimates for Green's function estimates at the initial scales. Our approach 
uses  the Floquet-Bloch theory and quantitative uncertainty principle in the spirit of Lifschitz tails estimates of Klopp \cite{Klopp02},  which  allow us to transfer estimates of Green's functions to those of random variables.  As a result, we can obtain  the probabilistic  estimates.

First, we consider the more general potential 
\[D(\varepsilon)_n=\lambda \sum_{m\in\Z^d}A_{n-m}\varepsilon_m,\ A_m\geq 0,\ A_0 >0 , \ E^*=M+\lambda\sum_{m\in\Z^d} A_m <\infty \]
(This class  definitely contains  the standard Bernoulli potential with  $A_m=\delta_{m,0}$,  and  our  model with   $A_m=2^{-|m|}$).
 
Next,  for fixed energy $E\in [E^*-\delta,E^*]$ ($\delta>0$),  we decompose  
\[H(\varepsilon)-E=(T+1)-(E+1-D(\varepsilon)).\]
Denote  by  $H_{N_0},T_{N_0},$ and $D_{N_0}$ the corresponding restrictions  on $\Lambda_{N_0}, N_0>0$. Recalling \eqref{Mdefn} and denoting
\[m=\min_{x\in \T^d}\whT(x).\]
Without loss of generality, we can assume $M+1>|m+1|$, otherwise we take a dilation of $T$ by $\kappa T$ with $0<\kappa \ll 1$. Observe that  
\begin{align*}
\| T+1\| &=\| \whT +1 \|_{L^{\infty}(\T^d)}=M+1,\\
E^*+1+\lambda\sum_{m\in \Z^d}A_m&\geq E+1-D(\varepsilon)_n\geq E^*+1-D(\varepsilon)_n-\delta\\
&>M+1-\delta \ {\rm for}\ \forall n\in \Z^d.
\end{align*}
So 
\begin{equation}\label{potential norm}
  \| (E+1-D(\varepsilon))^{-1}\| <\frac{1}{M+1-\delta}. 
\end{equation}
By a Neumann expansion argument,   we get 
\begin{align}\label{initial Neumann}
  G_{N_0}(E)=(H_{N_0}-E)^{-1}=(D_{N_0}(\varepsilon)-E-1)^{-1}\sum_{s\geq 0}(-1)^s ((T_{N_0}+1)(D_{N_0}(\varepsilon)-E-1)^{-1})^s.
\end{align}

We have the following main theorem of this section. 

\begin{thm}\label{initial scale}
For $N_0\geq C(d,\lambda,M,J,\Theta,|T(0)|)\gg1$ and $\delta=(\log N_0)^{-10^3}$, there is  some  $\Omega_{N_0}\subset \{\pm 1\}^{\Z^d}$ \text{independent of $E$} such that
\begin{equation*}
  \mathbb{P}(\Omega_{N_0})\leq e^{-(\log N_0)^3 }, 
\end{equation*}
and  for all $\varepsilon\notin \Omega_{N_0}$,  $E\in [E^*-\delta,E^*]$,  both \eqref{Green L2 norm} and \eqref{Green off-diagonal decay}  hold  true with  $N=N_0$ and  $\gamma=\gamma_0= \frac{1}{(\log N_0)^{2\times 10^3}}$. 
\end{thm}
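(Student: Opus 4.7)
The plan is to follow the Lifshitz-tails strategy of Klopp \cite{Klopp02}: I will show that with probability at least $1 - e^{-(\log N_0)^3}$ the operator $H_{N_0}(\varepsilon)$ has no spectrum in the enlarged interval $[E^*-2\delta, E^*]$. Since $[E^*-\delta, E^*] \subset [E^*-2\delta, E^*]$, the resulting bad set $\Omega_{N_0}$ is automatically $E$-independent, and the spectral gap of size $\geq \delta$ gives $\|G_{N_0}(E)\| \leq \delta^{-1} \ll e^{N_0^{9/10}}$, while a standard Combes-Thomas resolvent estimate with weight of order $\delta/\|T\|$ delivers the exponential decay \eqref{Green off-diagonal decay} at rate $\gamma_0 \sim \delta^2$. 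To exploit assumption \textbf{(A3)}, I first replace $H_{N_0}$ by its periodic extension $H^{per}(\varepsilon)$ on a torus of period $N_0' \sim N_0$; by \textbf{(A2)} the Toeplitz error from wrapping around has operator norm $\mcO(e^{-cN_0})$, negligible compared to $\delta$. Floquet-Bloch then diagonalizes $T^{per}$ with symbol $\widehat{T}$, and \textbf{(A3)} yields the quadratic lower bound $M - \widehat{T}(x) \gtrsim \min_j \|x-\theta_j\|_{\T^d}^2$.

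Fix a candidate unit vector $\xi$ with $\langle H^{per}(\varepsilon)\xi,\xi\rangle \geq E^* - 2\delta$. Splitting into kinetic and potential parts,
\[\langle (M-T^{per})\xi,\xi\rangle \leq 2\delta, \qquad \sum_n (E_D^* - D_n(\varepsilon))|\xi_n|^2 \leq 2\delta,\]
where $E_D^* = \lambda\sum_m A_m$. The first inequality combined with \textbf{(A3)} localizes $\mcF\xi$ in $\bigcup_j B(\theta_j, r)$ with $r \sim \sqrt{\delta/\Theta}$ up to an $L^2$-tail of size $\mcO(\sqrt{\delta})$. Here the essential deterministic input is Klopp's \emph{quantitative uncertainty principle}: a unit vector whose Fourier transform is essentially supported on a union of $J$ balls of radius $r$ in $\T^d$ cannot concentrate more than, say, $3/4$ of its $\ell^2$ mass on any subset of $\Z^d$ of cardinality $\lesssim r^{-d}$. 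With $r\sim\sqrt{\delta}$ this forces $|\xi|^2$ to spread over at least $\sim \delta^{-d/2}$ lattice sites with comparable density.

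The second inequality, after rewriting $E_D^* - D_n(\varepsilon) = 2\lambda \sum_{m:\varepsilon_m=-1} A_{n-m}$, becomes the linear constraint $\sum_m \omega_m(\xi)(1-\varepsilon_m) \leq \delta/\lambda$ with positive weights $\omega_m(\xi) = \sum_n A_{n-m}|\xi_n|^2$. The spreading from the uncertainty principle, together with $A_0 > 0$, ensures that $\sim \delta^{-d/2}$ of the $\omega_m$ have comparable size of order $\delta^{d/2}$. For a single such $\xi$ a Hoeffding-type bound gives probability $\leq e^{-c\delta^{-d/2}}$ for this event. Uniformity over the continuum of candidate $\xi$ is then achieved by Dudley's chaining estimate applied to the sub-Gaussian process $\xi \mapsto \sum_m \omega_m(\xi)\varepsilon_m$ restricted to the finite-dimensional near-maximal spectral subspace of $T^{per}$ cut out by the Floquet fibers over $\bigcup_j B(\theta_j, r)$; the metric entropy of this subspace is polynomial in $N_0$ and is absorbed by the exponential gain $e^{-c\delta^{-d/2}}$. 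Choosing $\delta = (\log N_0)^{-10^3}$ makes $e^{-c\delta^{-d/2}}$ dominate $e^{-(\log N_0)^3}$ comfortably.

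The main obstacle I expect is the transition from Fourier concentration at $J$ \emph{distinct} maxima to sharp physical-space spreading: in Bourgain's single-maximum case \cite{Bou04} one obtains a soft ``shift-invariance'' property $\|\xi - S_{\delta_i}\xi\| \ll 1$, but when $J \geq 2$ the Fourier mass splits across several points and no such translation symmetry is available. The quantitative uncertainty principle, extracted from the Floquet fiber structure of the periodic approximation, is precisely the tool that replaces shift invariance; its explicit dependence on $J$, $\Theta$, and the separation of the $\theta_j$ must be tracked to verify that $\gamma_0 \sim \delta^2 = (\log N_0)^{-2\cdot 10^3}$ is indeed the correct rate.
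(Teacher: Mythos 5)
Your overall architecture---a probabilistic spectral gap near $E^*$, periodic approximation, Floquet--Bloch decomposition, splitting the near-maximizer condition into a kinetic part handled by \textbf{(A3)} and a nonnegative potential part, then an uncertainty principle plus Dudley-type estimates---matches the paper's, and your Combes--Thomas route to the off-diagonal decay \eqref{Green off-diagonal decay} is a legitimate alternative to the paper's Neumann-series splitting. However, there is a genuine gap in the probabilistic step. You propose to get uniformity over the candidate vectors $\xi$ by Dudley chaining over ``the near-maximal spectral subspace of $T$ cut out by the Floquet fibers over $\bigcup_j B(\theta_j,r)$,'' claiming its metric entropy is ``polynomial in $N_0$'' and is ``absorbed by the exponential gain $e^{-c\delta^{-d/2}}$.'' This cannot work quantitatively: that subspace has dimension of order $J(rN_0)^d\sim(\sqrt{\delta}\,N_0)^d$, so the log-covering number of its unit sphere is of order $N_0^d$ (up to logarithms), whereas the per-vector gain $e^{-c\delta^{-d/2}}=e^{-c(\log N_0)^{500d}}$ is only quasi-polynomially small in $N_0$. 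The entropy overwhelms the gain by an enormous margin; equivalently, the expected supremum of the process $\xi\mapsto\sum_m\omega_m(\xi)\varepsilon_m$ over that sphere is comparable to its deterministic maximum $\lambda\sum_m A_m$, so chaining over $\xi$ yields no useful deviation bound. Relatedly, your version of the uncertainty principle (``cannot put $3/4$ of the mass on a set of $\lesssim r^{-d}$ sites'') does not by itself give ``$\sim\delta^{-d/2}$ weights $\omega_m$ of comparable size $\delta^{d/2}$'': Fourier localization only lower-bounds the effective support, it does not force equidistribution.

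The paper's proof avoids any union bound over $\xi$. The quantitative uncertainty principle is used in a sharper, structural form (Lemma \ref{discrete UP}): after an $\mcO(K/K')$ perturbation, the Fourier coefficients of the shifted pieces $\wta^j$ of the near-maximizer become \emph{exactly constant} on translates of $\Lambda_{L'}$. This local constancy collapses the bilinear form $\langle a,D_{N_0}(\varepsilon)a\rangle$ into a combination of the finitely many explicit block averages $\mcN(j,j',k')=\frac{1}{(2L'+1)^d}\sum_{l'\in\Lambda_{L'}}e^{-2\pi i(\theta_j-\theta_{j'})\cdot l'}D(\varepsilon)_{l'+k'(2L'+1)}$ with total weight at most $J^2$, so that the existence of \emph{any} bad $\xi$ forces $\sup_{j,j',k'}|\mcN(j,j',k')|\geq\frac{1}{2J^2}\lambda\sum_m A_m$; note it is here, and not in your positive-weight constraint, that the cross-maxima phases $e^{-2\pi i(\theta_j-\theta_{j'})\cdot l'}$ enter. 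Dudley's $\psi_2$-maximal inequality is then applied to this family of only $J^2\,\#\Lambda_{K'}$ variables, each an average of $(2L'+1)^d$ independent signs, giving $\P(\Omega_{N_0})\leq e^{-c\delta^{-d/48}/\log N_0}\leq e^{-(\log N_0)^3}$. To repair your argument you would need this (or an equivalent) reduction from the continuum of candidate $\xi$ to a small explicit family of linear statistics of $\varepsilon$; the abstract entropy bound does not suffice.
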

\begin{rmk}
As we will see below, $N_0$ is not well defined in the whole interval  of length  $\sim e^{{\delta^{-10^{-3}}}}$,  but  only in  a ``dense''  subset (cf. \eqref{parameter condition 1}), which suffices for the MSA induction. 
\end{rmk}

Recalling \eqref{initial Neumann},  we assume
\begin{align}\label{convass}
  \| (T_{N_0}+1)(E+1-D_{N_0})^{-1}  (T_{N_0}+1)(E+1-D_{N_0})^{-1} \| >1-\delta. 
  \end{align}
  By \eqref{potential norm}, we have
  \begin{equation}\label{eq1}
      \| (T_{N_0}+1)(E+1-D_{N_0})^{-1}  (T_{N_0}+1)\| > (1-\delta)(M+1-\delta). 
  \end{equation} 
  Hence by the self-adjointness, there exists  $\xi \in \ell^2(\Z^d)$ such that $\supp (\xi) \subset \Lambda_{N_0},\|\xi \|_2=1$ (here ${\rm supp}(\cdot)$ denotes the support) and 
  \begin{align}\label{eq2}
    \frac{1}{M+1-\delta} \| (T_{N_0}+1)\xi\|^2 &\geq \langle \xi,(T_{N_0}+1)(E+1-D_{N_0}(\varepsilon))^{-1} (T_{N_0}+1)\xi\rangle \\
   \notag        &>(1-\delta)(M+1-\delta),
  \end{align}
  which shows 
  \begin{align*}
    \|(T+1)\xi\|^2 &\geq  \| (T_{N_0}+1)\xi \|^2 >(1-\delta)(M+1-\delta)^2 \\
                   &\geq (M+1)^2 -\mcO( \delta). 
  \end{align*}
  Now let $\delta\ll \eta\ll 1$ ($\eta$ will  be determined below). Then 
  \begin{align*}
    \|(T+1)\xi \|^2 &= \| (\whT +1)\cdot \widehat{\xi} \|_{L^2(\T^d)} \\
      &=\left(\int_{\{\whT > M-\eta\}}+\int_{\{\whT \leq  M-\eta\}} \right) (|\whT(x)+1|^2\cdot |\whx(x)|^2) dx \\
      &\leq (M+1)^2 \int_{\{\whT > M-\eta\}} |\whx(x)|^2 dx + (M+1-\eta)^2 \int_{\{\whT \leq  M-\eta\}} |\whx(x)|^2 dx \\
      &=(M+1)^2 -[(M+1)^2-(M+1-\eta)^2]\int_{\{\whT \leq  M-\eta\}} |\whx(x)|^2 dx. 
  \end{align*}
  This gives  
  \begin{align}\label{concentration of whx}
    \int_{\{\whT \leq  M-\eta\}} |\whx(x)|^2 dx & \leq \frac{\mcO(\delta)}{(M+1)^2-(M+1-\eta)^2} =\mcO(\frac{\delta}{\eta}). 
  \end{align}
  We emphasis that \eqref{concentration of whx} reveals the fact  that the Fourier transformation of $\xi$  concentrates  near the maxima of the symbol $\whT(x)$.  Applying   \eqref{concentration of whx}  implies 
  \begin{align*}
    \| (M+1)\xi -(T_{N_0}+1)\xi \|^2 &\leq \| (M+1)\xi -(T+1)\xi \|^2 \\
         &=\int_{\T^d}|\whT(x)-M|^2\cdot |\whx(x)|^2 dx\\
         &=\left(\int_{\{\whT > M-\eta\}}+\int_{\{\whT \leq  M-\eta\}} \right) (|\whT(x)-M|^2\cdot |\whx(x)|^2) dx\\
         &\leq \eta^2 \int_{\{\whT > M-\eta\}} |\whx(x)|^2 dx +(m+M)^2 \int_{\{\whT \leq M-\eta\}} |\whx(x)|^2 dx \\
         & \leq \eta^2 +\mcO(\frac{\delta}{\eta}). 
  \end{align*}
  Concerning the optimality,   we can choose $\eta\sim \delta^{\frac{1}{3}}$ and hence 
  \begin{equation}\label{eq3}
    \| (M+1)\xi -(T_{N_0}+1)\xi \|\leq \| (M+1)\xi -(T+1)\xi \| =\mcO(\delta^{\frac{1}{3}}). 
  \end{equation}
  Combining \eqref{eq3} and  \eqref{eq2} yields  
  \begin{align}\label{eq4}
    (M+1)^2 \langle (E+1-D(\varepsilon))\xi,\xi\rangle &\geq  \langle \xi,(T_{N_0}+1)(E+1-D_{N_0}(\varepsilon))^{-1} (T_{N_0}+1)\xi\rangle  \\
   \notag            &\ \ \  -\mcO(\| (M+1)\xi -(T_{N_0}+1)\xi \|) \\
         \notag   & >(1-\delta)(M+1-\delta) -\mcO(\delta^{\frac{1}{3}})\\
       \notag     &\geq M+1-\mcO(\delta^{\frac{1}{3}}). 
  \end{align}
  Then estimate \eqref{eq4} gives  
  \begin{align}\label{eq5}
    \| (E+1-D(\varepsilon))^{-1} (M+1)\xi -\xi\|^2 & =\| (E+1-D(\varepsilon))^{-1}(M+1)\xi  \|^2+\|\xi\|^2 \\
         \notag & \ \ \ -2\langle (E+1-D(\varepsilon))^{-1} (M+1)\xi,\xi\rangle  \\
            \notag & \leq \frac{M+1}{M+1-\delta}+1-2 \left( 1 -\frac{\mcO(\delta^\frac{1}{3})}{M+1} \right) \\
            \notag & =  \mcO(\delta^{\frac{1}{3}}), 
  \end{align}
  and 
  \begin{align}\label{eq6}
    \|(M+1)\xi-(E+1-D(\varepsilon))\xi\| & \leq \| (E+1-D(\varepsilon)) \| \cdot \| (E+1-D(\varepsilon))^{-1} (M+1)\xi -\xi\| \\
                     \notag       & \leq C  (\delta^{\frac{1}{3}} )^{\frac{1}{2}} =\mcO(\delta^{\frac{1}{6}}). 
  \end{align}
  Combining  \eqref{eq3} and \eqref{eq6} implies  
  \begin{equation}\label{nearly eigenfunction}
    \| (H(\varepsilon)-E)\xi \| =\| (T+1)\xi -(E+1-D(\varepsilon))\xi \| = \mcO(\delta^{\frac{1}{6}}). 
  \end{equation}

At this stage, we have shown that \eqref{convass} leads to the existence of approximate eigenvalue  $E$ of $H(\varepsilon).$ In the following, we will use   Floquet-Bloch theory to extract further information from $E, \xi.$
 
  \subsection{The periodic approximation and Floquet-Bloch decomposition}
  Recall that  $\xi$ (resp. $E$) is the approximate eigenfunction (resp. eigenvalue) and  $\supp(\xi)\subset \Lambda_{N_0}$.  
  This  indicates that  the periodic approximation of  $H(\varepsilon)$   (with potential restricted to $\Lambda_{N_0}$) has an eigenvalue close to $E$. From this perspective, the  periodic approximation technique and the Floquet-Bloch theory  (as in \cite{Klopp98,Klopp02}) can be applied here.   

  Some basic facts about Floquet-Bloch theory can be found  in  the Appendix \ref{Floquet-B}. To avoid  confusion, it is important to note that for an operator,   we use   $(\cdot)_{N}$ to represent  its  finite volume restriction and  $(\cdot)^N$ to represent its periodic extension. 

  Now  we define the  periodic extension of $D_{N_0}(\varepsilon)$ on $\Lambda_{N_0}$  to be  $\widetilde{D}^{N_0}(\varepsilon)$  with 
  \[\widetilde{D}^{N_0}(\varepsilon)_n=D(\varepsilon)_{n'}\]
for some  $n'\in \Lambda_{N_0}$ and  $n-n'\in [(2N_0+1)\Z]^d$. 
We denote further  
  \[H^{N_0}(\varepsilon)=T+\widetilde{D}^{N_0}(\varepsilon),\]
 and making $H^{N_0}(\varepsilon)$ a periodic Schr\"odinger operator.  From  
  \[\supp(\xi)\subset \Lambda_{N_0},\ \supp(D(\varepsilon)-\widetilde{D}^{N_0}(\varepsilon))\subset \Z^d\setminus \Lambda_{N_0},\]
it follows that  $H(\varepsilon)\xi=H^{N_0}(\varepsilon) \xi $. Thus, applying \eqref{nearly eigenfunction} yields 
  \[\|(H^{N_0}(\varepsilon)-E)\xi \|=\mcO(\delta^{\frac16}),\]
  and hence  $[E-\mcO(\delta^{\frac16}),E+\mcO(\delta^{\frac16})]$ contains some spectrum of $H^{N_0}(\varepsilon)$.

Since $\| \widetilde{D}^{N_0}(\varepsilon)\| \leq \|D(\varepsilon)\|\leq \lambda\sum_{m\in\Z^d}A_m$, we get  
  \[\|H^{N_0}(\varepsilon)\|\leq E^* .\]
  This together with $E\in [E^*-\delta,E^*]$ shows   
  \begin{equation}\label{periodic spectrum}
    \sigma(H^{N_0}(\varepsilon))\cap [E^*-\delta-\mcO(\delta^{\frac{1}{6}}),E^*]\neq \emptyset,
  \end{equation}
  where $\sigma(\cdot)$ denotes the spectrum.  So,
   \begin{equation}\label{positive periodic spectrum}
    \sigma(E^*-H^{N_0}(\varepsilon))\cap [0,\mcO(\delta^{\frac{1}{6}})]\neq \emptyset. 
  \end{equation}

  In view of \eqref{positive periodic spectrum}, we have  
  \[\widetilde{H}^{N_0}(\varepsilon)=E^*-H^{N_0}(\varepsilon) =(M-T)+(\lambda \sum_{m\in \Z^d} A_m -\widetilde{D}^{N_0} (\varepsilon)).\]
  Denote $h(x)=M-\whT(x)$ and $h_k(x)=(U\mathcal F_{N_0}^{-1}h)_k(x)$ (cf. \eqref{Floquet transform}). By \eqref{Floquet matrix}, the fiber matrix  at the Floquet quasi-momentum $x\in (\frac{\T}{2N_0+1})^d$ is given by 
  \begin{equation}
    M^{N_0}_{\varepsilon}(x)=(h_{k-j}(x))_{\Lambda_{N_0}\times\Lambda_{N_0}}+(\lambda\sum_{m\in \Z^d}A_m-D_{N_0}(\varepsilon)) :=P+ (\lambda\sum_{m\in \Z^d}A_m-D_{N_0}(\varepsilon)).
  \end{equation}
  Combining  \eqref{positive periodic spectrum} and \eqref{Floquet spectrum} implies  immediately  that there is some  $x$ such that 
  \begin{equation}\label{degenerate small spectrum}
    \sigma( M^{N_0}_{\varepsilon}(x))\cap [0,\mcO(\delta^{\frac{1}{6}})]\neq \emptyset. 
  \end{equation}
  $\qquad$\\
  For this  $x$ satisfying \eqref{degenerate small spectrum}, there exists some vector $a=a(x)\in \ell^2(\Lambda_{N_0}),\|a\|_{\ell^2(\Lambda_{N_0})}=1$ such that 
  \begin{align*}
    0 & \leq \langle a, M^{N_0}_{\varepsilon}(x) a\rangle_{\ell^2(\Lambda_{N_0})}= \langle a, P a\rangle_{\ell^2(\Lambda_{N_0})} +  \langle a, (\lambda\sum_{m\in \Z^d}A_m-D_{N_0}(\varepsilon)) a\rangle_{\ell^2(\Lambda_{N_0})}= \mcO(\delta^{\frac{1}{6}}). 
  \end{align*}
  
 Since  $h(x)\geq 0$ and  \eqref{Floquet basis}, we know that $(h_{k-j}(x))_{\Lambda_{N_0}\times \Lambda_{N_0}}$ is a  positive operator.   Obviously, $(\lambda\sum_{m\in \Z^d}A_m-D_{N_0}(\varepsilon))$ is also positive. Hence,
    \begin{align}
       \label{positive 1}   & 0\leq \langle a, P a\rangle_{\ell^2(\Lambda_{N_0})}  = \mcO(\delta^{\frac16}), \\
        \label{positive 2}  & 0\leq \langle a, (\lambda\sum_{m\in \Z^d}A_m-D_{N_0}(\varepsilon)) a\rangle_{\ell^2(\Lambda_{N_0})}  = \mcO(\delta^{\frac16}).
    \end{align}
  Form now on, we always use  the orthonormal Floquet basis (cf. \eqref{Floquet basis}) $\{\beta_s(x)\}_{s\in \Lambda_{N_0}}$ to represent vectors in $\ell^2(\Lambda_{N_0})$. For example, the coordinate representation of $a$ is
  \[a=\sum_{k\in \Lambda_{N_0}}a_k \beta_k(x) = (a_k)_{k\in \Lambda_{N_0}},\]
  and for the modified canonical basis $\{v_l\}_{l\in \Lambda_{N_0}}$ in \eqref{modified canonical basis}, 
  \[v_l=\sum_{k\in \Lambda_{N_0}} \frac{1}{(2N_0+1)^{\frac{d}{2}}}e^{2\pi i \frac{k}{2N_0+1}\cdot l}\cdot \beta_k(x) =\left(\frac{1}{(2N_0+1)^{\frac{d}{2}}}e^{2\pi i \frac{k}{2N_0+1}\cdot l} \right)_{k\in\Lambda_{N_0}}.\]
  \\
  $\qquad$\\

 Now,  take $L'<L,K<K'$ to be some positive integers  much smaller than  $N_0$ (will be determined below)  such that 
  \begin{equation}\label{parameter condition 1}
    (2N_0+1)=(2L+1)(2K+1)=(2L'+1)(2K'+1)
  \end{equation}
 (This leads to  that $N_0$ cannot be arbitrarily chosen in an interval as remarked previously). 
  Recall that the assumption \textbf{(A3)} ensures that 
  \[h^{-1}(\{0\})=\{\theta_1,\cdots,\theta_J\}\subset \T^d ,\ h(x)\geq \Theta \min_{1\leq j\leq J}|x-\theta_j|^2.\]
  For $1\leq j\leq J$, let $k_j\in \Z^d$ be  the integer part of $(2N_0+1)\theta_j$, namely, 
  \[k_j-(2N_0+1)\theta_j\in [-\frac{1}{2},\frac{1}{2}). \]
  Under the basis $\{\beta_k\}_{k\in \Lambda_{N_0}}$, define 
  \begin{equation}
    (a^j)_k=\left\{ 
      \begin{aligned}
        &a_k,\ {\rm if} \ |k-k_j|\leq K,  \\
        &0,\ {\rm else}. 
      \end{aligned}\right.
  \end{equation} 
  Then $\supp(a^j)$ are disjoint. In fact,  for  $j\neq j'$, we obtain 
  \begin{align*}
      |k_k-k_{j'}|& \geq |(2N_0+1)(\theta_j-\theta_{j'})|-|k_j-(2N_0+1)\theta_j|-|k_{j'}-(2N_0+1)\theta_{j'}|\\ 
          &\geq \min_{1\leq j\neq j'\leq J}|\theta_j-\theta_{j'}|\cdot (2N_0+1)-1 \gtrsim N_0\gg K. 
  \end{align*}
  Now consider the vector 
  \begin{equation}
    a-\sum_{1\leq j\leq J}a^j,
  \end{equation}
  which is supported on 
  \begin{align*}
      \{k:\ |k-k_j|>K,\forall 1\leq j\leq J \} &\subset  \{k:\ |k-(2N_0+1)\theta_j|>K-\frac{1}{2},\forall 1\leq j\leq J \}\\  &\subset \{k:\ |\frac{k}{2N_0+1}-\theta_j|>\frac{K-\frac{1}{2}}{2N_0+1},\forall 1\leq j\leq J\} \\
         &\subset \{k:\ |\frac{k}{2N_0+1}-\theta_j|>\frac{1}{3(2L+1)},\forall 1\leq j\leq J\}\\
         &:=\mathcal K.
  \end{align*}
  As  $x\in [-\frac{1}{2(2N_0+1)},\frac{1}{2(2N_0+1)}]^d$ has been   fixed  to satisfying  \eqref{degenerate small spectrum},  we have  for any $k\in \mathcal K,$
  \[\min_{1\leq j\leq J }|x+\frac{k}{2N_0+1}-\theta_j|>\frac{1}{3(2L+1)}-\frac{1}{2(2N_0+1)} \geq \frac{1}{6(2L+1)},\]
  and by the assumption \textbf{(A3)}, 
  \begin{equation}\label{2.21}
    E_k(x)=h(x+\frac{k}{2N_0+1}) \geq \Theta\cdot \left(\frac{1}{6(L+1)}\right)^2 \ {\rm for}\  \forall k\in \mathcal K. 
  \end{equation}
  Since  the Floquet basis $\{\beta_k\}_{k\in \Lambda_{N_0}}$ diagonalizes $P$,  we have by combining  \eqref{positive 1} and \eqref{2.21}   that 
  \begin{align*}
      \mcO(\delta^{\frac16})&= \langle a, P a\rangle_{\ell^2(\Lambda_{N_0})}  \\
           &\geq \left\langle (a-\sum_{1\leq j\leq J}a^j) ,P  (a-\sum_{1\leq j\leq J}a^j)\right\rangle_{\ell^2(\Lambda_{N_0})} \\
           &\geq \min_{k\in \mathcal K} E_k(x)\cdot  \left\|  a-\sum_{1\leq j\leq J}a^j \right\|^2 \\
           &\geq \Theta \cdot\left(\frac{1}{6(2L+1)}\right)^2 \cdot  \left\|  a-\sum_{1\leq j\leq J}a^j \right\|^2,
  \end{align*} 
which implies 
  \begin{equation}\label{concentrate of a}
      \left\|  a-\sum_{1\leq j\leq J}a^j \right\| \leq \mcO(\delta^{\frac{1}{12}} L ). 
  \end{equation}
  The above argument   just reveals that $a$ concentrates near $\{k_j\}_{1\leq j\leq J}$. 
  
  Next, rewrite \eqref{positive 2} as 
  \begin{equation}\label{positive 2 re}
    \langle a,D_{N_0}(\varepsilon)a\rangle_{\ell^2(\Lambda_{N_0})} \geq \lambda\sum_{m\in \Z^d}A_m-\mcO(\delta^{\frac16}).
  \end{equation}
  Plugging \eqref{concentrate of a} into \eqref{positive 2 re} yields  
  \begin{align}
  \notag  \langle \sum_{1\leq j\leq J}a^j ,D_{N_0}(\varepsilon) \sum_{1\leq j\leq J}a^j \rangle_{\ell^2(\Lambda_{N_0})}&\geq \langle a,D_{N_0}(\varepsilon)a\rangle_{\ell^2(\Lambda_{N_0})} -\mcO\left(\left\| a-\sum_{1\leq j\leq J}a^j \right\|\right) \\
  \notag          &\geq \lambda \sum_{m\in \Z^d}A_m - \mcO(\delta^{\frac{1}{12}} L ) \\
   \label{2.24}           &\geq \frac{2}{3} \lambda \sum_{m\in \Z^d}A_m. 
  \end{align}
  The  inequality \eqref{2.24} requires  
  \begin{equation}\label{parameter condition 2}
    \delta^{\frac{1}{12}} L \leq c({\lambda,M,\Theta}) \ll1. 
  \end{equation}
  
  Finally, express  $a^j$  in  the modified canonical basis $\{v_l\}_{l\in \Lambda_{N_0}}$ in \eqref{modified canonical basis}, namely, 
  \[a^j=\sum_{l\in \Lambda_{N_0}} \langle v_l,a^j\rangle v_l.\]
  Then 
  \begin{align}\label{2.26}
   &\ \ \  \langle \sum_{1\leq j\leq J}a^j ,D_{N_0}(\varepsilon) \sum_{1\leq j\leq J}a^j \rangle _{\ell^2(\Lambda_{N_0})} \\
   \notag & =\sum_{j=1}^{J} \sum_{l\in \Lambda_{N_0}} D(\varepsilon)_l\cdot |\langle v_l,a^j\rangle|^2 
                                      +2\Re \left( \sum_{1\leq j< j'\leq J} \sum_{l\in \Lambda_{N_0}}D(\varepsilon)_l \cdot \langle v_l,a^j\rangle \cdot \overline{\langle v_l,a^{j'}\rangle}    \right). 
  \end{align}
  As $\supp(a^j)= \Lambda_K(k_j)$ (under the Floquet basis),  we can construct another  $\widetilde{a}^j$ by changing the center $k_j$ of the support  to   the origin, namely,  
  \begin{equation}\label{tilde a}
    \widetilde{a}^j=\sum_{k\in \Lambda_K}a_{k+k_j} \cdot \beta_k(x). 
  \end{equation}
  Direct computations  show  that 
  \begin{align*}
    \langle v_l,\widetilde{a}^j\rangle & =\sum_{k\in \Lambda_K} a_{k+k_j} \langle v_l,\beta_k\rangle \\
                          & =\sum_{k\in \Lambda_K} a_{k+k_j} \left( \frac{1}{(2N_0+1)^{\frac{d}{2}}} e^{-2\pi i\frac{k}{2N_0+1}\cdot l}\right) \\
                          & = e^{2\pi i \frac{k_j}{2N_0+1}\cdot l} \sum_{k\in \Lambda_K} a_{k+k_j} \left( \frac{1}{(2N_0+1)^{\frac{d}{2}}} e^{-2\pi i\frac{k+k_j}{2N_0+1}\cdot l}\right) \\
                          & = e^{2\pi i \frac{k_j}{2N_0+1} \cdot l} \sum_{k\in \Lambda_K} a_{k+k_j} \langle v_l,\beta_{k+k_j}\rangle  \\
                          & = e^{2\pi i \frac{k_j}{2N_0+1}\cdot l } \langle v_l,a^j\rangle. 
  \end{align*}
  Hence, we get 
  \begin{equation}\label{a and tilde a rotation}
    \langle v_l,a^j\rangle = e^{-2\pi i \frac{k_j}{2N_0+1}\cdot l } \langle v_l,\widetilde{a}^j\rangle. 
  \end{equation}
  Substituting  \eqref{a and tilde a rotation} in \eqref{2.26} and using   \eqref{2.24} imply 
    \begin{align}\label{2.29}
   \sum_{j=1}^{J} \sum_{l\in \Lambda_{N_0}} D(\varepsilon)_l\cdot |\langle v_l,\widetilde{a}^j\rangle|^2  & +2\Re \left( \sum_{1\leq j< j'\leq J} \sum_{l\in \Lambda_{N_0}} e^{-2\pi i \frac{k_j-k_{j'}}{2N_0+1}\cdot l}D(\varepsilon)_l \cdot \langle v_l,\widetilde{a}^j\rangle  \cdot \overline{\langle v_l,\wta^{j'}\rangle}    \right) \\
      \notag     &\geq \frac{2}{3}\lambda \sum_{m\in \Z^d}A_m. 
  \end{align}
  \begin{rmk}
   In \eqref{a and tilde a rotation} and \eqref{2.29}, shifting $k_j$ to the origin already reveals that  the existence of multiple maxima  causes  the  term  $e^{-2\pi i \frac{k_j-k_{j'}}{2N_0+1}\cdot l}$ in the summation. This phenomenon will also be observed in the continuum model.
  \end{rmk}

  \subsection{Application of quantitative uncertainty principle} What we have done  is, as in \eqref{2.29},  expressing  \eqref{degenerate small spectrum} in a different form related to some information about $\ell^2(\Lambda_{N_0})$-vectors supported near the origin. This would allow   us  to apply  the quantitative  uncertainty principle of  \cite{Klopp02} (cf. the Appendix \ref{UP} for details). 
  
  From \eqref{discrete Fourier transform} and identifying $\Lambda_{N_0}$ with  $\Z^d_{2N_0+1}$,  we have
  \[\langle v_l,\wta^j\rangle=(\mcF_{N_0} \wta^j)_l,\]
  where $\mcF_{N_0}$ denotes the discrete Fourier transformation on $\Z^d_{2N_0+1}.$
  Recall that we have  \eqref{parameter condition 1}.  From $\supp(\wta^j)\in \Lambda_K,1\leq j\leq J$ and  applying  Lemma \ref{discrete UP}, we get some  $b^j\in \ell^2(\Lambda_{N_0})$ such that
  \begin{enumerate}
    \item $\|\wta^j - b^j\|_{\ell^2(\Lambda_{N_0})} \leq C \|\wta^j\|_{\ell^2(\Lambda_{N_0})} =\mcO(K/K')$;

    \item For $l' \in \Lambda_{L'}$ and $k' \in \Lambda_{K'}$, we have $\langle v_{l'+k'(2L'+1)},b^j\rangle  = \langle v_{k'(2L'+1)},b^j\rangle$;

    \item $\|\wta^j\|_{\ell^2(\Lambda_{N_0})} = \|b^j \|_{\ell^2(\Lambda_{N_0})}$.
    
\end{enumerate}
Substituting  $\wta^j$ by $b^j$ in \eqref{2.29} implies  
  \begin{align}\label{2.30}
   \sum_{j=1}^{J} \sum_{l\in \Lambda_{N_0}} D(\varepsilon)_l\cdot |\langle v_l,b^j\rangle |^2  & +2\Re \left( \sum_{1\leq j< j'\leq J} \sum_{l\in \Lambda_{N_0}} e^{-2\pi i \frac{k_j-k_{j'}}{2N_0+1}\cdot l}D(\varepsilon)_l \cdot \langle v_l,b^j\rangle  \cdot \overline{\langle v_l,b^{j'}\rangle}    \right) \\
      \notag     &\geq \frac{2}{3}\lambda \sum_{m\in \Z^d}A_m-C(J)\cdot \| D(\varepsilon) \|_{\ell^{\infty}(\Z^d)} \cdot \|\wta^j - b^j\|_{\ell^2(\Lambda_{N_0})} \\
      \notag &\geq \frac{2}{3}\lambda \sum_{m\in \Z^d}A_m- \mcO(K/K')
  \end{align}
Moreover,  on the left hand side of \eqref{2.30}, writing  the summation index  uniquely as 
\[\Lambda_{N_0}\ni l=l'+k'(2L'+1)\in \Lambda_{N_0},l'\in \Lambda_{L'},k'\in \Lambda_{K'}\]
and applying  property (2) of $b^j,1\leq j\leq J$ yield 
\begin{align}\label{2.31}
  & \sum_{j=1}^{J} \sum_{k'\in \Lambda_{K'}}  \mcS(j,j,k') \cdot (2L'+1)^d |\langle v_{k'(2L'+1)},b^j\rangle |^2 \\
   \nonumber +&\sum_{1\leq j< j'\leq J} \sum_{k'\in \Lambda_{K'}} 2\Re \left( \mcS(j,j',k')  \cdot e^{-2\pi i \frac{k_j-k_{j'}}{2K'+1}\cdot k'} \langle v_{k'(2L'+1)},b^j\rangle \cdot \overline{\langle v_{k'(2L'+1)},b^{j'}\rangle } \right)  \\
      \notag &\geq \frac{2}{3}\lambda \sum_{m\in \Z^d}A_m- \mcO(K/K'),
  \end{align}
where 
\begin{equation}\label{mcS summation}
  \mcS(j,j',k')= \frac{1}{(2L'+1)^d}\sum_{l'\in \Lambda_{L'}} e^{-2\pi i \frac{k_j-k_{j'}}{2N_0+1}\cdot l'}D(\varepsilon)_{l'+k'(2L'+1)}. 
\end{equation}
In addition,  the left hand side (LHS) of \eqref{2.31} can be controlled: 
\begin{equation}\label{2.33}
  {\rm LHS \ of \ \eqref{2.31}}\leq \sum_{1\le j,j'\leq J} \sum_{k'\in \Lambda_{K'}} |\mcS(j,j',k')|(2L'+1)^d \langle v_{k'(2L'+1)},b^j\rangle \cdot \overline{\langle v_{k'(2L'+1)},b^{j'}\rangle}|. 
\end{equation}
Now,  recalling $|\frac{k_j}{2N_0+1}-\theta_j|\leq \frac{1}{2(2N_0+1)}$,  we can define
\begin{equation}\label{mcN summation}
  \mcN(j,j',k')=\frac{1}{(2L'+1)^d}\sum_{l'\in \Lambda_{L'}} e^{-2\pi i (\theta_j-\theta_{j'})\cdot l'}D(\varepsilon)_{l'+k'(2L'+1)}
\end{equation}
and thus,
\begin{align}\label{2.35}
  |\mcS(j,j',k')-\mcN(j,j',k')| & \leq \| D(\varepsilon) \|_{\ell^{\infty}(\Z^d)} \frac{1}{(2L'+1)^d} \sum_{l'\in \Lambda_{L'}}|l'|\cdot \left |\frac{k_j-k_{j'}}{2N_0+1}-(\theta_j-\theta_{j'})\right| \\
                  \notag           & = \mcO(\frac{L'}{2N_0+1}) = \mcO(\frac{1}{K'}).
\end{align}
Combining \eqref{2.31}, \eqref{2.33} and \eqref{2.35} shows  
\begin{align*}
  &\sum_{1\le j,j'\le J} \sum_{k'\in \Lambda_{K'}}  |\mcN(j,j',k')|(2L'+1)^d |\langle v_{k'(2L'+1)},b^j\rangle \cdot \overline{\langle v_{k'(2L'+1)},b^{j'}\rangle}| \\
   &\geq \frac{2}{3}\lambda \sum_{m\in \Z^d}A_m- \mcO(K/K') -\mcO(\frac{1}{K'})\sup_{1\leq j,j'\leq J} \sum_{k'\in \Lambda_{K'}}  (2L'+1)^d |\langle v_{k'(2L'+1)},b^j\rangle \cdot \overline{\langle v_{k'(2L'+1)},b^{j'}\rangle}|.   
\end{align*}
Using  properties  $(2)$ and $(3)$ of $b^j$ implies 
\begin{align}\label{weight}
 &\ \ \  \sup_{1\leq j,j'\leq J} \sum_{k'\in \Lambda_{K'}}  (2L'+1)^d\cdot |\langle v_{k'(2L'+1)},b^j\rangle \cdot \overline{\langle v_{k'(2L'+1)},b^{j'}\rangle}|\\
  \notag  & = \sup_{1\leq j,j'\leq J} \sum_{l\in \Lambda_{N_0}}   |\langle v_l,b^j\rangle \cdot \overline{\langle v_l,b^{j'}\rangle}|\\
  \notag  &\leq \sup_{1\leq j,j'\leq J} \|b^j\|_{\ell^2(\Lambda_{N_0})}\cdot \|b^{j'}\|_{\ell^2(\Lambda_{N_0})} \\
   \notag & =\sup_{1\leq j,j'\leq J} \|\wta^j\|_{\ell^2(\Lambda_{N_0})}\cdot \|\wta^{j'}\|_{\ell^2(\Lambda_{N_0})} \\
   \notag &\leq 1.
\end{align}
Thus,
\begin{align}\label{2.37}
    &\ \ \ \sum_{1\le j,j'\le J}  \sum_{k'\in \Lambda_{K'}}  |\mcN(j,j',k')|(2L'+1)^d\cdot  |\langle v_{k'(2L'+1)},b^j\rangle \cdot \overline{\langle v_{k'(2L'+1)},b^{j'}\rangle}| \\
    \notag & \geq \frac{2}{3}\lambda \sum_{m\in \Z^d}A_m- \mcO(K/K') -\mcO(\frac{1}{K'}) \geq \frac{1}{2}\lambda \sum_{m\in \Z^d}A_m,
\end{align}
where for the  last inequality, it requires that 
\begin{equation}\label{parameter condition 3}
  K/K'\le c(\lambda,M,J)\ll 1.
\end{equation}
What's more, from  the proof of \eqref{weight},  it follows that 
\begin{equation}\label{2.39}
      \sum_{1\le j,j'\le J} \sum_{k'\in \Lambda_{K'}}  (2L'+1)^d\cdot |\langle v_{k'(2L'+1)},b^j\rangle  \cdot \overline{\langle v_{k'(2L'+1)},b^{j'}\rangle}| \leq J^2. 
\end{equation}
Then \eqref{2.39} together with \eqref{2.37}  indicates  that, there exists at least one pair $(j,j',k')$ such that 
\begin{equation}\label{2.40}
  |\mcN(j,j',k')|\geq (\frac{1}{2}\lambda \sum_{m\in \Z^d}A_m)/J^2 = \frac{1}{2J^2}\cdot \lambda \sum_{m\in \Z^d}A_m. 
\end{equation}
Hence, we can define 
\begin{equation}\label{event initial scale}
  \Omega_{N_0}=\left\{ \varepsilon:\  \sup_{1\leq j,j'\leq J \atop k'\in \Lambda_{K'}}|\mcN(j,j',k')|\geq \frac{1}{2J^2}\cdot \lambda \sum_{m\in \Z^d}A_m   \right\},
\end{equation}
which is {\it independent of $E$}  for $E\in[E^*-\delta, E^*].$

\subsection{Probabilistic part:  Dudley's $L^{\psi_2}$-estimate}
At this stage, we have shown
\begin{equation*}
  \{ \varepsilon: \ \| (T_{N_0}+1)(E+1-D_{N_0})^{-1}  (T_{N_0}+1)(E+1-D_{N_0})^{-1} \| >1-\delta \} \subset \Omega_{N_0}, 
\end{equation*}
where $\Omega_{N_0}$ is defined by \eqref{event initial scale}. 
We can further obtain 
\begin{align*}
  |\mcN(j,j',k')| &=\left| \frac{1}{(2L'+1)^d}\sum_{l'\in \Lambda_{L'}} e^{-2\pi i (\theta_j-\theta_{j'})\cdot l'}D(\varepsilon)_{l'+k'(2L'+1)} \right| \\
    &=\left| \frac{\lambda}{(2L'+1)^d}\sum_{l'\in \Lambda_{L'}} e^{-2\pi i (\theta_j-\theta_{j'})\cdot l'}\left( \sum_{m\in \Z^d}A_m\cdot \varepsilon_{l'+k'(2L'+1)-m}\right) \right| \\
        &=\left| \frac{\lambda}{(2L'+1)^d} \sum_{m\in \Z^d}A_{k'(2L'+1)-m} \left(\sum_{l'\in \Lambda_{L'}} e^{-2\pi i (\theta_j-\theta_{j'})\cdot l'} \cdot \varepsilon_{l'+m}\right) \right|. \\
\end{align*}
By Dudley's $L^{\psi_2}$-estimate (i.e.,  Theorem \ref{Dudley inequality}), we obtain 
\begin{align}\label{2.42}
  \left\|  \sup_{1\leq j,j'\leq J \atop k'\in \Lambda_K'}|\mcN(j,j',k')|  \right\|_{\psi_2} & \lesssim \sqrt{\log(J^2 \cdot \#\Lambda_{K'})} \cdot \frac{\lambda}{(2L'+1)^d}\\
          \notag &\ \ \  \times \sup_{1\leq j,j'\leq J \atop k'\in \Lambda_K'}  \sum_{m\in \Z^d}A_{k'(2L'+1)-m} \left\| \sum_{l'\in \Lambda_{L'}} e^{-2\pi i (\theta_j-\theta_{j'})\cdot l'} \cdot \varepsilon_{l'+m}\right\|_{\psi_2} \\
           \notag &\le C(J)   \lambda  (\sum_{m\in \Z^d}A_m ) \sqrt{\log K'} (2L'+1)^{-\frac{d}{2}},  
\end{align}
where in the last inequality, we  use  Theorem \ref{sub orthogonal} for  zero-mean  random variables  $\varepsilon_n, n\in \Z^d$.  As a result, applying the Chernoff estimate (cf.  Theorem \ref{Chernoff bound})  implies 
\begin{align}\label{2.43}
  \mathbb{P}(\Omega_{N_0}) 
           & \leq  2e^{-c\frac{(2L'+1)^d}{\log K'}}, \ c=c(J)>0.
\end{align}

  \subsection{Determination of  the parameters}
  Summarizing  all the conditions on the parameters, i.e., \eqref{parameter condition 1}, \eqref{parameter condition 2} and \eqref{parameter condition 3},  shows 
  \begin{enumerate}
    \item $2N_0+1=(2L+1)(2K+1)=(2L'+1)(2k'+1)$;
    \item $K/K'\le c(\lambda,M,J)\ll 1$;
    \item $\delta^{\frac{1}{12}} L\le c(\lambda,M,\Theta)\ll 1$. 
  \end{enumerate} 
  To satisfy those conditions, we can  take 
  \begin{itemize}
    \item $L=\lfloor \delta^{-\frac{1}{24}}\rfloor \ \Rightarrow \delta^{\frac{1}{12}} L \sim  \delta^{\frac{1}{24}} \le c(\lambda,M,\Theta)\ll1$;
    \item $L'=\lfloor\delta^{-\frac{1}{48}} \rfloor \Rightarrow \frac{K}{K'}\sim \frac{L'}{L}\sim \delta^{\frac{1}{48}}\le c({\lambda,M,J})\ll 1$;
    \item $K\sim\frac{N_0}{L}\sim N_0 \delta^{\frac{1}{24}}, \ K'\sim N_0 \delta^{\frac{1}{48}}$;
    \item $\delta=(\log N_0)^{-10^3}$. 
  \end{itemize}
  With the above chosen  parameters, it suffices to ensure  $N_0\ge C(\lambda,M,J,\Theta) \gg1$, and \eqref{2.43} becomes 
  \begin{align}
    \mathbb{P}(\Omega_{N_0}) \leq 2e^{ -c(J)\frac{\delta^{-\frac{1}{48}}}{\log N_0-\frac{1}{48}|\log\delta| }}\leq e ^{ -(\log N_0)^3 }. 
  \end{align}

  \subsection{Proof of Theorem \ref{Green function estimates}:  Initial scales case}
    We have already proven  that for $\varepsilon\notin \Omega_{N_0}$,  
    \[\| (T_{N_0}+1)(E+1-D_{N_0})^{-1}  (T_{N_0}+1)(E+1-D_{N_0})^{-1} \| \leq 1-\delta.\]
    Hence, using the Neumann series  expansion \eqref{initial Neumann} gives  
    \begin{align}\label{L2 norm initial scale}
      \| G_{N_0} (E)\|&\leq \left(\|(E+1-D_{N_0})^{-1}\|+\|(E+1-D_{N_0})^{-1} (T_{N_0}+1)(E+1-D_{N_0})^{-1} \| \right)\cdot \sum_{s\geq 0}(1-\delta)^s \\
              \notag     & \leq \left( \frac{1}{M+1-\delta}+\frac{M+1}{(M+1-\delta)^2}\right)\cdot \delta^{-1} \\
                \notag   &\lesssim  \delta^{-1} =(\log N_0)^{ 10^3}    \ll  e^{N_0^{\frac{9}{10}}}. 
    \end{align}
    This establishes \eqref{Green L2 norm} for $N=N_0$.  

    Next, for $|n-n'|\geq \frac{N_0}{10}$,  we have  
    \begin{align*}
      |G_{N_0}(E;\varepsilon)(n,n')| &\leq \sum_{s\geq 0} \left| (D_{N_0}(\varepsilon)-E-1)^{-1} ((T_{N_0}+1)(D_{N_0}(\varepsilon)-E-1)^{-1})^s (n,n') \right|\\
           &= (\sum_{s< A}+\sum_{s \geq  A})\cdots. 
    \end{align*}
    For the $s\geq A$ part,  we use the  $\ell^2$-operator norm (as in  \eqref{L2 norm initial scale}) to get 
    \begin{align}\label{>A}
      \sum_{s\geq A}\cdots & \leq \left( \frac{1}{M+1-\delta}+\frac{M+1}{(M+1-\delta)^2}\right) \cdot \sum_{s\geq A}(1-\delta)^s \\
     \notag    &\lesssim \frac{1}{\delta} (1-\delta)^A \leq \frac{1}{\delta} e^{-\delta A}.
    \end{align}
    For the $s< A$ part,  we use the decay assumption \textbf{(A2)} of $T$ to get 
    \[|(T_{N_0}+1)(m,m')|\leq (|T(0)|+1)e^{-c|m-m'|}.\] 
    So,
    \begin{align*}
       &\ \ \  \left| (D_{N_0}(\varepsilon)-E-1)^{-1}   ((T_{N_0}+1)(D_{N_0}(\varepsilon)-E-1)^{-1})^s (n,n') \right|  \\
          & \leq (\frac{1}{M+1-\delta})^{s+1}  \sum_{n_1,n_2, \cdots,n_{s-1}\in \Lambda_{N_0}}  |(T_{N_0}+1)(n,n_1)|\cdot |(T_{N_0}+1)(n_1,n_2)|\cdots |(T_{N_0}+1)(n_{s-1},n')| \\
          &\leq  (\frac{1}{M+1-\delta})^{s+1} (|T(0)|+1)^s  \sum_{n_1,n_2 \cdots,n_{s-1}\in \Lambda_{N_0}}  e^{ -c|n-n_1|-c|n_1-n_2| \cdots -c|n_{k-1}-n'|   } \\
          &\leq (\frac{|T(0)|+1}{M+1-\delta})^s  (\# \Lambda_{N_0})^{s-1} e^{-c|n-n'|} \\
          &\leq  \left(C(d,|T(0)|, M)N_0^d \right)^s e^{-c|n-n'|}. 
    \end{align*}
    Hence,
    \begin{align}\label{<A}
      \sum_{s< A} \cdots & \leq \sum_{s < A}  \left( C(d,|T(0)|,M) N_0^d\right)^s e^{-c|n-n'|} \\
      \notag &\leq \left( C(d,|T(0)|, M) N_0^d \right)^A e^{-c|n-n'|}.
    \end{align}
    Combining  \eqref{>A}, \eqref{<A} and setting  $A= \frac{N_0}{(\log N_0)^2}$ yield 
    \begin{align}
      |G_{N_0}(E;\varepsilon)(n,n')| &\lesssim \frac{1}{\delta} e^{-\delta A} + \left( C(d,|T(0)|, M) N_0^d\right)^A e^{-c|n-n'|} \\
       \notag &= e^{  -\delta A + 10^3  (\log\log N_0) } +e^{ -c|n-n'| + C(d,|T(0)|, M)\cdot A \log N_0     } \\
       \notag & \leq e^{  -\frac{2N_0}{(\log N_0)^{2\times 10^3}}   } \leq  e^{  -\frac{|n-n'|}{(\log N_0)^{2\times 10^4}} },
    \end{align}
     where in the last inequality, we use  $N_0\ge C({d,M,\lambda,J,\Theta,|T(0)|})\gg 1$ and  $\frac{N_0}{10}<|n-n'|\leq 2N_0$. We have established  \eqref{Green off-diagonal decay} with   $\gamma_0=\frac{1}{(\log N_0)^{2\times 10^4}}$.

    \subsection{Proof of  Theorem \ref{Green function estimates, continuous version}: Initial scales case}
    Indeed, we can prove a  refined version Theorem \ref{Green function estimates, continuous version} for the initial scales case: 
    \begin{thm}\label{initial scale, continuous}
   For $N_0\ge  C(d,\lambda,M,J,\Theta,|T(0)|)\gg 1$ and $\delta= (\log N_0)^{-10^3}$, there is  some $\Omega'_{N_0}$ \text{independent of $E\in[E^*-\delta, E^*]$} such that
   \begin{equation}
    \mathbb{P}(\Omega'_{N_0})\leq e^{-(\log N_0)^3}, 
   \end{equation}
and  for all $\varepsilon \notin \Omega'_{N_0}$, the conclusions  in Theorem \ref{initial scale} also hold  for $G_{N_0}'(E;t,\varepsilon)$ as in Theorem \ref{Green function estimates, continuous version}. 
   \end{thm}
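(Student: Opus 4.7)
My plan is to run the proof of Theorem \ref{initial scale} essentially verbatim, treating the continuous site-$0$ variable $r_0 = t \in [-1,1]$ as an extra deterministic parameter. The entire deterministic chain leading to \eqref{2.40}---the Neumann expansion \eqref{initial Neumann}, the extraction of an approximate eigenvector $\xi$ concentrated (in Fourier) near the maxima of $\whT$, the periodic approximation $H^{N_0}$, the Floquet-Bloch decomposition into $M^{N_0}_\varepsilon(x)$, the concentration estimate \eqref{concentrate of a}, and the uncertainty-principle replacement of $\wta^j$ by $b^j$---uses only the alloy structure $D_n = \lambda \sum_m A_{n-m} r_m$ with $|r_m|\le 1$, not the discreteness of the $r_m$. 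Consequently, the negation of the hypothesis of \eqref{convass} for $G'_{N_0}(E;t,\varepsilon)$ still forces the existence of some $(j,j',k')$ with $|\mcN(j,j',k')| \geq \frac{1}{2J^2}\lambda \sum_m A_m$, where $\mcN$ is now formed using the extended potential.

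The key new ingredient is the linearity of $D$ in $r$. Splitting the site-$0$ contribution off, I write
\[
\mcN(j,j',k') \;=\; t\cdot X_{j,j',k'} \;+\; Y_{j,j',k'}(\{\varepsilon_m\}_{m\neq 0}),
\]
where $X_{j,j',k'}$ is deterministic and $Y_{j,j',k'}$ depends only on $\{\varepsilon_m\}_{m\neq 0}$. The averaging factor produces
\[
|X_{j,j',k'}| \;\leq\; \frac{\lambda}{(2L'+1)^d}\sum_{l'\in\Lambda_{L'}} A_{l'+k'(2L'+1)} \;\leq\; \frac{\lambda \sum_m A_m}{(2L'+1)^d},
\]
which, for the parameter choice $L'=\lfloor \delta^{-1/48}\rfloor$, is much smaller than the threshold $\frac{1}{2J^2}\lambda\sum_m A_m$ whenever $N_0 \gg 1$.

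Since $|t|\le 1$, the triangle inequality gives $\sup_{t\in [-1,1]} |\mcN(j,j',k')| \leq |X_{j,j',k'}| + |Y_{j,j',k'}|$, so the union over $t$ of the bad events is contained in
\[
\Omega'_{N_0} \;:=\; \Bigl\{ \{\varepsilon_m\}_{m\neq 0} \;:\; \sup_{1\leq j,j'\leq J,\, k'\in\Lambda_{K'}} |Y_{j,j',k'}| \;\geq\; \tfrac{1}{3J^2}\lambda \sum_m A_m \Bigr\},
\]
which depends only on $\{\varepsilon_m\}_{m\neq 0}$ and is independent of both $t$ and $E$. Because $Y_{j,j',k'}$ is a linear combination of mean-zero Bernoullis with the same coefficients as in \eqref{mcN summation} (except the zero coefficient at $m=0$), Theorem \ref{sub orthogonal} together with Dudley's $L^{\psi_2}$-estimate yields exactly the bound \eqref{2.42} for $\|\sup_{j,j',k'}|Y_{j,j',k'}|\|_{\psi_2}$, and the Chernoff step then gives $\mathbb{P}(\Omega'_{N_0}) \leq e^{-(\log N_0)^3}$ with the same choice of $L,L',K,K',\delta$. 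The Green's function bounds \eqref{Green L2 norm} and \eqref{Green off-diagonal decay} for $G'_{N_0}(E;t,\varepsilon)$ then follow from the Neumann series \eqref{initial Neumann} by the calculations of Sections 2.5--2.6, which use only $|r_m|\leq 1$.

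The only real obstacle is the need for uniformity in $t\in[-1,1]$, and it is dissolved by the linearity of $D$ in $r_0$: the $t$-dependence is isolated into a purely deterministic term whose size is tamed by the factor $(2L'+1)^{-d}$, leaving a genuinely random residual to which the same Dudley--Chernoff argument as in the discrete case applies.
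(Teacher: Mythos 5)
Your proposal is correct and follows essentially the same route as the paper: the paper likewise reruns the deterministic chain with $\varepsilon_0=t$, isolates the site-$0$ contribution (your $t\cdot X_{j,j',k'}$ is exactly the term the paper bounds by $\frac{\lambda}{(2L'+1)^d}\sum_m A_m$, and your $Y_{j,j',k'}$ is the paper's $\mcI(j,j',k')$), and then applies the same Dudley--Chernoff estimate to the residual, which is independent of $t$ and $E$. The only differences are cosmetic (threshold $\frac{1}{3J^2}$ versus $\frac{1}{4J^2}$).
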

\begin{proof}[Proof of Theorem \ref{initial scale, continuous}]
    Theorem \ref{initial scale, continuous} will be  proved using the so called {\it free site argument} originated from \cite{Bou04}.  Recall the Green's function $G_{N_0}'(E;t,\varepsilon)$ only changes  $\varepsilon_0\in\{\pm 1\}$ to  $t\in[-1, 1]$ in $G_{N_0}(E;\varepsilon)$. So, if we assume conversely that 
      \[\exists t\in [-1,1] \ {\rm s.t.,}\  \| (T_{N_0}+1)(E+1-D_{N_0})^{-1}  (T_{N_0}+1)(E+1-D_{N_0})^{-1} \| \biggl|_{\varepsilon_0=t}>1-\delta, \]
   then using  similar argument  leading to \eqref{2.40} as before  shows  that there exist  a  pair $(j,j',k')$ and  some $t\in [-1,1]$ such that 
    \begin{equation}\label{2.49}
        |\mcN(j,j',k')| \biggl|_{\varepsilon_0=t} \geq  \frac{1}{2J^2}\cdot \lambda \sum_{m\in \Z^d}A_m. 
    \end{equation} 
    However,  we have 
    \begin{align*}
   |\mcN(j,j',k')| \biggl|_{\varepsilon_0=t}   &=\left| \frac{\lambda}{(2L'+1)^d} \sum_{m\in \Z^d}A_{k'(2L'+1)-m} \left(\sum_{l'\in \Lambda_{L'}} e^{-2\pi i (\theta_j-\theta_{j'})\cdot l'} \cdot \varepsilon_{l'+m}\right) \biggl|_{\varepsilon_0=t} \right|  \\
        &\leq  \left| \frac{\lambda}{(2L'+1)^d} \sum_{m\in \Z^d}A_{k'(2L'+1)-m} \left(\sum_{l'\in \Lambda_{L'} \atop l'+m\neq 0 } e^{-2\pi i (\theta_j-\theta_{j'})\cdot l'} \cdot \varepsilon_{l'+m}\right)  \right| \\
         &\ \ \  +  \left| \frac{\lambda}{(2L'+1)^d} \sum_{m\in \Lambda_{L'}}A_{k'(2L'+1)-m}  e^{2\pi i (\theta_j-\theta_{j'})\cdot m} \cdot t \right| \\
         & \leq |\mcI(j,j',k')| +\frac{\lambda}{(2L'+1)^d} \sum_{m\in \Z^d} A_m,
   \end{align*}
  where 
   \begin{equation}\label{mcI summation}
    \mcI(j,j',k') =\frac{\lambda}{(2L'+1)^d} \sum_{m\in \Z^d}A_{k'(2L'+1)-m} \left(\sum_{l'\in \Lambda_{L'} \atop l'+m\neq 0 } e^{-2\pi i (\theta_j-\theta_{j'})\cdot l'} \cdot \varepsilon_{l'+m}\right). 
   \end{equation}
   Hence, using \eqref{2.49} implies  that there exists  a pair $(j,j',k')$ such that 
  \begin{equation*}
     |\mcI(j,j',k')| \geq  \frac{1}{2J^2}\cdot \lambda \sum_{m\in \Z^d}A_m - \frac{\lambda}{(2L'+1)^d} \sum_{m\in \Z^d} A_m \geq \frac{1}{4J^2}\cdot \lambda \sum_{m\in \Z^d}A_m,
  \end{equation*}
  where the last inequality  is  ensured by $L'=\lfloor \delta^{-\frac{1}{48}}\rfloor$ and $0<\delta\ll1$.  This allows us to  define 
\begin{equation}\label{event initial scale, continuous version}
  \Omega'_{N_0}=\left\{ \varepsilon:\  \sup_{1\leq j,j'\leq J \atop k'\in \Lambda_{K'}}|\mcI(j,j',k')|\geq \frac{1}{4J^2}\cdot \lambda \sum_{m\in \Z^d}A_m   \right\},
\end{equation}
which  is independent of $E\in[E^*-\delta, E^*].$
Again, from  similar arguments  leading to  \eqref{2.42} and \eqref{2.43}, it follows that  
 \begin{align*}
    \mathbb{P}(\Omega'_{N_0}) \leq  e^ { -(\log N_0)^3}. 
  \end{align*}

This completes the proof of Theorem \ref{initial scale, continuous}. 
 \end{proof}

\section{Green's function estimates:  the large scales case}\label{LGFS}
In this section,  we aim to establish the Green's function estimates for all scales, thereby completing  the proof of Theorem \ref{Green function estimates} and Theorem \ref{Green function estimates, continuous version}.  The main scheme is based on the MSA induction. However, the presence of singular Bernoulli potentials causes an essential difficulty:  an a priori Wegner estimate  is unavailable.  Such a problem has been resolved by Bourgain \cite{Bou04} via developing the {\it free sites argument}  together with a new distributional inequality(cf. Lemma \ref{dislem}, based on Boolean functions analysis and Sperner's lemma). We will follow the method of \cite{Bou04}. 

In the following, we first perform some trim surgeries  on  probabilistic  events, which allows us to handle the weak independence of $D_n(\varepsilon)$ and perform the {\it free sites argument}. Next, to establish off-diagonal decay estimates on Green's functions in the MSA scheme, we will prove a key coupling lemma. Finally, we complete the proof of our main theorems  by combining the initial scales Green's function estimates with the MSA induction schemes.

\subsection{Trim of  the probabilistic events}\label{trim}
The  key trim operations consist of the  following two perspectives.

\subsubsection{Weak independence of the random potential} Recall that the potential is 
\begin{equation}\label{alloy potential}
  D(\varepsilon)_n=\lambda \sum_{m\in \Z^d} 2^{-|n-m|} \varepsilon_m. 
\end{equation}
For the  usual Bernoulli potential $D(\varepsilon)_n\to\varepsilon_n$, the restricted operator $H_N(\varepsilon)=R_N H(\varepsilon) R_N$ only depends  on  $(\varepsilon_j)_{j\in \Lambda_N}$. Now, for any $n\in \Lambda_N$, if $\varepsilon,\varepsilon'\in \{\pm 1\}^{\Z^d}$ satisfy  
\[ (\varepsilon_j) _{j\in\Lambda_{\frac{11}{10}N}}= (\varepsilon'_j) _{j\in\Lambda_{\frac{11}{10}N}},\]
then 
\begin{align}\label{weak independence perturbation}
  |D(\varepsilon)_n-D(\varepsilon')_n| & =\lambda \left| \sum_{m\notin \Lambda_{\frac{11}{10}N}} 2^{-|n-m|} (\varepsilon_m-\varepsilon'_m) \right| \lesssim 2^{-\frac{N}{11}}. 
\end{align}
It is important that  both \eqref{Green L2 norm} and \eqref{Green off-diagonal decay} remain  essentially  preserved  under  a  $2^{-\frac{N}{11}}$-perturbation on potentials.  Indeed, by the Neumann series  argument, we have 
\begin{equation}\label{weak independence Neumann}
  G_N(E;\varepsilon') =G_N(E;\varepsilon) \sum_{s\geq 0} (-1)^s ((D_N(\varepsilon')-D_N(\varepsilon))\cdot G_N(E;\varepsilon))^s.
\end{equation}
Assume $\varepsilon\in  \Omega_N(E)^c$, i.e.,  $G_N(E;\varepsilon)$ satisfies  \eqref{Green L2 norm} and \eqref{Green off-diagonal decay}. Then \eqref{weak independence perturbation} and \eqref{weak independence Neumann} can ensure 
\begin{align}\label{weak independence L2 norm}
  \| G_N(E;\varepsilon') \| \leq \frac{\| G_N(E;\varepsilon)\|}{1-2^{-\frac{N}{11} }\| G_N(E;\varepsilon)\|} < 2 e^{N^{\frac{9}{10}}}.
\end{align}
On the other hand, combining \eqref{Green L2 norm} and \eqref{Green off-diagonal decay} gives \begin{equation*}
  |G_N(E;\varepsilon)(n,n')|< e^{ N^{\frac{9}{10}}+\gamma \frac{N}{10}-\gamma|n-n'|   }
\end{equation*}
for all $n,n'\in \Lambda_N$. This  implies that if $|n-n'|>\frac{N}{10}$, then 
\begin{align*} 
  &\ \ \ |G_N(E;\varepsilon')(n,n')| \\
 \notag &\leq e^{-\gamma|n-n'|}+ \sum_{s\geq 1} 2^{-\frac{N}{11} s} \sum_{n_1,n_2, \cdots, n_s\in \Lambda_N} |G_N(E;\varepsilon)(n,n_1)|\cdot |G_N(E;\varepsilon)(n_1,n_2)|\cdots |G_N(E;\varepsilon)(n_s,n')| \\
 \notag &\leq e^{-\gamma|n-n'|}+\sum_{s\geq 1} 2^{-\frac{N}{11} s} e^{(s+1)(N^{\frac{9}{10}}+\gamma\frac{N}{10})} \sum_{n_1,n_2, \cdots, n_s\in \Lambda_N} e^{-\gamma|n-n_1|\cdots-\gamma|n_s-n'|} \\
 \notag & \leq e^{-\gamma|n-n'|}\left( \sum_{s\geq 0} (\# \Lambda_N \cdot 2^{-\frac{N}{11}} \cdot e^{N^{\frac{9}{10}}+\frac{\gamma}{10}N})^s  \right). 
\end{align*}
As in the MSA  iteration $\frac{\gamma_0}{2}\leq \gamma\leq \gamma_0\ll \frac{\log 2}{11}$,  we obtain  
\[  \# \Lambda_N \cdot 2^{-\frac{N}{11}} \cdot e^{N^{\frac{9}{10}}+\frac{\gamma}{10}N} \ll 1,\]
and 
\begin{equation}\label{weak independence off-diagonal decay}
  |G_N(E;\varepsilon')(n,n')|< 2 e^{-\gamma|n-n'|}. 
\end{equation}

The above argument indicates that, if we trim the event $\Omega_N(E)$ as 
\begin{equation}\label{weak independence trim}
  \trim_1(\Omega_N(E)) := \left(\{\pm 1\}^{\Z^d\setminus \Lambda_{\frac{11}{10}N}}\times \proj_{\Lambda_{\frac{11}{10}N}}(\Omega_N(E)^c)\right)^c,  
\end{equation}
then  $\trim_1(\Omega_N(E)) \subset\Omega_N(E) $ and 
\[\mathbb{P}(\trim_1(\Omega_N(E)))\leq  \mathbb{P}(\Omega_N(E)) \leq e^{-c\frac{(\log N)^2}{\log \log N}}. \] 
Moreover,   for  $\varepsilon'$ outside the set of  \eqref{weak independence trim}, the Green's function $G_N(E;\varepsilon')$ satisfies \eqref{weak independence L2 norm} and \eqref{weak independence off-diagonal decay}. It is  remarkable that  the set of  \eqref{weak independence trim} only depends on  $(\varepsilon_j)_{j\in \Lambda_{\frac{11}{10}N}}$. 

The above trim operation reveals  the weak independence of the potential \eqref{alloy potential}, and we always denote it  by $\trim_1(\cdot)$.
\begin{rmk}\label{trim 1 remark}
Indeed, it's easy to see that as soon as $\varepsilon$ is outside the set of \eqref{weak independence trim}, the Green's function 
\[G_{N}(E;r_j=\varepsilon_j,j\in \Lambda_{\frac{11}{10}N};r_j=t_j,j\notin \Lambda_{\frac{11}{10}N})\ {\rm for}\ \forall t_j\in [-1,1],\]
ssatisfies \eqref{weak independence L2 norm} and \eqref{weak independence off-diagonal decay}.
\end{rmk}

\subsubsection{Concentration of measure.} Another observation is that the density of Bernoulli  random variables $\omega_n, {n\in\Z^d}$  highly concentrates  at  $\pm 1$. This implies that an event in $\{\pm 1\}^{\Z^d}$ with large probability  can be trimmed  {\it free} from   certain   sites (in $\Z^d$). For example, assume 
\[\Omega\subset \{\pm 1\}^{\Z^d},\ \mathbb{P}(\Omega)>1-\kappa, 0<\kappa\ll 1.\]
Fix $0\in \Z^d$ and consider  $\varepsilon_0$. Denote 
\begin{equation}
  A=\{(\varepsilon_j)_{j\neq 0} : \ (\varepsilon_0=1;\varepsilon_j,j\neq 0) \ {\rm or} \  (\varepsilon_0=-1;\varepsilon_j,j\neq 0) \notin \Omega \}. 
\end{equation}
Then   $  \frac{1}{2} \cdot \mathbb{P} (A)< \kappa$.  If we trim the complement $\Omega^c$ as 
\begin{equation}\label{concentration of mes trim}
  \trim_2(\Omega^c)=\{\pm 1\}^{\{0\}}\times A^c,
\end{equation} 
then the set of \eqref{concentration of mes trim} is free from site  $0$ and 
\begin{equation}\label{concentration of mes prob}
  \P(\trim_2(\Omega^c))<2\kappa , \left( \trim_2(\Omega^c) \right)^c\subset \Omega. 
\end{equation}
We will always denote by  $\trim_2(\cdot)$  this operation.\\

\begin{figure}[htbp]\label{trimfig}
  \centering

\tikzset{every picture/.style={line width=0.5pt}} 

\begin{tikzpicture}[x=0.75pt,y=0.75pt,yscale=-1,xscale=1]

\draw    (200.71,210.29) -- (200.71,41.29) ;
\draw [shift={(200.71,39.29)}, rotate = 90] [color={rgb, 255:red, 0; green, 0; blue, 0 }  ][line width=0.75]    (10.93,-3.29) .. controls (6.95,-1.4) and (3.31,-0.3) .. (0,0) .. controls (3.31,0.3) and (6.95,1.4) .. (10.93,3.29)   ;
\draw    (200.71,210.29) -- (459.71,210.29) ;
\draw [shift={(461.71,210.29)}, rotate = 180] [color={rgb, 255:red, 0; green, 0; blue, 0 }  ][line width=0.75]    (10.93,-3.29) .. controls (6.95,-1.4) and (3.31,-0.3) .. (0,0) .. controls (3.31,0.3) and (6.95,1.4) .. (10.93,3.29)   ;
\draw  [fill={rgb, 255:red, 208; green, 2; blue, 27 }  ,fill opacity=1 ] (200.71,92.29) -- (324,92.29) -- (324,210.29) -- (200.71,210.29) -- cycle ;
\draw  [fill={rgb, 255:red, 255; green, 255; blue, 255 }  ,fill opacity=1 ] (271.71,92.29) -- (324,92.29) -- (324,148) -- (271.71,148) -- cycle ;
\draw  [fill={rgb, 255:red, 208; green, 2; blue, 27 }  ,fill opacity=1 ] (324,92.29) -- (356.71,92.29) -- (356.71,148) -- (324,148) -- cycle ;
\draw    (324,148) -- (324,210.29) ;
\draw   (324,92.29) -- (392.71,92.29) -- (392.71,210.29) -- (324,210.29) -- cycle ;
\draw  [dash pattern={on 4.5pt off 4.5pt}]  (271.71,148) -- (200.71,148.29) ;
\draw    (272.71,77.29) -- (393.71,77.29) ;
\draw [shift={(393.71,77.29)}, rotate = 180] [color={rgb, 255:red, 0; green, 0; blue, 0 }  ][line width=0.75]    (0,5.59) -- (0,-5.59)(17.64,-4.9) .. controls (13.66,-2.3) and (10.02,-0.67) .. (6.71,0) .. controls (10.02,0.67) and (13.66,2.3) .. (17.64,4.9)(10.93,-4.9) .. controls (6.95,-2.3) and (3.31,-0.67) .. (0,0) .. controls (3.31,0.67) and (6.95,2.3) .. (10.93,4.9)   ;
\draw [shift={(272.71,77.29)}, rotate = 0] [color={rgb, 255:red, 0; green, 0; blue, 0 }  ][line width=0.75]    (0,5.59) -- (0,-5.59)(17.64,-4.9) .. controls (13.66,-2.3) and (10.02,-0.67) .. (6.71,0) .. controls (10.02,0.67) and (13.66,2.3) .. (17.64,4.9)(10.93,-4.9) .. controls (6.95,-2.3) and (3.31,-0.67) .. (0,0) .. controls (3.31,0.67) and (6.95,2.3) .. (10.93,4.9)   ;

\draw (171,36) node [anchor=north west][inner sep=0.75pt]   [align=left] {$\varepsilon_0$};
\draw (430,224) node [anchor=north west][inner sep=0.75pt]   [align=left] {$\varepsilon_j,j\neq 0$};
\draw (182,113) node [anchor=north west][inner sep=0.75pt]   [align=left] {$1$};
\draw (179,176) node [anchor=north west][inner sep=0.75pt]   [align=left] {$-1$};
\draw (249,168) node [anchor=north west][inner sep=0.75pt]   [align=left] {$\Omega$};
\draw (329,60) node [anchor=north west][inner sep=0.75pt]   [align=left] {$A$};

\end{tikzpicture}


\end{figure}

Moreover, if one wants  to free sites in  $Q\subset \Z^d$ for the above  event $\Omega$, then  \eqref{concentration of mes prob} will become 
\[\P(\trim_2(\Omega^c))<2^{\# Q}\cdot \kappa, \]
which shows  that such operation is robust only when $\# Q\ll |\log \kappa|$. 

Now,  only consider $\varepsilon_0$ and set  $\Omega=\Omega_N(E)^c$ in the above argument. By \eqref{N scale bad event prob}, this gives us a set free from  $\varepsilon_0$ as 
\[\Omega_N(E)\subset \trim_2(\Omega_N(E)),\  \P(\trim_2(\Omega_N(E)))<2 e^{-c\frac{(\log N)^2}{\log \log N}}.\]
Outside  of  $\trim_2(\Omega_N(E))$, both \eqref{Green L2 norm} and \eqref{Green off-diagonal decay} hold true.

\begin{rmk}
  The same operation can also be applied to  $\Omega'_N(E)$ defined  in Theorem \ref{Green function estimates, continuous version}. Note  that Theorem \ref{Green function estimates, continuous version} already implies that $\Omega'_N(E)$ is free from $\varepsilon_0$. Therefore, only the operation $\trim_1(\cdot)$ plays a role for this set.
\end{rmk}

\subsection{A key coupling lemma}
Next, we prove a key coupling lemma,  which will be used repeatedly  in  the MSA  iteration.

We say an $N$-size block $\Lambda_N(k)$ is good if the Green's function restricted to it satisfies  \eqref{Green L2 norm}  (resp. \eqref{weak independence L2 norm}) and \eqref{Green off-diagonal decay}  (resp.  \eqref{weak independence off-diagonal decay}) with decay rate $\gamma=\gamma_N$.

Denote by $|\Lambda|$ the diameter (size) of $\Lambda\subset\Z^d$ induced by  the norm $|\cdot|$. We have 
\begin{lem}\label{MSA}
 Fix $N_1=N^{\frac{4}{3}},1\ll N\ll L_1 \leq \frac{1}{2} N_1$.  Let $\Lambda_0' \subset \Lambda_1' \subset \Lambda\subset \Z^d$  be three blocks  satisfying 
\[|\Lambda_0'|\sim N, \ |\Lambda_1'|\sim L_1, \ |\Lambda|\sim N_1,  \]
and $\Lambda_1'$ contains the  $\frac{L_1}{10}$-neighborhood of $\Lambda_0'$. Assume 
there is  a class of $N$-size good blocks  
\[\mathbb F=\{\Lambda':\ \Lambda'\subset \Lambda\}\]
satisfying 
\begin{itemize}
  \item $\mathbb F$ and $\Lambda_1'$ cover $\Lambda$, i.e.,  
       \[\Lambda=\Lambda_1'\cup \mcA, \ \mcA=\bigcup_{\Lambda'\in \mathbb F}\Lambda';\]
  \item For each $n\in \Lambda\setminus \Lambda_0'$, there is a $\Lambda'\in \mathbb F$ such that
      \begin{equation}\label{good block nhd}
          \Lambda_{\frac{N}{5}}(n)\cap \Lambda \subset \Lambda'.
      \end{equation}
\end{itemize} 
Assume further  for  $E\in\R$, 
  \begin{equation}\label{center block L2 estimate}
           \| G_{\Lambda_1'} (E)\| <e^{L_1^{\frac{9}{10}}}. 
  \end{equation}
Then 
\begin{align}
\label{annuls L2 norm}  \| G_{\mcA}(E)\| &<e^{2N^{\frac{9}{10}}}, \\
\label{annuls off-diagonal decay}  |G_{\mcA}(E)(x,y)| &<e^{-\frac{4}{5}\gamma_N |x-y|}\ {\rm for}\ \forall |x-y|\geq N. 
\end{align}
Moreover, $G_\Lambda(E)$ is a good $N_1$-size block. 
\end{lem}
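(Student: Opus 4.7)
The argument proceeds in two stages by the standard multi-scale resolvent expansion. In Stage 1 I would establish \eqref{annuls L2 norm} and \eqref{annuls off-diagonal decay} for $G_\mcA(E)$ by iterating the resolvent identity among the good $N$-blocks of $\mathbb F$. In Stage 2 I would combine the Stage 1 bounds with the hypothesis \eqref{center block L2 estimate} on $G_{\Lambda_1'}(E)$ to conclude that $G_\Lambda(E)$ is itself a good $N_1$-size block. The long-range hopping is handled throughout using the exponential decay from \textbf{(A2)}, with the losses absorbed into the gap between the good-block rate $\gamma_N$ and the degraded rate $\frac{4}{5}\gamma_N$.

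For Stage 1, observe first that the hypothesis \eqref{good block nhd} already forces $\Lambda\setminus\Lambda_0'\subset\mcA$, so any $x\in\mcA\setminus\Lambda_0'$ admits a good $N$-block $\Lambda'(x)\in\mathbb F$ with $\Lambda_{N/5}(x)\cap\Lambda\subset\Lambda'(x)$. For $y\in\mcA$, the resolvent identity relative to $\Lambda'(x)$ gives
\begin{align*}
G_\mcA(E)(x,y) &= \chi_{\Lambda'(x)}(y)\, G_{\Lambda'(x)}(E)(x,y) \\
&\quad - \sum_{\substack{u\in\Lambda'(x) \\ v\in\mcA\setminus\Lambda'(x)}} G_{\Lambda'(x)}(E)(x,u)\, T(u,v)\, G_\mcA(E)(v,y).
\end{align*}
Iterating produces a path sum whose generic $k$-th term is a telescoping product $\prod_{i=0}^{k-1} G_{\Lambda'(x_i)}(E)(x_i,u_i)\, T(u_i,x_{i+1})$ with $x_0=x$ and terminates at $y$. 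Because $x_i$ sits at distance $\geq N/5$ from $\Lambda\setminus\Lambda'(x_i)$, the off-diagonal decay of $G_{\Lambda'(x_i)}$ at rate $\gamma_N$ absorbs $|x_i-u_i|$ whenever $u_i$ is near the boundary, and $|T(u_i,x_{i+1})|\leq e^{-c|u_i-x_{i+1}|}$ by \textbf{(A2)} absorbs the jump. These combine into $e^{-\gamma_N'|x_i-x_{i+1}|}$ with $\gamma_N'$ only slightly below $\gamma_N$ (since $\gamma_N\ll c$). The path-counting prefactor is polynomial in $|\mcA|\lesssim N_1^d$ and thus negligible once $|x-y|\geq N\gg d\log N_1$, yielding \eqref{annuls off-diagonal decay}. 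For the operator-norm bound \eqref{annuls L2 norm}, truncating the Neumann-type series at a suitable length and using $\|G_{\Lambda'(x_i)}(E)\|\leq e^{N^{9/10}}$ at each step produces $\|G_\mcA(E)\|<e^{2N^{9/10}}$.

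For Stage 2, apply the same scheme with the finer covering $\Lambda=\Lambda_1'\cup\mcA$. Any $x\in\Lambda\setminus\mcA$ lies in $\Lambda_0'$, hence at distance $\geq L_1/10$ from $\partial\Lambda_1'$ by the buffer assumption. Running the resolvent expansion with $\Lambda_1'$ as the central block (bounded by \eqref{center block L2 estimate}) and then propagating through the good blocks of $\mathbb F$ (bounded by Stage 1), the central contribution is $e^{L_1^{9/10}}\ll e^{N_1^{9/10}}$ since $L_1\leq N_1/2$ and $N_1=N^{4/3}$, while the $\mcA$-propagation retains the decay rate slightly below $\frac{4}{5}\gamma_N$. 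This yields both $\|G_\Lambda(E)\|<e^{N_1^{9/10}}$ and $|G_\Lambda(E)(x,y)|<e^{-\gamma_{N_1}|x-y|}$ for $|x-y|>N_1/10$, as required for a good $N_1$-block.

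The main obstacle is controlling the combinatorics introduced by the long-range hopping. Unlike the Laplacian case (where only nearest-neighbor boundary terms arise), each resolvent step couples the current good block to \emph{all} sites in $\mcA\setminus\Lambda'(x_i)$ with weight $|T(u,v)|\cdot e^{-\gamma_N|x_i-u|}$. Showing that this double exponential effectively yields $e^{-\gamma_N'|x_i-v|}$ with $\gamma_N'$ only marginally below $\gamma_N$, and that the iterated path sum then collapses to the net decay $\frac{4}{5}\gamma_N|x-y|$, is the technical source of the slack in \eqref{annuls off-diagonal decay}; it also explains the explicit parameter choice $N_1=N^{4/3}$, which provides enough headroom to absorb the polynomial prefactors $N_1^{dk}$ into the exponentials.
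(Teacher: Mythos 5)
Your proposal follows essentially the same route as the paper's proof: the resolvent identity relative to the good $N$-block containing the $\frac{N}{5}$-neighborhood of the current site for points outside $\Lambda_0'$, the central block $\Lambda_1'$ with hypothesis \eqref{center block L2 estimate} for the points inside, and iteration with a stopping rule, the long-range hopping being absorbed via \textbf{(A2)} and the polynomial path-counting prefactors via $|x-y|\gg \log N_1$. One bookkeeping caveat for Stage 2: you should propagate through individual good $N$-blocks at rate $\gamma_N$ (losing only polynomial prefactors per step of length $\geq \frac{N}{5}$, hence an additive $O(N^{-1/10})$ loss in the rate over distances $|x-y|>\frac{N_1}{10}$), rather than through the Stage-1 bound on $G_{\mcA}$ at the degraded rate $\tfrac{4}{5}\gamma_N$ — a multiplicative $\tfrac{4}{5}$ loss per scale would drive $\gamma_{N_1}$ to zero along the induction, whereas the additive loss used in the paper keeps $\gamma_N\geq \tfrac{\gamma_0}{2}$ at all scales.
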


\begin{rmk}\label{exp rate remark}
  As we will see later,  the decay rate $\gamma=\gamma_N$ in \eqref{Green off-diagonal decay}   varies  with the scale  $N$ in  the MSA  iteration. Nevertheless, we will eventually show
  \[\gamma_0\geq \gamma_N=\gamma_0 -C\cdot \left(  \sum_{k: \ N_0\leq N_0^{(\frac{4}{3})^k}\leq N} N_0^{-\frac{1}{10}(\frac{4}{3})^k} \right) \geq \gamma_0 -C \cdot N_0^{-\frac{1}{10}} \geq \frac{\gamma_0}{2}.\]
  The last inequality holds true since  \eqref{initial parameter}  can ensure    $\gamma_0\gg\frac{1}{N_0}$.
\end{rmk}

\begin{figure}[htbp]
  \centering 
 
\tikzset{every picture/.style={line width=0.15pt}} 

\begin{tikzpicture}[x=0.75pt,y=0.75pt,yscale=-1,xscale=1]

\draw   (206,48) -- (482.71,48) -- (482.71,324.71) -- (206,324.71) -- cycle ;
\draw  [fill={rgb, 255:red, 208; green, 2; blue, 27 }  ,fill opacity=1 ] (292,110) -- (328.71,110) -- (328.71,146.71) -- (292,146.71) -- cycle ;
\draw   (242,70) -- (383.71,70) -- (383.71,211.71) -- (242,211.71) -- cycle ;
\draw   (409,106) -- (445.71,106) -- (445.71,142.71) -- (409,142.71) -- cycle ;
\draw   (321,138) -- (357.71,138) -- (357.71,174.71) -- (321,174.71) -- cycle ;

\draw (410,115) node [anchor=north west][inner sep=0.75pt]   [align=left] {good};
\draw (323,148) node [anchor=north west][inner sep=0.75pt]   [align=left] {good};
\draw (469,334) node [anchor=north west][inner sep=0.75pt]   [align=left] {$\Lambda$};
\draw (348,222) node [anchor=north west][inner sep=0.75pt]   [align=left] {$\Lambda_1'$};
\draw (271,88) node [anchor=north west][inner sep=0.75pt]   [align=left] {\textcolor[rgb]{0.82,0.01,0.11}{$\Lambda_0'$}};
\end{tikzpicture}

\end{figure}

\begin{proof}[Proof of Lemma \ref{MSA}]
We omit the dependence on $E$ for simplicity.  For any $x,y\in \mcA$ ($x\notin \Lambda_0'$), denote by $B_x\in \mathbb F$  the $N$-size block satisfying \eqref{good block nhd}.  Recall  the  resolvent identity 
\begin{equation}\label{mcA resolvent identity}
  G_{\mcA}=G_{B_x}\oplus G_{\mcA\setminus B_x}-(G_{B_x}\oplus G_{\mcA\setminus B_x})\Gamma G_{\mcA}, 
\end{equation}  
where 
\[\Gamma=(R_{B_x}T R_{\mcA\setminus B_x})\oplus (R_{\mcA\setminus B_x}T R_{B_x})\]
is the connecting matrix. Thus, applying  assumption \textbf{(A2)} implies 
\begin{align}\label{3.16}
  \sum_{y} |G_{\mcA}(x,y)| &\leq  \sum_{y}|G_{B_x}(x,y)|\chi_{B_x}(y)+\sum_{y}\sum_{\omega\in B_x \atop \omega'\in \mcA\setminus B_x}|G_{B_x}(x,\omega)|\cdot e^{-c|\omega-\omega'|}\cdot |G_{\mcA}(\omega',y)|.
\end{align}
Note that in the summation, we have  
\begin{itemize}
  \item if $|\omega-x|\geq \frac{N}{10}$,  then \eqref{Green off-diagonal decay} gives $|G_{B_x}(x,\omega)|\leq e^{-\gamma_N |x-\omega|}$.
  \item if $|\omega-x|\leq \frac{N}{10}$, then \eqref{good block nhd} gives $|\omega-\omega'|\geq \frac{N}{10}\geq \frac{1}{2}|x-\omega'|.$
\end{itemize}
Hence, \eqref{3.16} can be further controlled via 
\begin{align*}
  \sum_{y} |G_{\mcA}(x,y)| &\leq \# B_x \cdot \| G_{B_x}\| + \left( \sup_{\omega'\in \mcA\setminus B_x} \sum_{y}|G_{\mcA}(\omega',y)| \right)\cdot (\# \Lambda)^2 \cdot (\| G_{B_x}\|e^{-c\frac{N}{10}}+e^{-\gamma_N \frac{N}{10}}) \\
                        &\leq (2N+1)^d e^{N^{\frac{9}{10}}}+ \frac{1}{2}  \left( \sup_{\omega'\in \mcA} \sum_{y}|G_{\mcA}(\omega',y)| \right),
\end{align*}
which  together with the Schur's  test gives
\begin{equation}\label{mcA L2 norm}
  \| G_{\mcA}\|\leq \left( \sup_{x\in \mcA} \sum_{y}|G_{\mcA}(x,y)| \right) \leq 2(2N+1)^{d}e^{N^{\frac{9}{10}}} \ll e^{2N^{\frac{9}{10}}}.
\end{equation}
Moreover,   by \eqref{mcA resolvent identity}, we have for  $|x-y|\geq N$, 
\begin{align}\label{matrix element estimate}
  |G_{\mcA}(x,y)| &\leq |G_{B_x}(x,y)| \chi_{B_x}(y) + \sum_{\omega\in B_x \atop \omega'\in \mcA\setminus B_x}|G_{B_x}(x,\omega)|\cdot e^{-c|\omega-\omega'|}\cdot |G_{\mcA}(\omega',y)| \\
  \notag   &\leq e^{-\gamma_N|x-y|} \chi_{B_x}(y)\\
  \notag&\ \ \ + \sum_{\omega'\in \mcA\setminus B_x} \left(\sum_{|\omega-x|\geq \frac{N}{10}}e^{-\gamma_N|x-\omega|-c|\omega-\omega'|}+\sum_{|\omega-x|\leq \frac{N}{10}} e^{N^{\frac{9}{10}}-\frac{c}{2}|x-\omega'|}\right) |G_{\mcA}(\omega',y)| \\
   \notag  &\leq e^{-\gamma_N|x-y|} \chi_{B_x}(y) + (\# \Lambda)^2 e^{-\gamma_N|x-x_1|}|G_{\mcA}(x_1,y)|, 
\end{align}
where $x_1\in \mcA\setminus B_x$ is a site  at which  $e^{-\gamma_N |x-\omega'|}|G_{\mcA}(\omega',y)|$ attains  its  maximum. 
The above estimate remains true  as long as $|x-y|>\frac{N}{10}$. Now, if  $|x_1-y|\leq \frac{N}{10}$, then $|x-x_1|\geq \frac{9}{10}|x-y|$, which   together with \eqref{mcA L2 norm} implies  
\begin{align}\label{3.18}
  |G_{\mcA}(x,y)| & \leq 2(2N_1+1)^d \exp\{2N^{\frac{9}{10}}-\frac{9}{10}\gamma_N |x-y'|\}  < e^{-\frac{4}{5}\gamma_N |x-y|}.
\end{align}
Otherwise,  if $|x_1-y|>\frac{N}{10}$,   we can iterate  the resolvent identity   and  perform  the same estimate for $|G(x_1,y)|$ to get $x_2, x_3,\cdots$.  This procedure can be  repeated  for  $s$ times, until $|x_s-y|\leq \frac{N}{10}$ or $s\sim |x-y|/(\frac{N}{10})$. Then one  obtains 
\begin{align}\label{3.19}
  |G_{\mcA}(x,y)| &\leq (2\# \Lambda)^s e^{2N^{\frac{9}{10}}-\frac{9}{10} \gamma_N|x-y|} \\
    \notag &\leq e^{-(\frac{9}{10}\gamma_N-2\frac{1}{N^{\frac{1}{10}}}-C\frac{\log N}{N})|x-y|} \\
    \notag &\leq e^{-\frac{4}{5}\gamma_N |x-y|}. 
\end{align}
In the estimates of  \eqref{3.18} and \eqref{3.19},  we use  (cf. Remark \ref{exp rate remark})  $\gamma_N\gtrsim \gamma_0\gg \frac{1}{N^C}$. Thus, we complete the proof of  \eqref{annuls L2 norm} and \eqref{annuls off-diagonal decay}. 

Now, we deal with $G_{\Lambda}$. Consider the  resolvent of identity
\begin{equation}\label{Lambda resolvent identity}
  G_{\Lambda}=G_{B}\oplus G_{\Lambda\setminus B}-(G_{B}\oplus G_{\Lambda\setminus B})\Gamma G_{\Lambda},
\end{equation}  
where we will take either $B\in \mathbb F$ or $B=\Lambda_1'$.  More precisely, 
\begin{itemize}
  \item \textbf{(Case 1:\ $x\notin \Lambda_0'$ or $y\notin \Lambda_0'$) }  \\
             Assume $x\notin \Lambda_0'$. Take $B=B_x\in \mathbb F$ satisfying  \eqref{good block nhd} and  perform similar estimate (for $x$)  leading to  \eqref{matrix element estimate}. Then we obtain 
             \begin{align}\label{3.22}
                |G_{\Lambda}(x,y)| &\leq e^{N^{\frac{9}{10}}}\chi_{B_x}(y)+ (\#\Lambda)^2 e^{-\gamma_N |x-\omega'|} |G_{\Lambda}(\omega',y)| \\
                     \notag & \leq e^{N^{\frac{9}{10}}}\chi_{B_x}(y)+ e^{-(\gamma_N-C\frac{\log N}{N}) |x-\omega'|} |G_{\Lambda}(\omega',y)| 
             \end{align} 
             for some $\omega'\notin B_x$ (and thus $|\omega'-x|\geq \frac{N}{5}$). We denote $\gamma'_{N}= \gamma_N-C\frac{\log N}{N}$.
    \item \textbf{(Case 2:\ $x,y \in \Lambda_0'$)}  \\
            In this case, we take $B=\Lambda_1'$. Then  applying  \eqref{Lambda resolvent identity} gives  
            \begin{equation*}
              |G_{\Lambda}(x,y)|\leq |G_{\Lambda_1'}(x,y)| \chi_{\Lambda_1'}(y) + \sum_{\omega\in \Lambda_1' \atop \omega'\in \Lambda\setminus \Lambda_1'}|G_{\Lambda_1'}(x,\omega)|\cdot e^{-c|\omega-\omega'|}\cdot |G_{\Lambda}(\omega',y)|.
            \end{equation*}
            Note that  $\omega'\notin \Lambda_1'\Rightarrow \omega'\notin \Lambda_0'$. Hence, applying \eqref{center block L2 estimate} enables  us to  get  
            \begin{align*}
                   |G_{\Lambda}(x,y)| & \leq  e^{L_1^{\frac{9}{10}}} + (\# \Lambda)^2 e^{L_1^{\frac{9}{10}}} |G_{\Lambda}(x_1,y)|. 
            \end{align*}
            As  $\Lambda_1'$ contains the  $\frac{L_1}{10}$-neighborhood of $\Lambda_0'$, we have $x_1 \notin \Lambda_0'$ and $|x_1-y|\geq \frac{L_1}{10}$. Applying  \eqref{3.22} for $|G_{\Lambda}(x_1,y)|$ repeatedly leads to  
            \begin{align*}
              |G_{\Lambda}(x,y)| &\leq  e^{L_1^{\frac{9}{10}}} + (\# \Lambda)^2 e^{L_1^{\frac{9}{10}}}\cdot  e^{-\gamma'_N |x_1-x_2|} |G_{\Lambda}(x_2,y)| \\
                  & \leq \cdots \\
                  & \leq e^{L_1^{\frac{9}{10}}}  + (\# \Lambda)^2 e^{L_1^{\frac{9}{10}}}\cdot e^{-\gamma'_N (|x_1-x_2|+\cdots+|x_s-x_{s+1}|)}  |G_{\Lambda}(x_{s+1},y)|. 
            \end{align*} 
   We stop the iterations  when $y\in B_{x_{s+1}}$ or $s\sim (\frac{L_1}{10})/(\frac{N}{5})$. Recall that during the iterations, we always have $|x_i-x_{i+1}|\geq \frac{N}{5}$. Thus, we can finally obtain   
            \begin{align}\label{3.23}
              |G_{\Lambda}(x,y)|\leq e^{L_1^{\frac{9}{10}}}  + (\# \Lambda)^2 e^{L_1^{\frac{9}{10}}-C\cdot \gamma'_N L_1} \cdot \| G_{\Lambda} \|.
            \end{align}
\end{itemize}
Combining \eqref{3.22} and \eqref{3.23} implies 
\begin{align}\label{3.24}
  |G_{\Lambda}(x,y)| \leq e^{L_1^{\frac{9}{10}}}+ e^{-\gamma'_N \frac{N}{5}}\|G_{\Lambda}\|.
\end{align}
Introducing the  Hilbert-Schmidt norm in  \eqref{3.24} gives  
\begin{align*}
  \| G_{\Lambda}\| &\leq \| G_{\Lambda}\|_{{\rm HS}} \leq \left(\sum_{x,y}(e^{L_1^{\frac{9}{10}}}+ e^{-\gamma'_N \frac{N}{5}}\|G_{\Lambda}\|)^2\right)^{\frac{1}{2}} \\
      &\leq \# \Lambda \cdot(e^{L_1^{\frac{9}{10}}}+ e^{-\gamma'_N \frac{N}{5}}\|G_{\Lambda}\| ) \leq \# \Lambda \cdot e^{L_1^{\frac{9}{10}}} +\frac{1}{2}\|G_{\Lambda}\|. 
\end{align*}
Thus,
\begin{equation}\label{3.25}
  \|G_{\Lambda}\|\leq 2(2N_1+1)^d e^{L_1^{\frac{9}{10}}}\ll e^{N_1^{\frac{9}{10}}}. 
\end{equation}
Moreover,  for the off-diagonal decay estimate of $G_{\Lambda}$, we assume  $|x-y|>\frac{N_1}{10}$. Then  $x\notin \Lambda_0'$ or $y\notin \Lambda_0'$. Again applying  \eqref{3.22} repeatedly gives 
\begin{align*}
  |G_{\Lambda}(x,y)|&\leq e^{-\gamma'_N|x-x_s|}|G_{\Lambda}(x_s,y)|. 
\end{align*}
We stop the iterations  until $s$ is large or $x_s\in \Lambda_0'$. If both $x_s,y\in \Lambda_0'$, we get $|G_{\Lambda}(x_s,y)|\leq \|G_{\Lambda}\|$. Otherwise,   if $x\in \Lambda_0'$ but $y\notin \Lambda_0'$,  then we  iterate  the resolvent identity  beginning with   $y$ (note that $G_\Lambda$ is self-adjoint), until $y_t\in \Lambda_0'$, to obtain 
\begin{align*}
    |G_{\Lambda}(x,y)|&\leq e^{-\gamma'_N(|x-x_s|+|y_t-y|)}|G_{\Lambda}(x_s,y_t)| \\
       &\leq e^{-\gamma'_N(|x-y|-|\Lambda_0'|)} \|G_{\Lambda}\|,
\end{align*}
which together with \eqref{3.25}  yields 
\begin{align}\label{3.26}
  |G_{\Lambda}(x,y)|\leq e^{-(\gamma'_N-C \frac{N}{N_1}-C\frac{1}{N_1^{\frac{1}{10}}})|x-y|}. 
\end{align}
Thus, we have 
\[\gamma_{N_1}=\gamma_N-C\cdot N^{-\frac{1}{10}}\geq \gamma'_N-C \frac{N}{N_1}-C\frac{1}{N_1^{\frac{1}{10}}},\]
which establishes  \eqref{Green off-diagonal decay} for $G_{\Lambda}$ with  rate $\gamma_{N_1}$.

We have completed the whole proof. 
\end{proof}

\begin{rmk}\label{loose small scale}
  From the proof below,  it is easy to  see that if we assume loosely $N\sim N_1^{\frac{3}{4}} $, for example, 
  \[(1-\frac{1}{100})N_1^{\frac{3}{4}}\leq N \leq (1+\frac{1}{100})N_1^{\frac{3}{4}},\]
 then the results and proofs remain essentially unchanged.
\end{rmk}

\begin{rmk}\label{off diagonal only good annuls}
  We emphasize that, from the proceeding proof, the restriction $L_1\leq \frac{1}{2}N_1$ only aims to ensure \eqref{3.25}.  Indeed,  one can  take  $\Lambda_1'=\Lambda$,  $L_1=N_1$,  and assume that \eqref{center block L2 estimate} holds true (which is  consistent with \eqref{3.25}): the off-diagonal decay estimate \eqref{3.26} remains true.
\end{rmk}

\begin{rmk}\label{not only one bad block}
   In Lemma \ref{MSA}, we may make a less restrictive assumption on the bad cube $\Lambda_0'$, namely, $\Lambda_0'$ may be replaced by a set $\mathbb B$ which is a union of   several bad $N$-blocks, for which   there is a collection $\mathbb F$ of a bounded number of $L_1$-blocks $\Lambda_1'$ satisfying \eqref{center block L2 estimate} and such that 
  \begin{itemize}
    \item distinct elements of $\mathbb F$ are at distance $\gtrsim L_1$;
    \item the $\frac{L_1}{10}$-neighborhood of $\mathbb B$  is  contained in $\cup_{\mathbb F}\Lambda_1'$.
  \end{itemize}
  Then the conclusion of Lemma \ref{MSA} remains true. In fact, we will only use this remark for the case that $\mathbb B$ is a union of $3$ bad $N$-blocks.
\end{rmk}

\subsection{Proof of Theorem \ref{Green function estimates} and Theorem \ref{Green function estimates, continuous version}}
In this part, we will finish the proof of Theorem \ref{Green function estimates} and Theorem \ref{Green function estimates, continuous version}, which is based on the MSA induction.

Note that we have proven the initial scales case in the Section \ref{section 2}.  We take $N_0\gg1$    so that  Theorem \ref{initial scale} and Theorem \ref{initial scale, continuous} hold true.  Indeed,  Theorem \ref{Green function estimates} and Theorem \ref{Green function estimates, continuous version} hold for  scales in $N_0\leq N\leq N_0^2$,  if  we let  
\begin{equation}\label{initial parameter}
  \delta=(\log N_0^2)^{-10^3}\sim (\log N_0)^{-10^3},\ \gamma_0=\frac{1}{(\log N_0^2)^{2\times 10^3}} 
\end{equation}
and  $E\in [E^*-\delta,E^*]$.  Recalling  the restriction \eqref{parameter condition 1},  we  take $L=\lfloor \delta^{-\frac{1}{24}}\rfloor$ and $L'=\lfloor \delta^{-\frac{1}{48}}\rfloor$. Denote 
\[{\rm Scale}_0:=\left\{N\in\Z:\ N_0\leq N\leq N_0^2, \ \frac{2N+1}{(2L'+1)(2L+1)}\in \Z_+  \right\}.\]
Then for $N\in {\rm Scale}_0$, \eqref{Green off-diagonal decay} holds true  for   $\gamma=\gamma_0$. Moreover, the probability estimate  \eqref{N scale bad event prob} (and \eqref{N scale bad event prob, continue version})  can  also be  ensured by 
\[\exp\{-(\log N)^3 \}\ll \exp \{-c\frac{(\log N)^2}{\log\log N}\},\ N_0\leq N\leq N_0^2.\]
It remains to establish \eqref{Green L2 norm} and \eqref{Green off-diagonal decay} for large scales $N\geq N_0^2$.  

We are now in a position to prove Theorem \ref{Green function estimates} and Theorem \ref{Green function estimates, continuous version} for scales $\geq N_1^2$, and we let 
\[N_1\geq N_0^2, \ N^{\frac{4}{3}}\sim N_1.\]
Assume Theorem \ref{Green function estimates} and Theorem \ref{Green function estimates, continuous version} hold true  for  scales belonging to ${\rm Scale}_0\cup[N_0^2, N_1)$. The assumption $N_1\geq N_0^2$ ensures that $N\geq N_0$,  so Green's function estimates hold for scale $N$.
\begin{rmk}\label{Dense initial scale}
  We want to remark that ${\rm Scale}_0$ is dense in $[N_0 ,N_0^2]$  because the distance between the two adjacent  elements in ${\rm Scale}_0$ is $(2L+1)(2L'+1)\sim\delta^{-\frac{1}{16}}\ll N_0^{0+}$. This suffices for the propagation of induction scales, i.e.,    all  $N_1\geq N_0^2.$  Indeed,   we can  find $N\in{\rm Scale}_0$ satisfying  $(1-\frac{1}{100})N_1^{\frac{3}{4}}\leq N\leq(1+\frac{1}{100})N_1^{\frac{3}{4}} $ so that 
  \[N=N_1^{\frac{3}{4}} \pm \mcO(\delta^{-\frac{1}{16}}),\ L_1=N_1^{\frac{15}{16}} \pm \mcO(\delta^{-\frac{1}{16}}).\]
  In this case,   the coupling  Lemma \ref{MSA} still works (cf. Remark \ref{loose small scale}).
\end{rmk}

First, we apply Theorem \ref{Green function estimates, continuous version} for scale $L_1\sim N^{\frac{5}{4}}\sim N_1^{\frac{15}{16}}\ll N_1$ so that for $\varepsilon$ outside of  $\Omega'_{L_1}(E)$, we have  
\[G'_{L_1}(E;t,\varepsilon)=G_{L_1}(E;r_0=t,r_j=\varepsilon_j (j\neq 0))\]
satisfies \eqref{Green L2 norm} and \eqref{Green off-diagonal decay} with $N=L_1$.  Moreover, we can  apply Theorem \ref{Green function estimates} at scale $N$  so that,  on ${\rm Trim}_1(\Omega_N)$ (only depends on sites  in $\frac{11}{10}N$-size block),   all $N$-size cubes satisfying 
\begin{equation}\label{3.27}
  \Lambda_N(k)\subset \Lambda_{N_1},0\notin \Lambda_{\frac{11}{10} N}(k)
\end{equation}
are  good.   More precisely,   we define  for $\Omega\subset \{\pm 1\}^{\Z^d}$, 
\[S_k\Omega=\{\varepsilon:\ S_k\varepsilon\in \Omega \},\ (S_k\varepsilon)_j=\varepsilon_{j-k},\]
and 
\begin{equation}\label{3.28}
  \Omega'_{N_1}(E)=\Omega'_{L_1}(E) \cup\left(\bigcup_{k \ {\rm statisfies \ } \eqref{3.27}} S_{-k}\trim_1(\Omega_N(E)) \right).
\end{equation}
By our construction \eqref{3.28}, $\Omega'_{N_1}(E)$ is free from $\varepsilon_0=t\in[-1, 1]$.  From  the induction assumptions,  it follows that  \begin{align}
  \P(\Omega'_{N_1}(E)) & \leq \P(\Omega'_{L_1}(E))  + CN^d \P(\Omega_N(E)) \\
 \notag     &\leq \P(\Omega'_{L_1}(E)) + e^{-(\log N)^{\frac{3}{2}}} \\
  \notag    &\leq \P(\Omega'_{L_1^{\frac{15}{16}}}(E)) +e^{-(\log L_1)^{\frac{3}{2}}}+e^{-(\log N)^{\frac{3}{2}}} \\
   \notag   &\cdots\\
    \notag  &\lesssim \P(\Omega'_{N_0}(E)) + \sum_{k\geq 1} e^{ -\log (N_0^{(\frac{16}{15})^k})^{\frac{3}{2}}} \lesssim \frac{1}{N_0} \ll \frac{1}{100}. 
\end{align}
Furthermore, outside of  $\Omega'_{N_1}(E)$,  the conditions  of  Lemma \ref{MSA} with $\Lambda_0'=\Lambda_{10N},\Lambda_1'=\Lambda_{L_1}, \Lambda=\Lambda_{N_1}$ are satisfied, which establishes the Green's function estimates at scale $N_1$.  We have proven  Theorem \ref{Green function estimates, continuous version} for the scale  $N_1$.  We should mention that in this proof, there is no need to  ``propagate  the randomness''.

Now, recall the arguments in Section \ref{trim}:  for each scale $N_0\leq L\leq N_1$,  we can trim  $\Omega_L(E)$ and $\Omega'_L(E)$ as 
\[\trim_1(\trim_2(\Omega_L(E))),\ \trim_1(\trim_2(\Omega'_L(E))).\]
For the remaining part,  we still use $\Omega_L(E),\Omega'_L(E)$ to denote the above two trimmed events for convenience.  That is to say, we can assume  $\Omega_L(E),\Omega'_L(E)$ only depend  on variables $(\varepsilon_j)_{j\in \Lambda_{\frac{11}{10}L}\setminus\{0\}}$. 

To establish Theorem \ref{Green function estimates} for the scale $N_1$ (we have to ``propagate  the randomness''),  we will apply the  {\bf free sites argument} together with  {\bf a distributional inequality} (cf. Lemma \ref{dislem}), which originate from \cite{Bou04} and significantly  extended later  in \cite{BK05}. Define 
\begin{equation}
  \mcS_0=\left\{ \frac{5}{4}rN:\ r\in\Z^d,|r|\leq \frac{4N_1}{5N} \right\}. 
\end{equation}
Now,  slightly adjust the elements of $\mcS_0$  near the boundary of $\Lambda_{N_1}$ 
 so that  each pair of  overlapped $N$-blocks with centers belonging to $\mathcal S$ (the adjustment of $\mathcal S_0$)  still has  a size of at least $\frac{N}{2}$.  
 Moreover, we have 
\begin{equation}
  \Lambda_{N_1}= \bigcup_{ k \in \mcS}\Lambda_{N}(k), 
\end{equation}
and  for $k, k'\in \mcS$,
\begin{align}
 &\label{disjoint far away} {\rm either \ } \Lambda_{N}(k)\cap \Lambda_{N}(k')\neq \emptyset \ {\rm or \ } \dist(\Lambda_N(k),\Lambda_N(k'))\geq \frac{N}{4},\\
\label{seperate cover}
  &\dist(k', \Lambda_{N}(k))\geq \frac{N}{5} \ {\rm if } \ k\neq k'. 
\end{align}
Indeed, if $\Lambda_{N}(k)\cap \Lambda_{N}(k')\neq \emptyset$, the size of their overlap is larger than $\frac{N}{2}$, which ensures \eqref{good block nhd}.
As  the adjustment from $\mcS_0$ to $\mcS$ only happens near the boundary of $\Lambda_{N_1}$  and is small,  we still use $\frac{5}{4}rN,r\in \Z^d\cap[-\frac{4N_1}{5N},\frac{4N_1}{5N}]^d$ to label the points in $\mcS$ for simplicity. {\it Now,  we want to ensure that  any two disjoint bad $N$-blocks (contained in $\Lambda_{N_1}$)  centering at $\mcS$ are not all bad, which requires removing more $\varepsilon$}. Indeed, for fixed $k,k'\in \mcS$, if $\Lambda_N(k)\cap \Lambda_N(k')=\emptyset$, then \eqref{disjoint far away} guarantees  that $S_{-k}\Omega_N(E)$ and $S_{-k'}\Omega_N(E)$ are independent. Hence
\begin{align*}
  \P ({ \rm Both  } \ \Lambda_N(k),\Lambda_N(k') \ {\rm are\ bad } )& \leq \P(\Omega_N(E))^2.  
\end{align*}
Considering  all possible $k,k'$, we obtain an event  $A_1$ with 
\begin{equation}\label{Prob removing 1}
  \P(A_1)\geq 1-(\# \mcS)^2\cdot \P(\Omega_N(E))^2, 
\end{equation}
such that for $\varepsilon\in A_1$ the following holds true:  There is some $r_0\in \Z^d$ such that if $k\in \mcS$ statisfies $\Lambda_{N}(k)\cap\Lambda_{10N}(\frac{5}{4}r_0 N)=\emptyset$, then $\Lambda_N(k)$ is good.  As we have already trimmed  the event and by \eqref{seperate cover},     $A_1$  depends only on $(\varepsilon_j)_{j\in \Z^d \setminus \mcS}$.\ \\
Next,  we will apply Theorem \ref{Green function estimates, continuous version}  at the scale $N$ to remove another probabilistic event as follows. For any $r\in \Z^d\cap[-\frac{4N_1}{5N},\frac{4N_1}{5N}]^d$, denote $J_r=\Lambda_N(\frac{5}{4}rN)$ for $\frac{5}{4}rN\in\mcS$. For any block $\mcI\subset \Z^d\cap[-\frac{4N_1}{5N},\frac{4N_1}{5N}]^d $ of  size $\frac{1}{3}(\log N)^2$, consider the event of 
\begin{equation}\label{in single I}
  {For \ all } \ r\in \mcI, \ G_{J_r}' \ { is \  bad  \ for}\    {\rm Theorem} \ \ref{Green function estimates, continuous version}.
\end{equation}
Then we know that there are $(\frac{1}{3}(\log N)^2)^d$ many mutually disjoint $J_r$ for $r\in \mcI$. 
\begin{figure}[htbp]
  \centering

\tikzset{every picture/.style={line width=0.15pt}} 

\begin{tikzpicture}[x=0.75pt,y=0.75pt,yscale=-0.5,xscale=0.5]

\draw  [fill={rgb, 255:red, 80; green, 227; blue, 194 }  ,fill opacity=1 ] (45.36,86.93) -- (202.07,86.93) -- (202.07,243.64) -- (45.36,243.64) -- cycle ;
\draw  [fill={rgb, 255:red, 80; green, 227; blue, 194 }  ,fill opacity=1 ] (249.36,86.93) -- (406.07,86.93) -- (406.07,243.64) -- (249.36,243.64) -- cycle ;
\draw  [fill={rgb, 255:red, 80; green, 227; blue, 194 }  ,fill opacity=1 ] (453.36,86.93) -- (610.07,86.93) -- (610.07,243.64) -- (453.36,243.64) -- cycle ;
\draw   (147.36,86.93) -- (304.07,86.93) -- (304.07,243.64) -- (147.36,243.64) -- cycle ;
\draw   (351.36,86.93) -- (508.07,86.93) -- (508.07,243.64) -- (351.36,243.64) -- cycle ;
\draw    (122.71,167.29) -- (20.71,167.29) ;
\draw [shift={(20.71,167.29)}, rotate = 225] [color={rgb, 255:red, 0; green, 0; blue, 0 }  ][line width=0.75]    (-5.59,0) -- (5.59,0)(0,5.59) -- (0,-5.59)   ;
\draw [shift={(122.71,167.29)}, rotate = 225] [color={rgb, 255:red, 0; green, 0; blue, 0 }  ][line width=0.75]    (-5.59,0) -- (5.59,0)(0,5.59) -- (0,-5.59)   ;
\draw    (224.71,167.29) -- (122.71,167.29) ;
\draw [shift={(122.71,167.29)}, rotate = 225] [color={rgb, 255:red, 0; green, 0; blue, 0 }  ][line width=0.75]    (-5.59,0) -- (5.59,0)(0,5.59) -- (0,-5.59)   ;
\draw [shift={(224.71,167.29)}, rotate = 225] [color={rgb, 255:red, 0; green, 0; blue, 0 }  ][line width=0.75]    (-5.59,0) -- (5.59,0)(0,5.59) -- (0,-5.59)   ;
\draw    (326.71,167.29) -- (224.71,167.29) ;
\draw [shift={(224.71,167.29)}, rotate = 225] [color={rgb, 255:red, 0; green, 0; blue, 0 }  ][line width=0.75]    (-5.59,0) -- (5.59,0)(0,5.59) -- (0,-5.59)   ;
\draw [shift={(326.71,167.29)}, rotate = 225] [color={rgb, 255:red, 0; green, 0; blue, 0 }  ][line width=0.75]    (-5.59,0) -- (5.59,0)(0,5.59) -- (0,-5.59)   ;
\draw    (428.71,167.29) -- (326.71,167.29) ;
\draw [shift={(428.71,167.29)}, rotate = 225] [color={rgb, 255:red, 0; green, 0; blue, 0 }  ][line width=0.75]    (-5.59,0) -- (5.59,0)(0,5.59) -- (0,-5.59)   ;
\draw    (530.71,167.29) -- (428.71,167.29) ;
\draw [shift={(530.71,167.29)}, rotate = 225] [color={rgb, 255:red, 0; green, 0; blue, 0 }  ][line width=0.75]    (-5.59,0) -- (5.59,0)(0,5.59) -- (0,-5.59)   ;
\draw    (632.71,167.29) -- (530.71,167.29) ;
\draw [shift={(632.71,167.29)}, rotate = 225] [color={rgb, 255:red, 0; green, 0; blue, 0 }  ][line width=0.75]    (-5.59,0) -- (5.59,0)(0,5.59) -- (0,-5.59)   ;

\draw (625,194) node [anchor=north west][inner sep=0.75pt]   [align=left] {$\mcI$};

\end{tikzpicture}

\caption{Disjoint $J_r,r\in \mcI$ along a line.}
\end{figure}
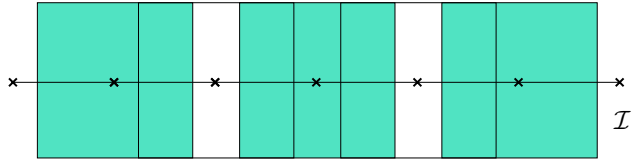\ \\ 
Actualy, \eqref{disjoint far away} shows that the $\frac{N}{10}$-neighborhood of those disjoint $J_r$ are still disjoint. This together with the independence  implies  
\begin{align}
  \P(\eqref{in single I}) & \leq \P({\rm There \ are} \ (\frac{1}{3}(\log N)^2)^d {\rm \ many \ disjoint} \ J_r {\rm \ fail\ for\ Theorem \ \ref{Green function estimates, continuous version}}.) \\
 \notag     &\leq \P(\Omega'_N(E))^{\frac{1}{9}(\log N)^{2d}} \leq (\frac{1}{50})^{\frac{1}{3^d}(\log N)^{2d}}. 
\end{align}
Counting  in all possible $\mcI$ (which is $\lesssim (\frac{4N_1}{5N})^d$ many) gives an  event $A_2$ (still only depends on $(\varepsilon_j)_{j\in \Z^d \setminus \mcS}$) with 
\begin{align}\label{removing prob 2}
  \P(A_2)& \lesssim (\frac{N_1}{N})^d (\frac{1}{50})^{\frac{1}{3^d}(\log N)^{2d}} \ll e^{-(\log N)^2}. 
\end{align}
We take 
\begin{equation*}
  A_3=A_1\setminus A_2. 
\end{equation*}
Then  $A_3$ only depends on $(\varepsilon_j)_{j\in \Z^d \setminus\mcS}$.  Combining  \eqref{Prob removing 1} and \eqref{removing prob 2}  yields 
\begin{equation}\label{remove prob 3}
  \P(A_3)\geq 1-(10N_1)^d \cdot e^ {-2c\frac{(\log N)^2}{\log\log N}} - e^{-(\log N)^2}. 
\end{equation}
Denote $\bar{\varepsilon}=(\varepsilon_j)_{j\in \Z^d \setminus\mcS}$. We can make a cylinder decomposition
\begin{equation}
  A_3=\bigcup_{\bar{\varepsilon}\in\proj_{\Z^d\setminus \mcS }A_1}\{\bar{\varepsilon}\}\times \{\pm 1\}^{\mcS}:=\bigcup_{\bar{\varepsilon}\in\proj_{\Z^d \setminus \mcS }A_1} T_{\bvepsilon}. 
\end{equation}
In summary, for  $\varepsilon$ in each cylinder $T_{\bvepsilon}= \{\bar{\varepsilon}\}\times \{\pm 1\}^{\mcS} $, we have shown 
\begin{itemize}
  \item $\exists r_0=r_0(\bvepsilon)$ so that,  if $k\in \mcS,\Lambda_{N}(k)\cap\Lambda_{10N}(\frac{5}{4}r_0 N)=\emptyset$, then $\Lambda_N(k)$ is good;
  \item  for $\forall \mcI$,   $\exists r\in \mcI$ so that,  the extended Green's function
         \begin{equation}\label{3.41}
              G_{J_r}(E;r_j=\varepsilon_j,j \in \Lambda_{\frac{11}{10}N}(\frac{5}{4}rN)\setminus\{\frac{5}{4}rN\};r_j=t_j,{\rm else})\ {\rm for}\ \forall t_j\in [-1,1]
         \end{equation} 
          is good (cf.  Remark \ref{trim 1 remark}, note that we use  $r_j\in[-1, 1]$ to indicate the  possible extension). 
\end{itemize}
Now,  pave $\Z^d\cap[-\frac{4N_1}{5N},\frac{4N_1}{5N}]^d $ with $\mcI$ of size $\frac{1}{3}(\log N)^2$, and pick one $r$ in each $\mcI$ so that,  \eqref{3.41} is good. This   leads to  a subset 
\[\mcR_0\subset \Z^d\cap[-\frac{4N_1}{5N},\frac{4N_1}{5N}]^d. \]
Let 
\begin{equation}\label{mcR construction}
  \mcR=\mcR_0\setminus ([-40,40]^d+r_0)
\end{equation}
and define   the set of {\bf  free sites} to be
\begin{equation}\label{set of free sites}  
  \mcS'=\{\frac{5}{4}rN:\ r\in \mcR \}\subset\mcS. 
\end{equation}
Obviously, both  $\mcR=\mcR(\bvepsilon)$ and $\mcS'=\mcS'(\bvepsilon)$  are  determined in each cylinder $T_{\bvepsilon}$. By above construction and Remark \ref{trim 1 remark},  we have \begin{itemize}
  \item $|r-r_0|>40$ for $r\in \mcR$ and thus,
\begin{equation}\label{mcS' far away}
  \dist(\mcS',\frac{5}{4}r_0 N)> 50 N; 
\end{equation}
  \item One can choose $r_l\in \mcR$ such that $|r_l-r_0|\sim l(\log N)^2$ with $1\leq l\leq \frac{N_1}{N(\log N)^3};$
  \item If $k\in \mcS,\Lambda_{N}(k)\cap\Lambda_{10N}(\frac{5}{4}r_0 N)=\emptyset$, the Green's function 
  \begin{equation}\label{3.45}
    G_{\Lambda_N(k)}(E;\bvepsilon;r_j=\varepsilon_j=\pm 1,j\in \mcS\setminus\mcS';r_j=t_j\in[-1,1],j\in \mcS')
  \end{equation}
  is good for all  possible $\varepsilon_j=\pm 1,j\in \mcS\setminus\mcS'$ and $t_j\in [-1,1]$.
\end{itemize} 
Hence, applying  Lemma \ref{MSA}  for 
\begin{equation}
 \mcA=\bigcup_{\Lambda_N(k)\cap \Lambda_{10N}(\frac{5}{4}r_0 N)=\emptyset } \Lambda_N(k)
\end{equation}
shows 
\begin{equation}
  G_{\mcA}= G_{\mcA}(E;\bvepsilon;r_j=\varepsilon_j=\pm 1,j\in \mcS\setminus\mcS';r_j=t_j\in[-1,1],j\in \mcS')
\end{equation}
satisfies \eqref{annuls L2 norm} and \eqref{annuls off-diagonal decay}.\\
Now fix $\widehat{\varepsilon}=(\varepsilon_j)_{j\in \mcS\setminus\mcS'}$ and further decompose
\[T_{\varepsilon}=\bigcup_{\hatepsilon\in \{\pm 1\}^{\mcS\setminus\mcS'}} \{(\bvepsilon,\hatepsilon)\}\times \{\pm 1\}^{\mcS'}  =\bigcup_{\hatepsilon\in \{\pm 1\}^{\mcS\setminus\mcS'}} T_{(\bvepsilon,\hatepsilon)}. \]
In each cylinder $T_{(\bvepsilon,\hatepsilon)}$,  since the self-adjoint operator 
\begin{equation}\label{3.47}
  H_{\Lambda_{N_1}}(\bvepsilon,\hatepsilon;r_j=t_j\in[-1,1],j\in \mcS')
\end{equation}
 analytically  depends on $t=(t_j)_{j\in \mcS'}\in [-1,1]^{\mcS'}$. The Kato-Rellich theorem enables  us to obtain the  continuous parameterizations of  the eigenvalue class of \eqref{3.47} as
\begin{equation}\label{eigenvalue class}
 \{E_{\tau}(t)\}_{\tau\in\Lambda_{N_1}},\ t\in [-1,1]^{\mcS'},
\end{equation}
where $E_{\tau}(t)$ is $C^1$ in each  $t_j\in[-1, 1]$ ($j\in\mcS'$). 
Denote by  $\xi_{\tau}(t)$ the corresponding normalized  eigenfunction of $E_{\tau}(t)$. Take one  $\mcE(t)\in \{E_{\tau}(t)\}$ with eigenfunction $\xi(t)=\{\xi_n(t)\}$. By first order eigenvalue variation formula and \eqref{alloy potential}, we obtain 
\begin{equation}\label{first order variation}
  \partial_{t_j}\mcE(t)=\langle \xi(t),\partial_{t_j}D_{N_1}(t) \xi(t)\rangle=\lambda\sum_{n\in \Lambda_{N_1}}2^{-|n-j|}|\xi(t)_n|^2
\end{equation}
and thus (by the  mean value theorem),
\begin{align}\label{mean value theorem}
  |\mcE(t)-\mcE(t')| & = |\langle(t-t'),\nabla \mcE(t+s(t'-t))\rangle| \ (0< s< 1) \\
  \notag  &\leq |t-t'|_{\infty}\cdot \sup_{t''=t+s(t'-t) \atop 0<s<1 } \left\{   \lambda \sum_{j\in \mcS' \atop n\in \Z^d} 2^{-|n-j|}\cdot |\xi(t'')_n|^2    \right\}. 
\end{align}
Now, assume additionally
\begin{equation}\label{energy closeness}
  |\mcE(t)-E|\leq N^{10}e^{-\gamma_0 N}. 
\end{equation}
Under $ N^{10}e^{-\gamma_0 N}$-perturbation,  using similar estimates in  \eqref{weak independence perturbation}$\sim$\eqref{weak independence off-diagonal decay} ensures  that the good estimates \eqref{annuls L2 norm} and \eqref{annuls off-diagonal decay} for 
\[G_{\mcA}(\mcE(t)) =G_{\mcA}(\mcE(t);\bvepsilon;r_j=\varepsilon_j=\pm 1,j\in \mcS\setminus\mcS';r_j=t_j\in[-1,1],j\in \mcS')\] 
are  essentially preserved (since $\gamma_0> \frac{1}{10}\cdot \frac{4}{5}\gamma_N$). Applying then  Poisson's formula  gives  
\begin{equation}\label{Poisson's formula}
  \xi(t) =-\left(G_{\mcA}(\mcE(t))\oplus G_{\Lambda_{N_1}}(\mcE(t))\right)\Gamma \xi(t). 
\end{equation}
 Therefore, we have the following cases: 
\begin{itemize}
  \item if $\dist(n,\frac{5}{4}r_0 N)<15N$, we have $|\xi_n|\leq \|\xi\|= 1;$
  \item if $\dist(n,\frac{5}{4}r_0 N)\geq 15N$, then 
  \begin{align*}
  n\in \mcA \ {\rm and} \ \dist(n,\Lambda_{N_1}\setminus\mcA)\geq \frac{4}{15} \dist(n,\frac{5}{4}r_0N) \geq  4N.\end{align*}
  In this case,  from \eqref{Poisson's formula}, \eqref{annuls L2 norm} and \eqref{annuls off-diagonal decay}, it follows that 
        \begin{align}
          |\xi_n|& \leq \sum_{\omega\in \mcA \atop \omega'\in \Lambda_{N_1}\setminus\mcA}|G_{\mcA}(n,\omega)|e^{-c|\omega-\omega'|}|\xi_{\omega'}| \\
               \notag &\lesssim N^d e^{2N^{\frac{9}{10}}-\frac{4}{5}\gamma_N |n-\omega'|} \ ({\rm for \ some } \ \omega'\notin \mcA) \\
               \notag &< e^{-\frac{3}{4}\gamma_N \dist(n,\partial_-\mcA)} \leq e^{-\frac{1}{5}\gamma_N \dist(n,\frac{5}{4}r_0 N)}. 
        \end{align}
\end{itemize}
Summarizing the  above estimates  concludes 
\begin{equation}\label{eigenfunction decay}
  |\xi(t)_n|\leq e^{-\frac{1}{5}\gamma_N (\dist(n,\frac{5}{4}r_0N)-15N)}\ {\rm for}\ \forall n. 
\end{equation}
Recalling  \eqref{mcS' far away},  if $\dist(n,\mcS')\leq N$, then  $\dist(n,\frac{5}{4}r_0 N)\geq 49N $ and thus by \eqref{eigenfunction decay},  
\begin{equation}
  |\xi(t)_n|\leq e^{ -6\gamma_N N}. 
\end{equation}
Using  \eqref{mean value theorem}, we have 
\begin{align}\label{3.57}
    \lambda \sum_{j\in \mcS' \atop n\in \Z^d} 2^{-|n-j|}\cdot |\xi(t)_n|^2  &\leq   \sum_{\dist(n,\mcS')\geq N }\sum_{j\in \mcS'} 2^{-|n-j|} +\sum_{\dist(n,\mcS')<N}\sum_{j\in\mcS'} 2^{-|n-j|}e^{-6\gamma_N N}\\
        \notag & \lesssim N^C 2^{-N}+ N^C e^{-6\gamma_N N} < e^{-5\gamma_N N}
\end{align} 
as long as $\mcE(t)$ satisfies \eqref{energy closeness}.\\
Now assume 
\begin{equation}\label{min close}
  \min_{t\in [-1,1]^{\mcS'}}|\mcE(t)-E|\leq e^{-\gamma_0 N}
\end{equation}
and the minimum attains at $t_0$. Pave $[-1,1]^{\mcS'}$ by $e^{-N^{1+}}$-size cubes and assume $t_0\in B$ ($B$ is $e^{-N^{1+}}$-size). Obviously,  one gets for all $t$,  
\begin{align}
    \lambda \sum_{j\in \mcS' \atop n\in \Z^d} 2^{-|n-j|}\cdot |\xi(t)_n|^2  &\leq    \sum_{j\in \mcS' \atop n\in \Z^d} 2^{-|n-j|} \lesssim \# \mcS'\lesssim N_1^d. 
\end{align} 
Thus, for all $t$ in cube $B$, we have 
\begin{align}
  |\mcE(t)-E|\leq |\mcE(t_0)-E|+|\mcE(t_0)-\mcE(t)|\leq e^{-\gamma_0 N}+C N_1^d e^{-N^{1+}}\ll N^{10} e^{-\gamma_N N}
\end{align}
satisfying \eqref{energy closeness}, and hence \eqref{3.57} holds. This gives us a better estimate that 
\begin{align}\label{3.61}
  |\mcE(t)-E|\leq |\mcE(t_0)-E|+|\mcE(t_0)-\mcE(t)|\leq e^{-\gamma_0 N}+ e^{-5\gamma_N N}|t-t_0|. 
\end{align}
We can propagate the above estimate  iteratively over the entire $[-1,1]^{\mcS'}$ via the $e^{-N^{1+}}$-scale covering, and finally get 
\[ |\mcE(t)-\mcE(t_0)|\leq e^{-\gamma_0 N}+ e^{-5\gamma_N N}|t-t_0| \leq 2e^{-\gamma_0 N} \ {\rm for}\ \forall t\in [-1,1]^{\mcS'}.\]
This indicates that  \eqref{min close} can imply  
\begin{equation}\label{max close}
    \max_{t\in [-1,1]^{\mcS'}}|\mcE(t)-E|\leq 2e^{-\gamma_0 N}. 
\end{equation}\ \\
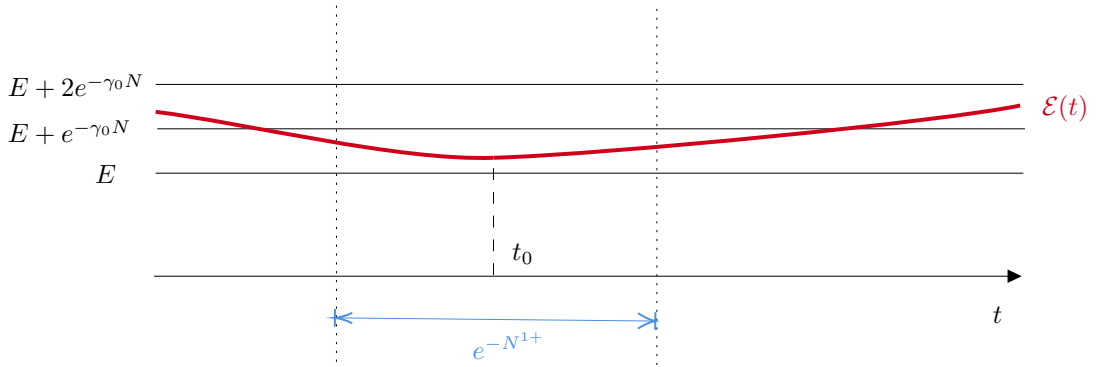
\begin{figure}[htbp]
  \centering

\tikzset{every picture/.style={line width=0.3pt}} 

\begin{tikzpicture}[x=0.75pt,y=0.75pt,yscale=-0.8,xscale=0.8]

\draw    (58.71,207) -- (602.71,207) ;
\draw [shift={(605.71,207)}, rotate = 180] [fill={rgb, 255:red, 0; green, 0; blue, 0 }  ][line width=0.08]  [draw opacity=0] (8.93,-4.29) -- (0,0) -- (8.93,4.29) -- cycle    ;
\draw    (59.71,142) -- (606.71,142) ;
\draw    (60.71,114) -- (607.71,114) ;
\draw  [dash pattern={on 0.84pt off 2.51pt}]  (173.71,36.29) -- (173.71,263.29) ;
\draw  [dash pattern={on 0.84pt off 2.51pt}]  (375.71,38.29) -- (375.71,265.29) ;
\draw    (59.71,86) -- (606.71,86) ;
\draw [color={rgb, 255:red, 74; green, 144; blue, 226 }  ,draw opacity=1 ]   (172.71,233.29) -- (374.71,235.29) ;
\draw [shift={(374.71,235.29)}, rotate = 180.57] [color={rgb, 255:red, 74; green, 144; blue, 226 }  ,draw opacity=1 ][line width=0.75]    (0,5.59) -- (0,-5.59)(10.93,-4.9) .. controls (6.95,-2.3) and (3.31,-0.67) .. (0,0) .. controls (3.31,0.67) and (6.95,2.3) .. (10.93,4.9)   ;
\draw [shift={(172.71,233.29)}, rotate = 0.57] [color={rgb, 255:red, 74; green, 144; blue, 226 }  ,draw opacity=1 ][line width=0.75]    (0,5.59) -- (0,-5.59)(10.93,-3.29) .. controls (6.95,-1.4) and (3.31,-0.3) .. (0,0) .. controls (3.31,0.3) and (6.95,1.4) .. (10.93,3.29)   ;
\draw  [dash pattern={on 4.5pt off 4.5pt}]  (272.71,206.29) -- (272.71,132.29) ;
\draw [color={rgb, 255:red, 208; green, 2; blue, 27 }  ,draw opacity=1 ][line width=1.5]    (59.71,103.29) .. controls (116.71,110.29) and (211.71,134.29) .. (272.71,132.29) ;
\draw [color={rgb, 255:red, 208; green, 2; blue, 27 }  ,draw opacity=1 ][line width=1.5]    (272.71,132.29) .. controls (343.71,130.29) and (557.71,109.29) .. (604.71,99.29) ;

\draw (586,225) node [anchor=north west][inner sep=0.75pt]   [align=left] {$t$};
\draw (20,135) node [anchor=north west][inner sep=0.75pt]   [align=left] {$E$};
\draw (258,241) node [anchor=north west][inner sep=0.75pt]  [color={rgb, 255:red, 74; green, 144; blue, 226 }  ,opacity=1 ] [align=left] {$e^{-N^{1+}}$};
\draw (-35,107) node [anchor=north west][inner sep=0.75pt]   [align=left] {$E+e^{-\gamma_0 N}$};
\draw (-35,78) node [anchor=north west][inner sep=0.75pt]   [align=left] {$E+2e^{-\gamma_0 N}$};
\draw (283,185) node [anchor=north west][inner sep=0.75pt]   [align=left] {$t_0$};
\draw (617,91) node [anchor=north west][inner sep=0.75pt]  [color={rgb, 255:red, 208; green, 2; blue, 27 }  ,opacity=1 ] [align=left] {$\mcE(t)$};

\end{tikzpicture}

\caption{The distance between $\mcE(t)$ and $E$.}
\end{figure}
$\qquad$\\
Now for all $t$, $\mcE(t)$ statisfies \eqref{energy closeness} and thus \eqref{eigenfunction decay} also holds uniformly about $t$. Recall that we have chosen  
\[r_l\in \mcR,\ |r_l-r_0|\sim l(\log N)^2,\ 1\leq l\leq \frac{N_1}{N(\log N)^3}.\]
Again,  fixing any  variables  $r_j=\varepsilon_j,j\in \mcS'\setminus\{\frac{5}{4}r_l N\}$, we  define the Boolean function \begin{equation*}
  f(\varepsilon'_l \big| 1\leq l\leq \frac{N_1}{N(\log N)^3} ) = \mcE((\bvepsilon,\hatepsilon);\varepsilon_j,j\in \mcS'\setminus\{\frac{5}{4}r_l N\};t_{\frac{5}{4}r_l N}=\varepsilon'_l). 
\end{equation*}
Consider the $l$-influence: 
\begin{align}\label{3.64}
  I_l&=f \big|^{\varepsilon'_l=1}_{\varepsilon'_l=-1} \\
  \notag & =\int_{-1}^{1}\partial_{t_{\frac{5}{4}r_l N}}\mcE(t_{\frac{5}{4}r_l N}=s)ds \\
  \notag &= 2\lambda\sum_{n\in \Z^d} 2^{-|n-\frac{5}{4}r_l N|}|\xi(t_{\frac{5}{4}r_l N}=s')_n|^2\\
  \notag &\overset{\eqref{eigenfunction decay}}{\lesssim} \sum_{|n-\frac{5}{4}r_l N|\geq \frac{1}{2}|r_l-r_0| N} 2^{-|n-\frac{5}{4}r_l N|}+\sum_{|n-\frac{5}{4}r_l N|< \frac{1}{2}|r_l-r_0| N} e^{-\frac{1}{5}\gamma_N (|n-\frac{5}{4}r_0N|-15N)}. 
\end{align}
Notice that $|n-\frac{5}{4}r_l N|< \frac{1}{2}|r_l-r_0| N$ ensures $|n-\frac{5}{4}r_0 N|\geq \frac{3}{4}|r_0-r_l|N\gg 15 N$. Thus, \eqref{3.64} becomes 
\begin{align}\label{influence upperbound}
  I_l& \leq \sum_{|n-\frac{5}{4}r_l N|\geq \frac{1}{2}|r_l-r_0| N} 2^{-|n-\frac{5}{4}r_l N|}+\sum_{|n-\frac{5}{4}r_l N|< \frac{1}{2}|r_l-r_0| N} \exp\{-\frac{1}{10}\gamma_N |r_l-r_0| N\} \\
  \notag &\leq e^{-\frac{1}{20}\gamma_N |r_0-r_l| N}< e^{-l\cdot C_1 \gamma_N (\log N)^2  N } :=b^{-l}, \ C_1>0. 
\end{align}
This gives an upper bound on  $l$-influence.  
Moreover, \eqref{eigenfunction decay} also tells us that 
\begin{align*}
  \sup_{t\in [-1,1]^{\mcS'}}\sum_{|n-\frac{5}{4}r_0 N|>20N}|\xi(t)_n|^2 & \lesssim \sum_{k\geq 20 N} k^{d-1}e^{-\frac{1}{5}\gamma_N(k-15N)} < e^{-\frac{1}{2}\gamma_N N},
\end{align*}
which implies  the concentration bound
\begin{equation}\label{concentration of eigenfunction}
  \inf_{t\in[-1,1]^{\mcS'}}\sum_{|n-\frac{5}{4}r_0 N|\leq 20N} |\xi(t)_n|^2 >\frac{1}{2}. 
\end{equation}
Thus,
\begin{align}\label{influence lowerbound}
  I_l&= 2\lambda\sum_{n\in \Z^d} 2^{-|n-\frac{5}{4}r_l N|}|\xi(t_{\frac{5}{4}r_l N}=s')_n|^2\\
  \notag &\geq {\lambda}\min_{|n-\frac{5}{4}r_0 N|\leq 20N} 2^{-|n-\frac{5}{4}r_l N|}\\
  \notag &\geq \frac{\lambda}{2} e^{-\frac{5}{4}|r_0-r_1|N-20N}\\
  \notag &>e^{-C_2l(\log N)^2 N}:= a^{-l},\ C_2>0. 
\end{align}
This gives a lower bound on  $l$-influence. Summarize the above  estimates as 
\[a^{-l}<I_l <b^{-l},\ 1\leq l\leq \frac{N_1}{N(\log N)^3}=m, \]
\[a=\exp\{C_2(\log N)^2 N\}, \  b=\exp\{C_1 \gamma_N (\log N)^2 N\}.\]
This can allow  us to apply the remarkable {\bf distributional inequality} of Bourgain \cite{Bou04}. More precisely, we have 
\begin{lem}[\cite{Bou04}, Lemma 2.1 and its Remark ]\label{dislem}
Let $f(\varepsilon_1,\cdots,\varepsilon_m)$ be a bounded function on $\{\pm 1\}^m$ and  denote $I_j=f|_{\varepsilon_j=1}-f|_{\varepsilon_j=-1}$ as the $j$-influence, which is a function of $\varepsilon_{j'}, j'\neq j$. Let $2<b<a, \frac{\log a}{\log b}\lesssim 1$ and 
${a^{-j}}\leq I_j\leq b^{-j},\ 1\leq j\leq m.$
Then for $\kappa>a^{-m},$
$$\sup_{E\in\R}{\mathbb P}(\varepsilon\in\{\pm 1\}^m:\ |f(\varepsilon)-E|<\kappa)<e^{-c(\log \frac{\log \kappa^{-1}}{\log a})^2},$$
where $c>0$ is some absolute constant.  If in addition, $1<\frac{\log a}{\log b}<K,$ then 
$$\sup_{E\in\R}{\mathbb P}(\varepsilon\in\{\pm 1\}^m:\ |f(\varepsilon)-E|<\kappa)<e^{-c\frac{(\log \frac{\log \kappa^{-1}}{\log a})^2}{\log K}}.$$
\end{lem}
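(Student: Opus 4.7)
The plan is to combine a Sperner/antichain argument at a well-chosen single scale with a dyadic multi-scale iteration which upgrades the naive polynomial loss into the quadratic exponent claimed. I will first reduce the problem and extract the monotone structure of $f$, then carry out the one-scale Sperner step, and finally telescope this step across dyadic scales.

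First, by replacing $f$ with $f - E$, it suffices to bound $\mathbb P(|f|<\kappa)$ uniformly in $E$. Set $L = \log \kappa^{-1}/\log a$. The hypothesis $I_j \geq a^{-j}>0$ implies that $f$ is strictly increasing in each coordinate with respect to the natural partial order $\varepsilon\leq \varepsilon'$ on $\{\pm 1\}^m$; in particular, for any chain $\varepsilon<\varepsilon'$ the increment $f(\varepsilon')-f(\varepsilon)$ decomposes as a sum of positive evaluations of $I_j$ over the flipped coordinates, bounded below by $\sum_{j\in S}a^{-j}$ and above by $\sum_{j\in S}b^{-j}$ with $S=\{j:\varepsilon_j\neq\varepsilon'_j\}$.

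Second, the basic Sperner step. Pick $J_0=\lceil \log(4\kappa)^{-1}/\log a\rceil$ so that $a^{-J_0}>4\kappa$. For any frozen assignment of the high coordinates $\varepsilon_{J_0+1},\dots,\varepsilon_m$ and any two comparable $\omega<\omega'$ in $\{\pm 1\}^{J_0}$, the monotonicity together with $\min_{j\leq J_0}I_j\geq a^{-J_0}>4\kappa$ forces $f(\omega',\cdot)-f(\omega,\cdot)>4\kappa$, which is incompatible with both slices landing in the $\kappa$-window $\{|f|<\kappa\}$. Hence the event, restricted to the fiber, is an antichain in the cube of dimension $J_0$, and Sperner's lemma gives a fiber-section bound of $\binom{J_0}{\lfloor J_0/2\rfloor}\sim 2^{J_0}/\sqrt{J_0}$. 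Averaging over the frozen coordinates yields the naive estimate $\mathbb P(|f|<\kappa)\lesssim 1/\sqrt{J_0}\sim 1/\sqrt{L}$, far weaker than the target $e^{-c(\log L)^2}$.

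Third, I would upgrade this by iterating over a dyadic tower of scales $J_k\sim 2^k J_0$, running up to $k_{\max}\sim\log_2 L$ (which is admissible because $\kappa>a^{-m}$ forces $L\leq m$). At scale $k$ the relevant coordinates lie in the band $(J_{k-1},J_k]$, where the influences satisfy $a^{-J_k}\leq I_j\leq b^{-J_{k-1}}$; the upper bound $b^{-J_{k-1}}$ is, for $k\ll k_{\max}$, negligible compared to $\kappa$, so the contribution from larger scales is an acceptable perturbation of the antichain condition at the current scale, while the lower bound $a^{-J_k}$ still dominates $\kappa/2$ rescaled appropriately. Consequently, at every dyadic level one obtains an additional antichain-type constraint giving a Sperner factor bounded by $c(2^k)^{-1/2}$. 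Multiplying over scales collects an exponent $\sum_{k=0}^{k_{\max}}\frac{k\log 2}{2}\sim (\log L)^2$, producing the bound $\mathbb P(|f|<\kappa)\leq e^{-c(\log L)^2}$. Under the sharper assumption $1<\log a/\log b<K$, the slack between the upper and lower influence bounds at each scale is quantified by $K$, which propagates as a $1/\log K$ loss in the final exponent.

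The main obstacle is making the multi-scale conditioning rigorous: at each scale one must prescribe which coordinates are "active", and then show that after freezing the other scales the slice still behaves as an antichain in a suitable sub-cube. The upper bound $I_j\leq b^{-j}$ enters crucially here to control cross-scale leakage, and getting the dyadic truncation points placed so that the telescoping sum actually reaches $\Theta((\log L)^2)$ with the right absolute constants is the delicate bookkeeping. Modulo these quantitative checks, the skeleton is a clean iteration of Sperner's lemma whose combinatorial signature is precisely the quadratic exponent $(\log L)^2$.
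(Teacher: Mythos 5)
The paper does not actually prove this lemma: it is imported verbatim from Bourgain \cite{Bou04} (Lemma 2.1 and its Remark) and used as a black box, the only hint given being that Bourgain's proof rests on Sperner's lemma. So your proposal must be measured against Bourgain's original argument. Your skeleton does match its architecture: monotonicity of $f$ from $I_j>0$, a single-scale antichain/Sperner bound of order $|B|^{-1/2}$ on a block $B$ of coordinates whose minimal influence exceeds $2\kappa$, an accumulation of such factors over roughly $\log L$ geometrically growing blocks to reach the exponent $(\log L)^2$ (your count $\sum_k k\sim(\log L)^2$ is the right one, and $\kappa>a^{-m}$ indeed guarantees $L\le m$), and the ratio $\log a/\log b$ governing the admissible block growth and hence the $1/\log K$ loss.

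However, the step you defer as ``delicate bookkeeping'' is the entire content of the lemma, and as written it does not go through. Concretely: (i) you cannot simply ``multiply over scales.'' The per-scale Sperner argument yields $\mathbb{P}(A\mid\mathcal{F}_{B_s^c})\lesssim|B_s|^{-1/2}$ for each $s$, but these are conditional bounds with respect to different $\sigma$-algebras, and their product is not an upper bound for $\mathbb{P}(A)$; a set that is an antichain in every block separately (e.g.\ a diagonal-type set in $\{0,1\}^{B_1}\times\{0,1\}^{B_2}$) can have density as large as the single worst block bound. Bourgain needs a genuine additional mechanism here — an induction on the number of scales carried out through symmetric chain decompositions, with the target value $E$ re-centered chain by chain and the upper influence bounds $I_j\le b^{-j}$ used to control the drift of $f$ along a chain relative to its gaps — and nothing in your sketch supplies a substitute. (ii) Your decoupling claim is quantitatively backwards: since $b<a$, one has $b^{-J_{k-1}}\ge a^{-J_{k-1}}\ge a^{-L}=\kappa$ for every relevant scale, so the influence ceiling of the coarser coordinates is never ``negligible compared to $\kappa$''; likewise the total influence $\sum_{j>J_k}b^{-j}\sim b^{-J_k}$ of the finer coordinates exceeds the minimal gap $a^{-J_k}$ of block $k$, so the scales do not decouple in the naive sense either. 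As it stands the proposal rigorously establishes only the single-scale bound $\mathbb{P}(|f-E|<\kappa)\lesssim L^{-1/2}$, not the claimed $e^{-c(\log L)^2}$.
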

\begin{rmk}
In the proof of this lemma, Bourgain used the Sperner's lemma. 
\end{rmk}
Now applying  Lemma \ref{dislem}  with  
\[\kappa=e^{-N\cdot N_1^{\frac{1}{10}}}=e^{-N_1^{\frac{17}{20}}}>a^{-m}=e^{-C_2\frac{N_1}{(\log N)}} \]
\[\frac{\log a}{\log b}=\frac{C_1}{C_2 \gamma_N}<C_3\gamma_0^{-1}:= K,  \]
concludes 
\begin{align}\label{prob for single eigenfunction}
  \P_{(\varepsilon'_l)}(|f-E|<e^{-N\cdot N_1^{\frac{1}{10}}}) & < e^{-C_4\frac{(\log \frac{\log \delta^{-1}} {\log a})^2}{\log K}}<e^{-C_5 (\log \frac{1}{\gamma_0})^{-1} (\log N)^2},
\end{align}
where $C_1,C_2,C_3,C_4,C_5$ are positive absolute constants. So, \eqref{prob for single eigenfunction} is the probabilistic estimate for one single parameterized function satisfying \eqref{max close}. Considering all possible $\mcE(t)\in \{E_{\tau}(t)\}$ satisfying \eqref{max close}, which is at most $C N_1^d$ many, one can remove a set of  probability
\begin{equation}\label{3.69}
  \P_{(\varepsilon'_l)}(\cdots)\lesssim N_1^d e^{-C_5 (\log \frac{1}{\gamma_0})^{-1} (\log N)^2 }<e^{-C (\log \frac{1}{\gamma_0})^{-1} (\log N)^2 }
\end{equation}
in each cylinder 
\[T_{(\bvepsilon,\hatepsilon),(\varepsilon_j)_{j\in \mcS'\setminus \{\frac{5}{4}r_l N\}}}=\{(\bvepsilon,\hatepsilon),(\varepsilon_j)_{j\in \mcS'\setminus \{\frac{5}{4}r_l N\}}\}\times\{\pm 1\}^{\{\frac{5}{4}r_l N\}}.\]
And, for $(\varepsilon'_l)$  not in the above  set, we have 
\begin{equation}
  \dist(\sigma(H_{N_1}),E)\geq e^{-N\cdot N_1^{\frac{1}{10}}}. 
\end{equation}
Since  $\hatepsilon, (\varepsilon_j)_{j\in \mcS'\setminus \{\frac{5}{4}r_l N\}}$ can be arbitrarily chosen, we in fact remove an event of  probability (on $(\varepsilon_j)_{j\in \mcS}$)  satisfying  \eqref{3.69} in each cylinder $T_{\bvepsilon}$. Finally, by taking account of  all $T_{\bvepsilon}\subset A_3$, removing  the above events   allows us to obtain a subset $\Omega_{N_1}(E)^c\subset A_3$ with 
\begin{equation}\label{3.71}
  \P(\dist(\sigma(H_{N_1}),E) \geq e^{-N\cdot N_1^{\frac{1}{10}}})\geq \P(\Omega_{N_1}(E)^c)\geq \P(A_3)\cdot(1-e^{-C (\log \frac{1}{\gamma_0})^{-1} (\log N)^2 }). 
\end{equation} 
Now for $\varepsilon\in \Omega_{N_1}(E)^c$, not only 
\begin{equation*}
  \| G_{N_1}(E;\varepsilon)\|\leq e^{N\cdot N_1^{\frac{1}{10}}}\ll e^{N_1^{\frac{9}{10}}}
\end{equation*}
is ensured, but also the Green's function of  \eqref{3.45} is good. This exactly enables us to apply Lemma \ref{MSA} together with Remark \ref{off diagonal only good annuls} to obtain 
\begin{equation}
    |G_{N_1}(E;\varepsilon)(n,n')|<e^{-\gamma_{N_1}|n-n'|} \ {\rm for} \ |n-n'|>\frac{N_1}{10}.
\end{equation}
Finally, recalling  \eqref{remove prob 3} and \eqref{3.71}, we have  the Wegner  type estimate 
\begin{align}\label{Wegner estimate}
     \P(\dist(\sigma(H_{N_1}),E)   < e^{-N\cdot N_1^{\frac{1}{10}}}) & \leq \P(\Omega_{N_1}(E)) \\
     \notag &\leq 1- \P(A_3)\cdot(1-e^{-C (\log \frac{1}{\gamma_0})^{-1} (\log N)^2 }) \\
     \notag & \leq (10N_1)^d \cdot e^{-2c\frac{(\log N)^2}{\log\log N}} + e^{-(\log N)^2}+e^{-C (\log \frac{1}{\gamma_0})^{-1} (\log N)^2 } \\
      \notag & \leq (10N_1)^d \cdot e^{-\frac{9}{8}c\frac{(\log N_1)^2}{\log\log N}} + e^{-(\log N)^2}+e^{-C (\log \frac{1}{\gamma_0})^{-1} (\log N)^2 } \\    
     \notag & \leq e^{-c\frac{(\log N_1)^2}{\log\log N_1}}, 
\end{align}
which  finishes  the proof of Theorem \ref{Green function estimates} at scale $N_1$.

\begin{rmk}\label{wegner independence}
Indeed, in the construction of  $\Omega_{N_1}(E)$, we focus   on   the $N$-size  blocks in $\Lambda_{N_1}$ and  random variables with indexes  in $\mcS,\mcS',\{\frac{5}{4}r_l N\} \subset \Lambda_{N_1}$. Thus, the event $\Omega_{N_1}(E)$ only depends on random variables with indexes  in 
\[\Lambda_{\frac{1}{10}N+N_1}\subset \Lambda_{\frac{11}{10}N_1},\]
and \eqref{Wegner estimate} indicates that 
\[\{ \dist(\sigma(H_{N_1}),E)   < e^{-N\cdot N_1^{\frac{1}{10}}} \}\subset \Omega_{N_1}(E).\] 
As a result,  if we define  
\begin{equation}
  \Sigma_N (E)=\{\forall t_j=\pm 1, \dist(\sigma(H_{N_1}(t_j,j\notin \Lambda_{\frac{11}{10}N_1};\varepsilon_j,j\in \Lambda_{\frac{11}{10}N_1})),E)   < e^{-N\cdot N_1^{\frac{1}{10}}}\}
\end{equation}
which is a subset of $\proj_{\Lambda_{\frac{11}{10}N_1}}(\Omega_{N_1}(E))\subset  \{\pm1\} ^{\Lambda_{\frac{11}{10}N_1}}$, it will have the same probability  estimate  bounded by $\P(\Omega_{N_1}(E))$ as that  in  \eqref{Wegner estimate}.
\end{rmk}

\section{Proof of Theorem \ref{Main}: Elimination of the energy}\label{elimation}
In this section, we will prove Theorem \ref{Main} via eliminating energy $E\in [E^*-\delta, E^*]$ appeared in 
Theorem \ref{Green function estimates} and Theorem \ref{Green function estimates, continuous version}. 
 Indeed,  in  Theorems  \ref{Green function estimates}, \ref{Green function estimates, continuous version}, the probabilistic estimates  \eqref{N scale bad event prob} and \eqref{N scale bad event prob, continue version} depend sensitively on $E$.  Once we eliminated the energy variables, the proof of localization follows immediately from the Shnol's theorem (cf. e.g., \cite{Kir08}),  which is based on the generalized eigenvalues (eigenfunctions) arguments. 
 

For fixed  $\delta$ as in  \eqref{initial parameter} and any $\varepsilon$, we say that a $N$-size block $\Lambda$ is $E$-bad, if the Green's function $G_{\Lambda}(E;\varepsilon)$ does not satisfy  \eqref{Green L2 norm} or  \eqref{Green off-diagonal decay}. Our main result of this section is  
\begin{thm}\label{Kirsch prob}
  Under the assumptions of Theorems  \ref{Green function estimates}, \ref{Green function estimates, continuous version} and  assuming  $N\geq N_0\gg1$,  we have for any  $\Lambda_N(k_1)$ and $\Lambda_N(k_2)$ satisfying  
  \begin{equation}\label{disjoint fixed N blocks}
     \dist(\Lambda_{N}(k_1),\Lambda_N(k_2))>\frac{N}{5}
  \end{equation}
 that 
  \begin{equation}\label{uniform prob}
    \P\left(\exists E\in [E^*-\frac{1}{2}\delta,E^*] \ {\rm s.t., \ both } \ \Lambda_N(k_1)\ {\rm and}\  \Lambda_N(k_2)\ {\rm are} \  E{\rm-bad}\right)< e^{-\tilde{c}\frac{(\log N)^2 }{\log\log N}},
  \end{equation}
 where   $\tilde{c}>0$ is some absolute constant.
\end{thm}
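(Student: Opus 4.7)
The plan is to convert the uniform-in-$E$ bound into a union over the polynomially many eigenvalues of one of the restricted Hamiltonians, and to leverage that the two $N$-blocks are far enough apart to be essentially independent.

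First, I would observe that the separation hypothesis $\dist(\Lambda_N(k_1),\Lambda_N(k_2))>N/5$ is equivalent to $|k_1-k_2|_{\infty}>11N/5$, so the inflated supports $\Lambda_{\frac{11}{10}N}(k_1)$ and $\Lambda_{\frac{11}{10}N}(k_2)$ are disjoint. After applying $\trim_1$ from Subsection \ref{trim} to the two (translated) bad events, they depend only on the disjoint coordinate sets $\varepsilon|_{\Lambda_{\frac{11}{10}N}(k_i)}$, and are therefore independent under the Bernoulli product measure.

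Second, I would extract from the inductive construction of $\Omega_N(E)$ in Section \ref{LGFS} the decomposition
\[\Omega_N(E;k)\ \subset\ A(k)^c\ \cup\ \bigl\{\dist(\sigma(H_{\Lambda_N(k)}),E)<e^{-c N^{17/20}}\bigr\},\]
where $A(k)$ is the $E$-independent ``structural'' event (the translate/analogue of $A_3$ at scale $N$, controlling goodness of all sub-blocks at scale $L\sim N^{3/4}$), and both ingredients have probability $\leq e^{-c(\log N)^2/\log\log N}$; the spectral half is exactly Remark \ref{wegner independence} applied at scale $N$. The event in \eqref{uniform prob} then splits into three cases: (i) $\varepsilon\notin A(k_1)\cap A(k_2)$, bounded by $e^{-2c(\log N)^2/\log\log N}$ via independence; (ii) exactly one of $\{\varepsilon\notin A(k_i)\}$, combined with the other block carrying any eigenvalue in $[E^*-\delta,E^*]$, bounded by $\P(A(k_i)^c)\leq e^{-c(\log N)^2/\log\log N}$; and (iii) the spectra of $H_{\Lambda_N(k_1)}$ and $H_{\Lambda_N(k_2)}$ intersect within a $2e^{-cN^{17/20}}$-neighborhood of each other inside $[E^*-\delta,E^*]$. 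For (iii), I would condition on $\varepsilon^{(2)}:=\varepsilon|_{\Lambda_{\frac{11}{10}N}(k_2)}$; then $\sigma(H_{\Lambda_N(k_2)})$ is determined up to an $\mcO(2^{-N/11})$ correction (negligible compared to $e^{-cN^{17/20}}$) and consists of at most $(2N+1)^d$ eigenvalues $\{E_\tau\}$ in $[E^*-\delta,E^*]$. For each $E_\tau$, using the independence of $\varepsilon^{(1)}$ from $\varepsilon^{(2)}$ and Theorem \ref{Green function estimates} applied at $E=E_\tau$, the probability that $\dist(\sigma(H_{\Lambda_N(k_1)}),E_\tau)<2e^{-cN^{17/20}}$ is $\leq e^{-c(\log N)^2/\log\log N}$. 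Union bounding over $\tau$ and integrating produces $(2N+1)^d\cdot e^{-c(\log N)^2/\log\log N}$. Absorbing the polynomial factor into the exponential for $N\geq N_0$ gives \eqref{uniform prob} with $\tilde c$ slightly smaller than $c$.

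The main obstacle is making the decomposition in the second step rigorous: one has to inspect the construction of $\Omega_N(E)$ and isolate an $E$-independent structural event $A(k)$ inside $\Omega_N(E)^c$ whose complement has the claimed probability bound. This is essentially the content of Lemma \ref{MSA} together with the definition of $A_3$ at scale $N$, but verifying that the off-diagonal failure of the Green's function at fixed $E$ cannot happen outside $A(k)^c$ except via spectral proximity (as opposed to some other ``decoupled'' failure mode) requires careful bookkeeping at the MSA level. A minor subtlety is that the alloy-type potential means $\sigma(H_{\Lambda_N(k)})$ depends on all of $\varepsilon$; this is handled by the trimming step, since the tail contribution to the potential is $\mcO(2^{-N/11})\ll e^{-cN^{17/20}}$, smaller than the Wegner tolerance.
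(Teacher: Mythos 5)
Your overall architecture --- independence of the two blocks after trimming, a split into a ``structural'' failure and a ``spectral proximity'' failure, and for the spectral part conditioning on $\varepsilon|_{\Lambda_{\frac{11}{10}N}(k_2)}$ and union-bounding over the $\lesssim N^d$ eigenvalues of $H_{\Lambda_N(k_2)}$ before applying the Wegner-type estimate of Remark \ref{wegner independence} to the other block --- coincides with the paper's treatment of the spectral half. The gap is in the structural half, and it is the gap you yourself flag as ``the main obstacle'': there is no $E$-\emph{independent} structural event $A(k)$ at scales beyond the initial ones. The paper's $A_1$ (hence $A_3$) is defined in terms of which $N$-sub-blocks are \emph{good at the energy $E$}, so it depends on $E$; only at the initial scales $N_0\leq N\leq N_0^2$ is the bad set genuinely energy-independent (Theorems \ref{initial scale}, \ref{initial scale, continuous}). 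Controlling the union over the continuum $E\in[E^*-\frac{1}{2}\delta,E^*]$ of the structural failure is precisely the content of Theorem \ref{Kirsch prob} itself, so your decomposition is circular as stated.

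The paper resolves this by proving \eqref{uniform prob} by \emph{induction on scales}, which your proposal omits. At scale $N_1\sim N^{4/3}$ it defines $B_{N_1}(E)$ as the event that there exist \emph{four} mutually $\frac{N}{5}$-separated $E$-bad $N$-blocks (Remark \ref{not only one bad block} allows up to three bad blocks without spoiling goodness of $\Lambda_{N_1}$). Four such blocks contain two disjoint, hence independent, pairs, each an instance of the scale-$N$ statement; so $\P(\bigcup_E B_{N_1}(E))\lesssim N_1^{4d}\,e^{-2\tilde c(\log N)^2/\log\log N}$. The squaring is not cosmetic: since $(\log N_1)^2=\frac{16}{9}(\log N)^2$, a single power $e^{-\tilde c(\log N)^2/\log\log N}$ (which is all your case (ii) delivers, since ``the other block carries an eigenvalue near $E$'' has no smallness) would \emph{not} close the induction, whereas $2\tilde c>\frac{16}{9}\tilde c$ does. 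To repair your argument you would need to (a) set up the induction on $N$ with base case the $E$-independent initial-scale events, and (b) replace your cases (i)--(ii) by the ``four bad sub-blocks $\Rightarrow$ two independent bad pairs'' counting so that the inductive constant survives the passage from $\log N$ to $\log N_1$.
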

\begin{proof}[Proof of Theorem \ref{uniform prob}]
  First, we denote by  $\Omega_N(E), \Omega'_N(E)$   the trimmed sets in Theorems \ref{Green function estimates}, \ref{Green function estimates, continuous version} so that they depend only on random variables in $\Lambda_{\frac{11}{10}N}$.   Denote by   $\Omega_{N,k_1,k_2}$  the event in  \eqref{uniform prob}, which is also  trimmed  and depends only on random variables in $\Lambda_{\frac{11}{10}N}(k_1)\cup\Lambda_{\frac{11}{10}N}(k_2) $. 

  For the initial scales  $N_0\leq N\leq N_0^2$, recall that  both  $\Omega_N$ and $\Omega'_{N}$ are {\bf independent} of $E$ (cf. Theorems \ref{initial scale}, \ref{initial scale, continuous}).  As a result,  if $\varepsilon\notin \Omega_N$, then for all $E\in [E^*-\frac{1}{2}\delta,E^*]\subset [E^*-\delta,E^*]$,   the block $\Lambda_N$ is $E$-good.  Thus,
  \begin{align*}
    \P(\Omega_{N,k_1,k_2}) & \leq \P(S_{-k_1}\Omega_N \cap S_{-k_2}\Omega_N) \\
    \notag & =\P(\Omega_N)^2 < e^{-2c\frac{(\log N)^2}{\log\log N}},
  \end{align*}
  where in the above estimate, we  used \eqref{disjoint fixed N blocks} and the independence of trimmed events.  We only need to choose $0<\tilde{c}<\frac{1}{2} c$. 

  For large scale $N_1>N_0^2$, still take $N_1\sim N^{\frac{4}{3}}$. Recalling  Lemma \ref{MSA} and Remark \ref{off diagonal only good annuls}, to ensure the $N_1$-block $\Lambda_{N_1}$ is good,  it requires  that
  \begin{itemize}
    \item \textbf{($L^2$-norm estimate)} The Green's function satisfies
                             \[\|G_{N_1}(E)\|\leq e^{N_1^{\frac{9}{10}}},\]
         which is equivalent to
         \[\dist(\sigma(H_{N_1}(\varepsilon)),E)\geq e^{-N_1^{\frac{9}{10}}}.\]
         Hence by  \eqref{Wegner estimate},  one can get  
         \begin{equation*}
          A_{N_1} (E)=\left\{\dist(E,\sigma(H_{N_1}(\varepsilon))) \leq \frac{1}{10} e^{-N\cdot N_1^{\frac{1}{10}}}\right\}
         \end{equation*}
         such that 
         \begin{equation*}
          \P(A_{N_1}(E))\leq e^{-c\frac{(\log N)^2}{\log\log N}},
         \end{equation*}
           and for $\varepsilon\notin A_{N_1}(E),$ 
         \[\dist(E,\sigma(H_{N_1}(\varepsilon))) >\frac{1}{10} e^{-N\cdot N_1^{\frac{1}{10}}}\gg e^{-N_1^{\frac{9}{10}}}.\]
    \item \textbf{(Covered by good $N$-scale blocks)} By Remark \ref{not only one bad block}, one also needs  to ensure  that  except for  3 bad $N$-size blocks,   all points in $\Lambda_{N_1}$ can be covered by a $N$-size $E$-good  block as in  \eqref{good block nhd}. This can  be  satisfied if 
           \[\varepsilon\in B_{N_1}(E)^c:=\left\{{\rm No \ four}\ N{-\rm size} \ E{\rm-bad\ blocks\ in} \ \Lambda_{N_1}\ {\rm mutually\  satisfy\ \eqref{disjoint fixed N blocks}} \right\}.\]
           If $\varepsilon\in B_{N_1}(E)^c$,   all $E$-bad $N$-size blocks  will be well contained   in a block  (in $\Lambda_{N_1}$) of  size $10N$.   
  \end{itemize}
  Summarizing the above discussions implies 
  \[\Lambda_{N_1} {\rm is }\  E-{\rm bad} \Rightarrow A_{N_1}(E) \ {\rm or}\ B_{N_1}(E).\] 
  This leads to  
  \begin{align*}
    \P(\Omega_{N_1,k_1,k_2}) & \leq \P \left(\bigcup_{E\in [E^*-\frac{1}{2}\delta,E^*] }   S_{-k_1}(A_{N_1}(E)\cup B_{N_1}(E)) \cap S_{-k_2}(A_{N_1}(E)\cup B_{N_1}(E))\right) \\
                          \notag & \leq \P\left( \bigcup_{E\in [E^*-\frac{1}{2}\delta,E^*] }   (S_{-k_1}A_{N_1}(E) \cap S_{-k_2}A_{N_1}(E)) \right)\\
                          \notag&\ \ \   +\P\left( \bigcup_{E\in [E^*-\frac{1}{2}\delta,E^*] }   (S_{-k_1}B_{N_1}(E) \cup S_{-k_2}B_{N_1}(E))  \right).\\
  \end{align*}

  On one hand, for  
  \[\varepsilon\in \bigcup_{E\in [E^*-\frac{1}{2}\delta,E^*] }  ( S_{-k_1}B_{N_1}(E) \cup S_{-k_2}B_{N_1}(E)), \]
  we can get that, there exist  some $E\in [E^*-\frac{1}{2}\delta,E^*]$ and  at least four $N$-size blocks  
  \[\Lambda_{N}(k'_i)\subset \Lambda_{N_1}(k_1)\cup\Lambda_{N_1}(k_2),\ i=1,2,3,4\]
  satisfying \eqref{disjoint fixed N blocks} for $k'_i\neq k'_j$. Considering  all possible $(k'_i)_{i=1,2,3,4}$ and applying  \eqref{uniform prob} at scale $N$ yield  (we have  already trimmed  $\Omega_{N,k'_i,k'_j}$ depending only on random variables in $\Lambda_{\frac{11}{10}N}(k'_i)\cup\Lambda_{\frac{11}{10}N}(k'_j) $)
  \begin{align}\label{4.6}
    \P\left( \bigcup_{E\in [E^*-\frac{1}{2}\delta,E^*] }   (S_{-k_1}B_{N_1}(E) \cup S_{-k_2}B_{N_1}(E) ) \right) &\leq \# \{(k'_i)_{i=1,2,3,4}\}\cdot \P(\Omega_{N,k'_1,k'_2})^2\\
       \notag  &\lesssim N_1^{4d} e^{-2\tilde{c}\frac{(\log N)^2}{\log\log N}}.
  \end{align}

  On the other hand, for 
  \[\varepsilon\in\bigcup_{E\in [E^*-\frac{1}{2}\delta,E^*] }   (S_{-k_1}A_{N_1}(E) \cap S_{-k_2}A_{N_1}(E)),   \]
  there is some $E\in [E^*-\frac{1}{2}\delta,E^*]$ such that 
  \[\dist (E,\sigma(H_{\Lambda_{N_1}(k_1)}(\varepsilon))) \leq \frac{1}{10} e^{-N\cdot N_1^{\frac{1}{10}}},\ \dist (E,\sigma(H_{\Lambda_{N_1}(k_2)}(\varepsilon))) \leq \frac{1}{10} e^{-N\cdot N_1^{\frac{1}{10}}}.\]
  This implies  (since $\frac{1}{10}e^{-N\cdot N_1^{\frac{1}{10}}} \ll \frac{1}{4}\delta$)
  \begin{equation}
    \dist (\sigma(H_{\Lambda_{N_1}(k_1)}(\varepsilon)),\sigma(H_{\Lambda_{N_1}(k_2)}(\varepsilon)) \cap [E^*-\frac{3}{4}\delta,E^*]) \leq \frac{1}{5} e^{-N\cdot N_1^{\frac{1}{10}}}. 
  \end{equation}
  Moreover, we change the variables $\varepsilon_j$  with  $j\notin\Lambda_{\frac{11}{10}N_1}(k_1)$ to be arbitrary $t_j\in \{-1,1\}$. As in \eqref{weak independence perturbation}, this only causes  $2^{-\frac{N_1}{11}}$-perturbation of the potential and thus the spectrum of $H_{\Lambda_{N_1}(k_1)}$.   The same argument applies  to   $H_{\Lambda_{N_1}(k_2)}$. Then  (since $2^{-\frac{N_1}{11}}\ll \frac{1}{4}\delta$) for all $t_j\in \{-1,1\}$, the distance between   
   \[\spec_1(t;\varepsilon_j,j\in \Lambda_{\frac{11}{10}N_1}(k_1)):=\sigma(H_{\Lambda_{N_1}(k_1)}(r_j=t_j,j\notin\Lambda_{\frac{11}{10}N_1}(k_1);r_j= \varepsilon_j ,j\in \Lambda_{\frac{11}{10}N_1}(k_1)))\]
   and 
   \[\spec_2(\varepsilon_j,j\in \Lambda_{\frac{11}{10}N_1}(k_2)):=\sigma(H_{\Lambda_{N_1}(k_2)}(r_j=1,j\notin\Lambda_{\frac{11}{10}N_1}(k_2);r_j= \varepsilon_j ,j\in \Lambda_{\frac{11}{10}N_1}(k_2) )) \cap [E^*-\delta,E^*]\]
   is less than 
   \[ \frac{1}{5} e^{-N\cdot N_1^{\frac{1}{10}}} + 2\cdot 2^{-\frac{N_1}{11}} <  e^{-N\cdot N_1^{\frac{1}{10}}}.  \]
   Denote
   \[\bvepsilon_1= \varepsilon_j,j\in \Lambda_{\frac{11}{10}N_1}(k_1);\ \bvepsilon_2=\varepsilon_j,j\in \Lambda_{\frac{11}{10}N_1}(k_2).\]
   Considering  the conditional probability on $\bvepsilon_2$, we have that,  since $\bvepsilon_1$ and $\bvepsilon_2$ are independent,
   \begin{align}\label{4.8}
    \P(\forall t,\dist(\spec_1,\spec_2) <e^{-N\cdot N_1^{\frac{1}{10}}}) & =\E_{\bvepsilon_2}(\P(\cdots|\bvepsilon_2)). 
   \end{align}
   From  Remark \ref{wegner independence},  $\#\spec_2\leq \#\Lambda_{N_1}$ and $\spec_2\subset [E^*-\delta,E^*]$ (i.e., Theorem \ref{Green function estimates}  works), it follows that  for all   $\bvepsilon_2,$
   \begin{equation}\label{4.9}
    \P(\cdots|\bvepsilon_2) \lesssim N_1^d e^{-c\frac{(\log N)^2}{\log\log N}}.
   \end{equation}
Combining \eqref{4.8} and \eqref{4.9} gives 
   \begin{equation}\label{4.10}
    \P\left( \bigcup_{E\in [E^*-\frac{1}{2}\delta,E^*] }   (S_{-k_1}A_{N_1}(E) \cap S_{-k_2}A_{N_1}(E) ) \right) \lesssim  N_1^d e^{-c\frac{(\log N)^2}{\log\log N}}. 
   \end{equation}

   Finally, by \eqref{4.6} and \eqref{4.10}, we have 
   \begin{align*}
    \P(\Omega_{N_1,k_1,k_2})&\lesssim N_1^{4d} e^{-2\tilde{c}\frac{(\log N)^2}{\log\log N}}+  N_1^d e^{-c\frac{(\log N)^2}{\log\log N}} \\
      & < e^{-\tilde c\frac{(\log N_1)^2}{\log\log N_1}}, 
   \end{align*}
   where  we have  chosen  $0<\tilde{c}<\frac{1}{2}c$.
\end{proof}

\begin{proof}[Proof of Theorem \ref{Main}]
The Anderson localization (i.e., Theorem \ref{Main}) follows  from combining Theorem \ref{Kirsch prob} and the Shnol's theorem (cf. e.g., \cite[Theorem 9.13]{Kir08} for details). 
\end{proof}

\appendix
\section{A continuum model}\label{conmod}
In this section, similar to that in \cite[Section 4]{Bou04}, we consider the continuous analogues of the discrete model described  in the previous sections. Define on  $\R^d$  the Hamiltonian 
\begin{equation}\label{conmodel}
  H(\varepsilon)=P(i\partial)+V_{\varepsilon} (x), \ x\in\R^d
\end{equation}
with 
\begin{equation}
  V_{\varepsilon}(x)=\sum_{m\in \Z^d}\phi(x-m)\varepsilon_m
\end{equation}
and $\phi(x)$ is a function in $\R^d$ satisfying ($|x|:=\|x\|_{\infty}$)
\[\phi(x)\sim e^{-|x|};\ \widehat{\phi}(0)=1,\ \widehat{\phi}(n)=0\  {\rm for}\ \forall \ n \in \Z^d\setminus\{0\}.\]
Thus, by the Poisson's formula,  we have
\begin{equation*}
  \sum_{m\in \Z^d}\phi(x-m)=1,\forall \ x\in \R^d
\end{equation*}
The operator $P=P(i\partial)$ is the (pseudo-)differential operator 
$$\widehat{P(i\partial)f} (x)=\widehat P(x)\widehat f(x)$$
with its symbol $\widehat{P}(y)$  satisfying the following properties:
\begin{itemize}
  \item[\textbf{(P1)}] $\widehat{P}(x)\geq 0$ is real-valued; 
  \item[\textbf{(P2)}] $|\mcF^{-1}_{\R^d}\big((1+\widehat{P})^{-\frac{1}{2}})(x) |\lesssim e^{-c|x|} $; 
  \item[\textbf{(P3)}] $0$ is the minima  of $\widehat{P}(x)$ and 
  \[\widehat{P}^{-1}(\{0\})=\{y_1,y_2,\cdots,y_J\} \subset \R^d. \]  In addition, there exists a constant $D>0$ such that 
\begin{equation}\label{nondegenerate min}
  \whP(y) \geq D \min_{1\leq j\leq J}|y-y_j|^2. 
\end{equation}
\end{itemize}
In particular, the standard Laplacian $-\Delta$ satisfies all  the above properties. 

From now on, we use $\|\cdot\|$ to denote  both $\|\cdot\|_{L^2(\R^d)}$ and the  operator norm. 

Via the  standard argument (cf. e.g., \cite[Proposition 3.8]{Kir08}), we know that for a.e.  $\varepsilon,$
\[\inf \sigma(H(\varepsilon))=-1. \]
Therefore, we assume that $E\in [-1,-1+\delta]$ (lies in the edge of the spectrum). Write  the Green's function  as 
\begin{align*}
  G(E;\varepsilon)& =G(E+i o;\varepsilon) \\
    \notag & = (P+1)^{-\frac{1}{2}}\cdot [1+(P+1)^{-\frac{1}{2}} (V_\varepsilon-E-1) (P+1)^{-\frac{1}{2}}]\cdot (P+1)^{-\frac{1}{2}}
\end{align*}
and denote 
\begin{equation*}
  A_{\varepsilon}= (P+1)^{-\frac{1}{2}} (V_\varepsilon-E-1) (P+1)^{-\frac{1}{2}}. 
\end{equation*}

As done in  Section \ref{section 2}, we want to show that for $N_0\gg1 $ and $0<\delta\ll1$, with large probability we have 
\begin{equation}\label{aim}
   \| R_{N_0} A_{\varepsilon} R_{N_0} \|\leq 1-\delta,
\end{equation}
where $R_{N_0}$ is the restriction operator (i.e., with Dirichlet boundary condition) to $[-N_0,N_0]^d\subset \R^d$. 

Assuming that \eqref{aim} is not true,  one can find $\xi(x)\in C^{\infty}_c([-N_0,N_0]^d),\ \|\xi\|=1$ such that 
\begin{equation}\label{5.8}
  |\langle \xi,A_{\varepsilon}\xi\rangle|=|\langle (V_\varepsilon-E-1) (P+1)^{-\frac{1}{2}} \xi, (P+1)^{-\frac{1}{2}}\xi\rangle| >1-\delta.
\end{equation}
As $E+1\in[0,\delta]$, we have
\begin{equation*}
  \|V_\varepsilon-E-1\|\leq 1+\delta, 
\end{equation*}
which shows 
\begin{equation*}
  \|(P+1)^{-\frac{1}{2}} \xi\|^2>\frac{1-\delta}{1+\delta}=1-2\delta.  
\end{equation*}
Simultaneously, by the Plancherel theorem, we get 
\begin{align}\label{5.10}
  1-2\delta & <\|(P+1)^{-\frac{1}{2}} \xi\|^2  \\
    \notag & = \int_{\R^d}\frac{1}{\whP(\lambda)+1} |\whx(\lambda)|^2 d\lambda \\
    \notag &=\left(\int_{\{\whP(\lambda) < D\eta^2 \}}+\int_{\{\whP(\lambda) \geq  D\eta^2 \}}\right)\cdots \\
    \notag & \leq \int_{\{\whP(\lambda) < D\eta^2 \}} |\whx(\lambda)|^2 d\lambda+\frac{1}{1+D\eta^2}\int_{\{\whP(\lambda) \geq D\eta^2 \}}|\whx(\lambda)|^2 d\lambda \\
    \notag &=1-(1-\frac{1}{D\eta^2+1}) \int_{\{\whP(\lambda) \geq D\eta^2 \}}|\whx(\lambda)|^2 d\lambda,
\end{align}
where $\eta>0$ will be specified later.  Combining \eqref{nondegenerate min} and \eqref{5.10} gives 
\begin{equation}\label{5.12}
 \int_{\{\min_{1\leq j\leq J}|\lambda-y_j|\geq \eta \}}|\whx(\lambda)|^2 d\lambda \leq  \int_{\{\whP(\lambda) \geq D\eta^2 \}}|\whx(\lambda)|^2 d\lambda \leq \mcO(\frac{\delta}{\eta^2}). 
\end{equation}
Moreover,  we get 
\begin{align}
  \|(P+1)^{-\frac{1}{2}}\xi-\xi\|^2 & =\int_{\R^d} \left((\frac{1}{\whP(\lambda)+1})^{\frac{1}{2}}-1\right)^2|\whx(\lambda)|^2 d\lambda \\
       \notag &=\left(\int_{\{\whP(\lambda) < D\eta^2 \}}+\int_{\{\whP(\lambda) \geq  D\eta^2 \}}\right)\cdots \\
    \notag &\lesssim   \int_{\{\whP(\lambda) \geq D\eta^2 \}}|\whx(\lambda)|^2 d\lambda + \max_{\whP(\lambda)<D\eta^2} |(\frac{1}{\whP(\lambda)+1})^{\frac{1}{2}}-1|^2 \\
    \notag & \leq \mcO(\frac{\delta}{\eta^2})+\mcO(\eta^2).
\end{align}
Taking  $\eta=\delta^{\frac{1}{4}}$ implies 
\begin{equation}\label{5.14}
   \int_{\{\min_{1\leq j\leq J}|\lambda-y_j|\geq \delta^{\frac{1}{4}} \}}|\whx(\lambda)|^2 d\lambda= \mcO(\delta^{\frac{1}{2}})
\end{equation}
and
\begin{equation}\label{5.15}
\|(P+1)^{-\frac{1}{2}}\xi-\xi\| =\mcO(\delta^{\frac{1}{4}}).
\end{equation}
Combining  \eqref{5.15} and  \eqref{5.8} gives  
\begin{equation*}
  |\langle \xi,(V_\varepsilon-E-1)\xi\rangle|= 1-\mcO(\delta^{\frac{1}{4}}),
\end{equation*} 
and again by $E+1\in [0,\delta]$, 
\begin{equation}\label{5.17}
    |\langle \xi,V_\varepsilon\xi\rangle|= 1-\mcO(\delta^{\frac{1}{4}})-|E+1|= 1-\mcO(\delta^{\frac{1}{4}}). 
\end{equation}

Now,  we will use some geometric projection argument  to transfer the concentration of $\xi$, i.e.,  \eqref{5.14}, to the potential $V_{\varepsilon}(x)$. For this, we take $\psi(\lambda)$ to be a smooth bump function satisfying 
\[0\leq \psi\leq 1;\ \psi(\lambda)=1\ {\rm for}\ |\lambda|\leq 1;\ \psi(\lambda)=0\ {\rm for}\ |\lambda|>2\] 
and denote $\psi_{s}(\lambda)=\psi(s^{-1}\lambda)$. Note that 
\begin{equation*}
  |\langle \xi,V_{\varepsilon}\xi\rangle |=\left|\int_{\R^d}\int_{\R^d}\whx(x)\whx(y) \widehat{V_{\varepsilon}}(x-y)dxdy \right|. 
\end{equation*}
From \eqref{5.14},  it follows that the density $\whx(x)\whx(y)$  concentrates  on  
\begin{align}\label{concentrating set}
  \mcC &=\{\min_{1\leq j\leq J}|x-y_j|< \delta^{\frac{1}{4}} \} \times \{\min_{1\leq j\leq J}|y-y_j|< \delta^{\frac{1}{4}} \} \\
  \label{5.20} &:=\bigcup_{1\leq  i,j\leq J} B(y_i,\delta^{\frac{1}{4}})\times B(y_j,\delta^{\frac{1}{4}}). 
\end{align}
By choosing $\delta\ll 1$, we can ensure that \eqref{5.20} is a disjoint union. Denote by $L=\{(x,y)\in\R^{2d}:\ x+y=0\}$ the hyperplane in $\R^{2d}$, and  by $\proj_L$ the corresponding orthogonal projection. Let 
\begin{align}
  \mcG=\proj_L^{-1}(\proj_L(\mcC)) & = \bigcup_{i,j}\{|(x-y)-(y_i-y_j)|\lesssim \delta^{\frac{1}{4}} \} \\
    \label{5.22}  &\subset  \bigcup_{i,j}\{|(x-y)-(y_i-y_j)|< 2 \delta^{\frac{1}{5}} \}.  
\end{align}
Still, one can take $\delta\ll 1$  so  that \eqref{5.22} is a disjoint union.  
\begin{figure}[htbp]
  \centering

\tikzset{every picture/.style={line width=0.5pt}} 

\begin{tikzpicture}[x=0.75pt,y=0.75pt,yscale=-0.65,xscale=0.65]

\draw    (61,307.29) -- (615.71,307.29) ;
\draw [shift={(618.71,307.29)}, rotate = 180] [fill={rgb, 255:red, 0; green, 0; blue, 0 }  ][line width=0.08]  [draw opacity=0] (8.93,-4.29) -- (0,0) -- (8.93,4.29) -- cycle    ;
\draw    (207,12.29) -- (207,406.29) ;
\draw [shift={(207,9.29)}, rotate = 90] [fill={rgb, 255:red, 0; green, 0; blue, 0 }  ][line width=0.08]  [draw opacity=0] (8.93,-4.29) -- (0,0) -- (8.93,4.29) -- cycle    ;
\draw    (7,107) -- (394.71,494.71) ;
\draw  [fill={rgb, 255:red, 74; green, 144; blue, 226 }  ,fill opacity=1 ] (258,97) -- (278.71,97) -- (278.71,117.71) -- (258,117.71) -- cycle ;
\draw  [fill={rgb, 255:red, 74; green, 144; blue, 226 }  ,fill opacity=1 ] (258,180) -- (278.71,180) -- (278.71,200.71) -- (258,200.71) -- cycle ;
\draw  [fill={rgb, 255:red, 74; green, 144; blue, 226 }  ,fill opacity=1 ] (258,259) -- (278.71,259) -- (278.71,279.71) -- (258,279.71) -- cycle ;
\draw  [color={rgb, 255:red, 0; green, 0; blue, 0 }  ,draw opacity=1 ][fill={rgb, 255:red, 74; green, 144; blue, 226 }  ,fill opacity=1 ] (336.71,259) -- (357.43,259) -- (357.43,279.71) -- (336.71,279.71) -- cycle ;
\draw  [fill={rgb, 255:red, 74; green, 144; blue, 226 }  ,fill opacity=1 ] (336.71,180.29) -- (357.43,180.29) -- (357.43,201) -- (336.71,201) -- cycle ;
\draw  [color={rgb, 255:red, 0; green, 0; blue, 0 }  ,draw opacity=1 ][fill={rgb, 255:red, 74; green, 144; blue, 226 }  ,fill opacity=1 ] (337,97) -- (357.71,97) -- (357.71,117.71) -- (337,117.71) -- cycle ;
\draw  [fill={rgb, 255:red, 74; green, 144; blue, 226 }  ,fill opacity=1 ] (420,97) -- (440.71,97) -- (440.71,117.71) -- (420,117.71) -- cycle ;
\draw  [fill={rgb, 255:red, 74; green, 144; blue, 226 }  ,fill opacity=1 ] (420,259) -- (440.71,259) -- (440.71,279.71) -- (420,279.71) -- cycle ;
\draw  [fill={rgb, 255:red, 74; green, 144; blue, 226 }  ,fill opacity=1 ] (420,181) -- (440.71,181) -- (440.71,201.71) -- (420,201.71) -- cycle ;
\draw  [dash pattern={on 0.84pt off 2.51pt}]  (447.71,68.29) -- (206.71,308.29) ;
\draw  [dash pattern={on 0.84pt off 2.51pt}]  (488.71,108.29) -- (247.71,348.29) ;
\draw  [dash pattern={on 0.84pt off 2.51pt}]  (470.71,90.29) -- (229.71,330.29) ;
\draw  [dash pattern={on 0.84pt off 2.51pt}]  (510.71,129.29) -- (269.71,369.29) ;
\draw  [dash pattern={on 0.84pt off 2.51pt}]  (430.71,46.29) -- (420.19,56.77) -- (189.71,286.29) ;
\draw  [dash pattern={on 0.84pt off 2.51pt}]  (528.71,147.29) -- (287.71,387.29) ;
\draw  [dash pattern={on 0.84pt off 2.51pt}]  (550.71,169.29) -- (309.71,409.29) ;
\draw  [dash pattern={on 0.84pt off 2.51pt}]  (409.71,28.29) -- (168.71,268.29) ;
\draw  [dash pattern={on 0.84pt off 2.51pt}]  (387.71,10.29) -- (146.71,250.29) ;
\draw  [dash pattern={on 0.84pt off 2.51pt}]  (363.71,-7.71) -- (122.71,232.29) ;
\draw [color={rgb, 255:red, 208; green, 2; blue, 27 }  ,draw opacity=1 ][line width=1.5]    (448.71,7.29) -- (99.71,358.29) ;
\draw [color={rgb, 255:red, 189; green, 16; blue, 224 }  ,draw opacity=1 ] [dash pattern={on 4.5pt off 4.5pt}]  (268.36,190.36) -- (268.71,308.29) ;
\draw [color={rgb, 255:red, 189; green, 16; blue, 224 }  ,draw opacity=1 ] [dash pattern={on 4.5pt off 4.5pt}]  (268.36,190.36) -- (207.71,191.29) ;

\draw (608,326) node [anchor=north west][inner sep=0.75pt]   [align=left] {$x$};
\draw (173,14) node [anchor=north west][inner sep=0.75pt]   [align=left] {$y$};
\draw (22,77) node [anchor=north west][inner sep=0.75pt]   [align=left] {$L$};
\draw (449.71,71.29) node [anchor=north west][inner sep=0.75pt]  [color={rgb, 255:red, 74; green, 144; blue, 226 }  ,opacity=1 ] [align=left] {$\mcC$};
\draw (180,183) node [anchor=north west][inner sep=0.75pt]   [align=left] {\textcolor[rgb]{0.74,0.06,0.88}{$y_j$}};
\draw (259,310) node [anchor=north west][inner sep=0.75pt]   [align=left] {\textcolor[rgb]{0.74,0.06,0.88}{$y_i$}};
\draw (10,367) node [anchor=north west][inner sep=0.75pt]   [align=left] {\textcolor[rgb]{0.82,0.01,0.11}{$(x-y)-(y_i-y_j)=0$}};

\end{tikzpicture}
  
\end{figure}\ \\ 

Next,  let 
\[\Psi(\lambda)=\sum_{y_i-y_j:\ 1\leq i ,j\leq J } \psi_{\delta^{\frac{1}{5}} }(\lambda-(y_i-y_j))\]
which is supported  in 
\[\bigcup_{i,j}\{|\lambda-(y_i-y_j)|< 2 \delta^{\frac{1}{5}} \}.\]
Then one can decompose 
\begin{align}
   \notag \langle \xi,V_{\varepsilon}\xi\rangle& = \int_{\R^d}\int_{\R^d}\whx(x)\whx(y) \widehat{V_{\varepsilon}}(x-y)dxdy \\
       \label{5.23}       & = \int_{\R^d}\int_{\R^d}\whx(x)\whx(y) \widehat{V_{\varepsilon}}(x-y)\Psi(x-y)dxdy \\
        \label{5.24}     & \ \ \ + \int_{\R^d}\int_{\R^d}\whx(x)\whx(y) \widehat{V_{\varepsilon}}(x-y) (1-\Psi(x-y))dxdy.  
\end{align}
Direct computation shows that 
\begin{equation}\label{5.25}
  \eqref{5.23}=\left\langle\xi,\left(V_{\varepsilon}* \mcF_{\R^d}^{-1}  \left(\sum_{y_i-y_j}  \tau_{y_i-y_j} \psi_{\delta^{\frac{1}{5}}} \right) \right)\xi\right\rangle
\end{equation}
with $\tau_sf(\lambda)=f(\lambda-s)$ denoting the shift operator. Moreover, we have 
\begin{align}
\label{5.26}  \eqref{5.24} & = \int_{\R^d}\int_{\R^d}\whx(x)\chi_{\{\min|x-y_i|>\delta^{\frac{1}{4}}\}}(x) \cdot \whx(y) \widehat{V_{\varepsilon}}(x-y) (1-\Psi(x-y))dxdy  \\
   \label{5.27}     &\ \ \ + \int_{\R^d}\int_{\R^d}\whx(x) \chi_{\{\min|x-y_i|\leq \delta^{\frac{1}{4}}\}}(x) \cdot\whx(y) \widehat{V_{\varepsilon}}(x-y) (1-\Psi(x-y))dxdy.  
\end{align}
For \eqref{5.26}, applying the  Cauchy-Schwarz inequality gives
\begin{align}\label{5.28}
  |\eqref{5.26}| \leq \|\whx\|\cdot \|\whx \chi_{\{\min|x-y_i|>\delta^{\frac{1}{4}}\}} \| \cdot \|V_{\varepsilon}* (\delta_0 -\mcF^{-1}_{\R^d}\Psi)\|_{\infty},
\end{align}
where  ${\delta}_0$ is  the Dirac function. Moreover, for \eqref{5.27},   $1-\Psi(x-y)$ is  supported in\[\bigcap_{i,j}\{|(x-y)-(y_i-y_j)|> \delta^{\frac{1}{5}} \}.\]
Recalling $\min_{1\leq i\leq J}|x-y_i| \leq \delta^{\frac{1}{4}}$,  one gets 
\begin{equation}\label{5.29}
  \min_{1\leq i\leq J}|y-y_j|\geq \delta^{\frac{1}{5}}-\delta^{\frac{1}{4}}>\delta^{\frac{1}{4}}. 
\end{equation}
Hence, the valid integral region of \eqref{5.27} is contained in the set satisfying  \eqref{5.29}, and  so 
\begin{align}\label{5.30}
  |\eqref{5.27} | \leq \|\whx\|\cdot \|\whx \chi_{\{\min|y-y_j|>\delta^{\frac{1}{4}}\}} \| \cdot \|V_{\varepsilon}* (\delta_0 -\mcF^{-1}_{\R^d}\Psi)\|_{\infty}. 
\end{align}
Recalling  \eqref{5.14} and  using \eqref{5.28}, \eqref{5.30} together with  the Young's inequality,  we obtain 
\begin{align}
  |\eqref{5.24}| & \leq 2 \|\whx\|\cdot \|\whx \chi_{\{\min|y-y_j|>\delta^{\frac{1}{4}}\}} \| \cdot \|V_{\varepsilon}* (\delta_0 -\mcF^{-1}_{\R^d}\Psi)\|_{\infty} \\
  \notag & \lesssim \delta^{\frac{1}{4}}(\|V_\varepsilon\|_{\infty}+\|V_{\varepsilon}\|_{\infty} \cdot \|\mcF^{-1}_{\R^d}\Psi\|_1),
\end{align}
where $\|\cdot\|_1:=\|\cdot\|_{L^1}$. 
Direct computation shows that 
\begin{align*}
  \|\mcF^{-1}_{\R^d}\Psi\|_1 &\leq \sum_{y_i-y_j}\| \mcF^{-1}_{\R^d}(\tau_{y_i-y_j} \psi_{\delta^{\frac{1}{5}}}) \|_1 \\
  \notag   & =\sum_{y_i-y_j}\| \widehat{\psi}\|_1\leq J^2 | \widehat{\psi}\|_1 =C(J)<\infty. 
\end{align*} 
Thus,
\begin{equation}\label{5.33}
  |\eqref{5.24}| =\mcO(\delta^{\frac{1}{4}}). 
\end{equation}
Combining \eqref{5.17}, \eqref{5.25} and \eqref{5.33} gives  
\begin{equation}
  \sum_{1\leq i,j\leq J} \|V_{\varepsilon}* \mcF^{-1}_{\R^d} (\tau_{y_i-y_j}\psi_{\delta^{\frac{1}{5}}})\|_{L^{\infty}([-N_0,N_0]^d)}=1-\mcO(\delta^{\frac{1}{4}})> \frac{1}{2}.
\end{equation} 
This  concludes that  there are  some $i,j$ such that 
\begin{align}\label{5.35}
 &\ \ \  \|V_{\varepsilon}* \mcF^{-1}_{\R^d} (\tau_{y_i-y_j}\psi_{\delta^{\frac{1}{5}}})\|_{L^{\infty}([-N_0,N_0]^d)}\\
  \notag&= \|V_{\varepsilon}* (e^{2\pi i(y_i-y_j) \cdot }\mcF^{-1}_{\R^d} (\psi_{\delta^{\frac{1}{5}}})(\cdot))\|_{L^{\infty}([-N_0,N_0]^d)} >\frac{1}{2J^2}. 
\end{align}

Next,  by $\mcF^{-1}_{\R^d}\psi\in \mcS(\R^d)$ (the Schwarz space), we have 
\begin{align}
  \| \mcF^{-1}_{\R^d}(\psi_{\delta^{\frac{1}{5}}})- \tau_y\mcF^{-1}_{\R^d}(\psi_{\delta^{\frac{1}{5}}})\|_1 & =\int_{\R^d}|\mcF^{-1}_{\R^d}\psi(\lambda)-\mcF^{-1}_{\R^d}\psi(\lambda-\delta^{\frac{1}{5}}y) |d\lambda \\
             \notag & \lesssim |\delta^{\frac{1}{5}}y|\lesssim \delta^{\frac{1}{10}}
\end{align}
as long as $|y|\lesssim \delta^{-\frac{1}{10}}$. With this assumption and by the Young's inequality,  we have 
\begin{equation}\label{5.37}
  \|V_{\varepsilon}* (e^{2\pi i(y_i-y_j) \cdot }(\mcF^{-1}_{\R^d} (\psi_{\delta^{\frac{1}{5}}})-\tau_y\mcF^{-1}_{\R^d} (\psi_{\delta^{\frac{1}{5}}}))(\cdot))\|_{L^{\infty}([-N_0,N_0]^d)}  \lesssim \delta^{\frac{1}{10}}. 
\end{equation}
Take an  integer $R\sim \delta^{-\frac{1}{10}}$ and   $y=k=(k_s)\in \Z^d\cap[1,R]^d$ in \eqref{5.37}. One can obtain 
\begin{equation*}
    \|V_{\varepsilon}* (e^{2\pi i(y_i-y_j) \cdot }(\mcF^{-1}_{\R^d} (\psi_{\delta^{\frac{1}{5}}})-R^{-d}\sum_{1\leq k_s\leq R \atop s=1,\cdots ,d }\tau_k \mcF^{-1}_{\R^d} (\psi_{\delta^{\frac{1}{5}}}))(\cdot))\|_{L^{\infty}([-N_0,N_0]^d)}  \lesssim \delta^{\frac{1}{10}}
\end{equation*}
which together with \eqref{5.35} implies  
\begin{equation}\label{5.39}
  \|V_{\varepsilon}* (e^{2\pi i(y_i-y_j) \cdot }R^{-d}\sum_{1\leq k_s\leq R \atop s=1,\cdots ,d }\tau_k \mcF^{-1}_{\R^d} (\psi_{\delta^{\frac{1}{5}}})(\cdot))\|_{L^{\infty}([-N_0,N_0]^d)}>\frac{1}{4J^2}. 
\end{equation}

Finally, direct computation shows that 
\begin{align*}
    &\ \ \ \|V_{\varepsilon}* (e^{2\pi i(y_i-y_j) \cdot }R^{-d}\sum_{1\leq k_s\leq R \atop s=1,\cdots ,d }\tau_k \mcF^{-1}_{\R^d} (\psi_{\delta^{\frac{1}{5}}})(\cdot))\|_{L^{\infty}([-N_0,N_0]^d)} \\
    &=\| \int V_{\varepsilon}(x-y) R^{-d}\sum_{1\leq k_s\leq R \atop s=1,\cdots ,d } e^{2\pi i (y_i-y_j)  y}  \mcF^{-1}_{\R^d} (\psi_{\delta^{\frac{1}{5}}})(y-k)dy\|_{L^{\infty}([-N_0,N_0]^d)} \\
    &= \| R^{-d}\sum_{1\leq k_s\leq R \atop s=1,\cdots ,d }  \int  V_{\varepsilon}(x-y-k) e^{2\pi i (y_i-y_j) k} \cdot \mcF^{-1}_{\R^d} (\psi_{\delta^{\frac{1}{5}}})(y) e^{2\pi i (y_i-y_j) y} dy \|_{L^{\infty}([-N_0,N_0]^d)} \\
    &=\| \left(\mcF^{-1}_{\R^d} (\psi_{\delta^{\frac{1}{5}}})(\cdot) e^{2\pi i (y_i-y_j)\cdot} \right)* \left(R^{-d}\sum_{k}e^{2\pi i (y_i-y_j) k }V_{\varepsilon}(\cdot-k)\right)\|_{L^{\infty}([-N_0,N_0]^d)}. 
\end{align*}
Again by \eqref{5.39} and the Young's inequality, we get  
\begin{align*}
  \frac{1}{4J^2} &< \|R^{-d}\sum_{k}e^{2\pi i (y_i-y_j) k }\tau_k V_{\varepsilon} \|_{L^{\infty}([-N_0,N_0]^d)} \cdot \| \mcF^{-1}_{\R^d} \psi\|_1 \\
      \notag &=\left\| \sum_{m\in \Z^d}\phi(x-m) \left(R^{-d}\sum_{k}e^{2\pi i(y_i-y_j)k}\varepsilon_{m-k}\right) \right\|_{L^{\infty}([-N_0,N_0]^d)} \\
      \notag & \leq \sup_{|m|\leq 2N_0} \left| R^{-d}\sum_{k}e^{2\pi i(y_i-y_j)k}\varepsilon_{m-k}   \right| +e^{-\frac{1}{2}N_0},
\end{align*}
where in  the last inequality,  we used $|\phi(x)|\sim e^{-|x|}$. Thus, if  \eqref{aim} fails,  we finally get 
\begin{equation}\label{final continuous event}
  \Omega_{N_0}=\left\{\varepsilon: \  \sup_{1\leq i,j \leq J \atop |m|\leq 2N_0} \left| \sum_{k}e^{2\pi i(y_i-y_j)k}\varepsilon_{m-k}   \right|>\frac{R^d}{8J^2}\right\}. 
\end{equation}
It is worthy to compare  \eqref{final continuous event} with \eqref{event initial scale},  \eqref{event initial scale, continuous version}. Indeed, they all show  that the non-uniqueness of maxima (or minima) of the symbol will cause the transitions of arguments  in the front  of random variables. 

Finally, by the standard probabilistic estimate as in  \eqref{2.43}, we have 
\begin{equation*}
  \P(\Omega) \leq 2e^{-C(J) \frac{R^d}{\log N_0}}\leq 2e^{-C(J) \frac{\delta^{-\frac{1}{10}}}{\log N_0}}
\end{equation*}
Taking  $\delta=(\log N_0)^{-10^3}$ and $N_0\gg 1$ gives 
\begin{equation*}
  \P(   \| R_{N_0} A_{\varepsilon} R_{N_0} \|\leq 1-\delta )>1-e^{-C(\log N_0)^3}
\end{equation*}
which is exactly the same estimate as in \cite[Section 4, (4.28)]{Bou04} for the initial scales.  

Furthermore, \textbf{(P2)} ensures that the integral kernel 
\begin{equation*}
  |(P+1)^{-\frac{1}{2}}(x,y)|\lesssim e^{-c|x-y|}
\end{equation*}
and so  $|A_{\varepsilon}(x,y)|\lesssim e^{-c|x-y|}$. This together with \eqref{aim} and Neumann series expansion  argument will lead to 
\begin{equation*}
  \| (1+R_{N_0}A_{\varepsilon}R_{N_0})^{-1}\|< \delta^{-1}\leq e^{N_1^{\frac{9}{10}}}
\end{equation*}
and 
\begin{equation*}
  |(1+R_{N_0}A_{\varepsilon}R_{N_0})^{-1}(x,y)|\leq e^{-\gamma_0|x-y|}\ {\rm for}\ |x-y|>\frac{N_0}{10}. 
\end{equation*}
These are the Green's function estimates  required for the long-range operator $A_{\varepsilon}$ in initial scales. 

The remaining  proofs on MSA  iteration and localization are again standard (cf.  \cite[Section 4]{Bou04}).

\section{Basic facts on the Floquet-Bloch theory}\label{Floquet-B}
Consider  the  periodic Schr\"odinger operator on $\ell^2(\Z^d)$
\[H=P+V,\]
where $P$ is a convolution type operator with symbol $h(x),\ x\in \T^d$ and $V$ is $\Lambda_N$-periodic potential, namely, 
\[V(n+l)=V(n)\ {\rm for}\  \forall n\in \Z^d, l\in [(2N+1)\Z]^d.  \]
We still define the Fourier transform $\mcF:\ \ell^2(\Z^d)\rightarrow L^2(\T^d)$ via  \eqref{Fourier transform lattice}. Then 
\begin{align}
  (\mcF H   \mcF^{-1} u)(x) =h(x)u(x)+\sum_{n\in \Z^d} V(n) e^{2\pi i n\cdot x} \int_{\T^d}e^{-2\pi i n\cdot y}u(y)dy,\ \T^d=\R^d/\Z^d. 
\end{align}
Now,  for any $n\in \Z^d$,  we have the unique representation $n=k+l,k\in \Lambda_N,l\in [(2N+1)\Z]^d,$ which leads to 
\[\Z^d = [(2N+1)\Z]^d +\Lambda_N.\]
We define the unitary isometry
\begin{align}\label{Floquet transform}
  U: \  \ell^2(\Z^d)&\to L^2\left((\frac{\T}{2N+1})^d\right)\otimes \ell^2(\Lambda_N):= \mcH, \\
  \notag  (u_n)_{n\in \Z^d} &\longmapsto (u_k(x))_{k\in \Lambda_N},\ u_{k}(x)=\sum_{l\in [(2N+1)\Z]^d } u_{k+l}\cdot  e^{2\pi i l\cdot x} .
\end{align}
It's easy to see  that $u_k(x)$ is  $\frac{1}{2N+1}$-periodic in $x$, and 
\[u(x)=\mathcal F[(u_n)_{n\in\Z^d}]=\sum_{k\in \Lambda_N}e^{2\pi i k \cdot x} u_k(x).\] 
By elementary calculation,  we get 
\begin{equation}\label{Floquet unitary conjungate}
  \bigg( U  H U^{-1} (u_j(\cdot))_{j\in\Lambda_N}\bigg)_k(x)=\sum_{j\in \Lambda_N} h_{k-j}(x)u_j(x)+V(k)u_k(x), 
\end{equation}
where $h_k(x)=(U  \mcF^{-1} h)_k(x)$  corresponds to vector with the  symbol $h(x)$ in $\mcH$. After fixing $x\in (\frac{\T}{2N+1})^d$, the operator \eqref{Floquet unitary conjungate}  becomes the fiber matrix 
\begin{equation}\label{Floquet matrix}
  M^N(x)=(h_{k-j}(x))_{k\in\Lambda_N,j\in\Lambda_N}+V|_{\Lambda_N}. 
\end{equation}
We  call $x$ the \textbf{Floquet quasi-momentum}. 

By  the \textbf{Floquet eigenvalue and Floquet eigenfunction} of $H$  at the  quasi-momentum $x$,  we mean  the eigenvalue and eigenfunction of $M^N(x)$. In other words,  they are the value $E$ and vector $u\in \ell^2(\Z^d)$ such that
\begin{equation*}
  \left\{
    \begin{aligned}
      &Hu=Eu, \\
      &u_{j+m}=e^{-2\pi i m\cdot x} u_j \ {\rm for} \ j\in \Z^d,m\in [(2N+1)\Z]^d.
    \end{aligned}\right.
\end{equation*} 
Denote by $\{E_s(x)\}_{s\in \Lambda_N}$ all eigenvalues of  $M^N(x)$.  Then    we have 
\begin{equation}\label{Floquet spectrum}
  \sigma(H)=\bigcup_{x\in (\frac{\T}{2N+1})^d} \sigma(M^N(x))=\left\{E_s(x):\ s\in \Lambda_N,x\in (\frac{\T}{2N+1})^d\right\}. 
\end{equation} 
$\qquad$\\

If $V=0$, then the Floquet eigenvalue and Floquet eigenfunction of $H$  at  $x$ are 
\begin{equation}\label{Floquet basis}
  E_s(x)=h(x+\frac{2\pi s}{2N+1}),\ \beta_s(x)=\frac{1}{(2N+1)^{\frac{d}{2}}}(e^{-2\pi i (x+\frac{s}{2N+1})\cdot k})_{k\in \Lambda_N} \in \mcH,
\end{equation}
where $s\in \Lambda_N.$
This set of  Floquet eigenfunctions forms the  orthonormal basis of $\ell^2({\Lambda_N})$, which is  called the \textbf{Floquet basis}. The standard  basis in $\ell^2(\Lambda_N)$ can be represented in  the Floquet basis as 
\begin{align*}
  \delta_{l}=\sum_{s\in \Lambda_N} \frac{1}{(2N+1)^{\frac{d}{2}}}e^{2\pi i (x+\frac{s}{2N+1})\cdot l}\cdot \beta_s(x).
\end{align*}
To eliminate  the dependence on  $x$ of coordinates, we  define the modified canonical basis 
\begin{equation}\label{modified canonical basis}
  v_l=e^{-2\pi i x\cdot l}\delta_{l} =\sum_{s\in \Lambda_N} \frac{1}{(2N+1)^{\frac{d}{2}}}e^{2\pi i \frac{s}{2N+1}\cdot l}\cdot \beta_s(x). 
\end{equation}

\section{A quantitative  uncertainty principle}\label{UP}
In this section, we introduce a quantitative  uncertainty principle of   Klopp \cite{Klopp02}. Consider first  the discrete Fourier transform on the finite Abelian group $\Z_{2N+1}^d$ 
\begin{align*}
  \mcF_N: \ \ell^2(\Z_{2N+1}^d)&\to \ell^2(\Z_{2N+1}^d), \\
     a= (a_n)_{n\in \Z_{2N+1}^d }& \longmapsto \mcF_N a  = \hat{a},
\end{align*}
where 
\begin{equation}\label{discrete Fourier transform}
  \hat{a}_l=(\mcF_N a)_l= \sum_{n\in \Z_{2N+1}^d } a_n\cdot  \frac{1}{(2N+1)^{\frac{d}{2}}}e^{-2\pi i \frac{n}{2N+1}\cdot l}. 
\end{equation}
The quantitative uncertainty principle indicates that, if $a$ is supported in a $K$-size block in $\Z^d_{2N+1}$, then $\hat{a}$ can be  nearly constant in a $\frac{N}{K}$-size block.  
More precisely, we have 
\begin{lem}[\cite{Klopp02}, Lemma 6.2]\label{discrete UP}
Assume $N, L, K, K', L'$ are positive integers such that
\begin{itemize}
    \item $2N + 1 = (2K + 1)(2L + 1) = (2K' + 1)(2L' + 1)$; 
    \item $K < K'$ and $L' < L$.
\end{itemize}
Let  $a = (a_n)_{n \in \mathbb{Z}_{2N+1}^d} \in \ell^2(\mathbb{Z}_{2N+1}^d)$  satisfy  $a_n = 0$ for $|n| > K$. Then there exists some $b \in \ell^2(\mathbb{Z}_{2N+1}^d)$ such that
\begin{enumerate}
    \item $\|a - b\|_{\ell^2(\mathbb{Z}_{2N+1}^d)} \leq C_{K, K'} \|a\|_{\ell^2(\mathbb{Z}_{2N+1}^d)},$  where $0<C_{K, K'} {\underset{K/K' \to 0}\sim} K/K'$;
    \item For $l' \in \mathbb{Z}_{2L'+1}^d$ and $k' \in \mathbb{Z}_{2K'+1}^d$, we have $\hat{b}_{l'+k'(2L'+1)} = \hat{b}_{k'(2L'+1)}$;
    \item $\|a\|_{\ell^2(\mathbb{Z}_{2N+1}^d)} = \|b \|_{\ell^2(\mathbb{Z}_{2N+1}^d)}$.
\end{enumerate}
\end{lem}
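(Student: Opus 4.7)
The plan is to construct $b$ in two stages: first, define a Fourier-domain ``projection'' $b_0$ whose Fourier transform is piecewise constant on the prescribed blocks and is $\ell^2$-close to $a$; then rescale $b_0$ to satisfy property (3) exactly. Concretely, using the unique decomposition $l = l' + k'(2L'+1)$ with $l' \in \Lambda_{L'}$ and $k' \in \Lambda_{K'}$, I would set
\[
(\mcF_N b_0)_{l' + k'(2L'+1)} \;:=\; (\mcF_N a)_{k'(2L'+1)},
\]
which immediately realizes property (2) for $b_0$ and survives any subsequent scalar rescaling.

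The heart of the argument is the approximation estimate $\|a - b_0\| \lesssim_d (K/K')\|a\|$. By Plancherel,
\[
\|a - b_0\|^2 \;=\; \sum_{l' \in \Lambda_{L'}}\sum_{k' \in \Lambda_{K'}} \bigl|(\mcF_N a)_{l'+k'(2L'+1)} - (\mcF_N a)_{k'(2L'+1)}\bigr|^2.
\]
Expanding $\mcF_N$ and using $k'(2L'+1)/(2N+1) = k'/(2K'+1)$, the inner sum over $k'$ (for fixed $l'$) becomes the squared $\ell^2$-norm of the DFT on $\Z^d_{2K'+1}$ of $g_{l'}(n) := a_n(e^{-2\pi i n \cdot l'/(2N+1)} - 1)$. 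Applying Plancherel on $\Z^d_{2K'+1}$ to the folded function $\tilde g_{l'}(\bar n) := \sum_{n \equiv \bar n \,({\rm mod}\, 2K'+1)} g_{l'}(n)$, the hypothesis $K < K'$ enters decisively: the reduction $[-K,K]^d \hookrightarrow \Z^d_{2K'+1}$ is injective, so no aliasing occurs and $\|\tilde g_{l'}\|^2 = \|g_{l'}\|^2$. Combined with the elementary pointwise bound $|e^{-2\pi i n \cdot l'/(2N+1)} - 1| \leq 2\pi |n \cdot l'|/(2N+1) \lesssim_d KL'/(2N+1) \lesssim K/K'$ for $|n|_\infty \leq K,\; |l'|_\infty \leq L'$ (using $L'/(2N+1) \leq 1/(2K'+1)$), summing over $l'$ and invoking $(2L'+1)^d(2K'+1)^d = (2N+1)^d$ produces the desired $\|a - b_0\|^2 \lesssim_d (K/K')^2 \|a\|^2$.

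Finally, I would set $b := (\|a\|/\|b_0\|)\, b_0$, which is well-defined once the implicit constant times $K/K'$ is less than $1$, i.e.\ in the asymptotic regime of the statement. The scaling preserves block-constancy of $\mcF_N b$ and gives property (3) exactly, while the triangle inequality (with $|\|b_0\| - \|a\|| \leq \|a - b_0\|$) keeps $\|a - b\|$ of the same order $K/K'$. The main obstacle is the no-aliasing step: only the strict inequality $K < K'$ makes the folding $g_{l'} \mapsto \tilde g_{l'}$ an isometry, thereby converting the pointwise smallness of $|e^{-2\pi i n \cdot l'/(2N+1)} - 1|$ (which on its own would give only an $\ell^\infty$-style control) into the genuine $\ell^2$-approximation required by the lemma.
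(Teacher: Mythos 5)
The paper does not prove this lemma --- it is quoted verbatim from \cite{Klopp02}, Lemma 6.2 --- so there is no in-paper argument to compare against. Your proof is correct and is essentially the standard one: define $b_0$ on the Fourier side to be constant on each block $\{l'+k'(2L'+1):l'\in\Lambda_{L'}\}$ (which gives property (2) and survives scalar rescaling), reduce the Plancherel error to a DFT on $\Z^d_{2K'+1}$ of $g_{l'}(n)=a_n(e^{-2\pi i n\cdot l'/(2N+1)}-1)$, use the support condition to rule out aliasing so that folding is an isometry, bound $|e^{-2\pi i n\cdot l'/(2N+1)}-1|\lesssim_d KL'/(2N+1)\lesssim K/K'$ pointwise, and sum over $l'$ using $(2L'+1)^d(2K'+1)^d=(2N+1)^d$; the final rescaling $b=(\|a\|/\|b_0\|)b_0$ secures (3) at the cost of a harmless factor $2$ in (1). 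Two cosmetic remarks: injectivity of $[-K,K]^d\hookrightarrow\Z^d_{2K'+1}$ only needs $K\le K'$, not the strict inequality, so the no-aliasing step is not where strictness ``enters decisively''; and in the regime where $CK/K'\ge 1$ the rescaling could degenerate, but there the asymptotic claim $C_{K,K'}\sim K/K'$ is vacuous and any block-constant $b$ with $\|b\|=\|a\|$ works, exactly as you indicate.
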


\section{The Dudley's  estimate}\label{Dudley}
In this section,  we introduce some standard probabilistic estimates  needed in the derivation of \eqref{2.42} and \eqref{2.43}.  Most of those  can be found in \cite{AAGM15}. 

Let $(\Omega,\mcF,\P)$ be a probability space, and let $f$ be  a random variable on it.  Denote 
\[ \psi_{\alpha}(t)=e^{t^{\alpha}}-1,\  t\in [0,\infty),\alpha\geq 1.\]
Define the $\psi_{\alpha}$-Orlicz norm of $f$ as  
\[\| f\|_{\psi_{\alpha}}:= \inf\left\{\lambda >0:\  \int_{\Omega} \psi_{\alpha} \left(\frac{|f|}{\lambda}\right) \ d\P\leq 1\right\},\]
and  the Orlicz space as
\[L^{\psi_{\alpha}}(\Omega,\P)=\{f:\ \|f\|_{\psi_{\alpha}}<\infty \}.\]
We have the  Chernoff  estimate 
\begin{thm}[Chernoff bound]\label{Chernoff bound}
  If $\| f\|_{\psi_{\alpha}}<\infty$, then 
\[\P(|f|\geq t)\leq 2e^{- (\frac{t}{\|f\|_{\psi_{\alpha}}})^{\alpha}}.\]
\end{thm}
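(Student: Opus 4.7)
The plan is to reduce the tail bound to a one-line application of Markov's inequality to the monotone increasing function $\psi_\alpha(s) = e^{s^\alpha} - 1$, exploiting the very definition of the Orlicz norm.

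First I would set $\lambda := \|f\|_{\psi_\alpha}$, which is finite by hypothesis. By the definition of the infimum in $\|\cdot\|_{\psi_\alpha}$, for every $\lambda' > \lambda$ one has $\E[\psi_\alpha(|f|/\lambda')] \leq 1$. Letting $\lambda' \downarrow \lambda$ and invoking monotone convergence (applied to the pointwise increasing family $\psi_\alpha(|f|/\lambda')$ as $\lambda'$ decreases to $\lambda$) gives
\[\E\bigl[e^{(|f|/\lambda)^\alpha}\bigr] \;=\; 1 + \E\bigl[\psi_\alpha(|f|/\lambda)\bigr] \;\leq\; 2.\]

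Next, for $t > 0$ the monotonicity of $s \mapsto e^{s^\alpha}$ on $[0,\infty)$ yields the identity $\{|f| \geq t\} = \{e^{(|f|/\lambda)^\alpha} \geq e^{(t/\lambda)^\alpha}\}$. Applying Markov's inequality to the nonnegative random variable $e^{(|f|/\lambda)^\alpha}$ then gives
\[\P(|f| \geq t) \;\leq\; \frac{\E\bigl[e^{(|f|/\lambda)^\alpha}\bigr]}{e^{(t/\lambda)^\alpha}} \;\leq\; 2\,e^{-(t/\|f\|_{\psi_\alpha})^\alpha},\]
which is exactly the claimed estimate.

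There is no substantive obstacle here; the only minor technical subtlety is justifying that the infimum defining $\|f\|_{\psi_\alpha}$ is effectively attained, in the sense that $\E[\psi_\alpha(|f|/\|f\|_{\psi_\alpha})] \leq 1$. As noted, this follows from monotone convergence. If one prefers to avoid any appeal to attainment, an equivalent route is to run the Markov argument with $\lambda_\varepsilon = (1+\varepsilon)\|f\|_{\psi_\alpha}$ for arbitrary $\varepsilon > 0$, obtaining $\P(|f| \geq t) \leq 2e^{-(t/\lambda_\varepsilon)^\alpha}$, and then send $\varepsilon \to 0^+$.
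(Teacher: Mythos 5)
Your proof is correct and follows essentially the same route as the paper: Markov's (Chebyshev's) inequality applied to the nonnegative variable $e^{(|f|/\|f\|_{\psi_\alpha})^\alpha}$, whose expectation is bounded by $2$ via the definition of the Orlicz norm. Your treatment is in fact slightly more careful than the paper's, which tacitly assumes the infimum in the norm is attained (and even writes the final bound as an equality); your monotone-convergence or $\lambda_\varepsilon$ argument fills that small gap.
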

\begin{proof}
  By the Chebyshev's  inequality, we have 
  \begin{align*}
    \P(|f|\geq t) & =\P\left(  e^{(\frac{|f|}{\|f\|_{\psi_{\alpha}}})^{\alpha} }  \geq e^{(\frac{t}{\|f\|_{\psi_{\alpha}}})^{\alpha} } \right) \\
   &\leq e^{-(\frac{t}{\|f\|_{\psi_{\alpha}}})^{\alpha} } \cdot \E \left( e^{(\frac{|f|}{\|f\|_{\psi_{\alpha}}})^{\alpha} }\right) \\
   &= 2 e^{-(\frac{t}{\|f\|_{\psi_{\alpha}}})^{\alpha} }. 
  \end{align*}
\end{proof}

Another useful estimate  in $L^{\psi_{\alpha}}(\Omega,\P)$ is an analogue of \cite[Proposition 3.5.8]{AAGM15}:
\begin{thm}[Dudley's $L^{\psi_{\alpha}}$-estimate]\label{Dudley inequality}
   Assume the  random variables $X_1,\cdots,X_N\in L^{\psi_{\alpha}}(\Omega,\P)$. Then 
  \[\left\| \max_{1\leq i\leq N} |X_i| \right\|_{\psi_{\alpha}} \lesssim_{\alpha} (\log N)^{\frac{1}{\alpha}} \cdot \max_{1\leq i\leq N} \|X_i\|_{\psi_{\alpha}}.\]
\end{thm}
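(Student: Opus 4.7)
By homogeneity of the Orlicz norm and renormalization, I may assume without loss that $M := \max_i \|X_i\|_{\psi_\alpha} = 1$. Setting $Y := \max_{1 \leq i \leq N} |X_i|$, the goal reduces to exhibiting a constant $C_\alpha > 0$ depending only on $\alpha$ such that, for $\lambda := C_\alpha (\log N)^{1/\alpha}$, one has $\E[\psi_\alpha(Y/\lambda)] \leq 1$. The most direct attempt, bounding via monotonicity of $\psi_\alpha$ by $\E[\psi_\alpha(Y/\lambda)] \leq \sum_i \E[\psi_\alpha(|X_i|/\lambda)]$, turns out to be wasteful: it inflates the right-hand side by a factor $\sim N$ rather than $\log N$, because $\psi_\alpha$ grows so fast at infinity that a union bound on expectations is much weaker than a union bound on tails. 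I therefore apply Theorem \ref{Chernoff bound} at the tail level first, and only integrate afterwards.

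The key computation is to combine the layer-cake identity
\begin{equation*}
\E[\psi_\alpha(Y/\lambda)] = \int_1^\infty \P(Y > \lambda (\log t)^{1/\alpha}) \, dt
\end{equation*}
with the tail bound obtained from Chernoff plus the usual union bound, namely $\P(Y > s) \leq \min(1,\, 2N e^{-s^\alpha})$. Setting $\beta := \lambda^\alpha = C_\alpha^\alpha \log N$, this gives $\P(Y > \lambda (\log t)^{1/\alpha}) \leq \min(1,\, 2N t^{-\beta})$. Splitting the integral at the crossover $t_0 := (2N)^{1/\beta}$ where the two competing bounds coincide yields
\begin{equation*}
\E[\psi_\alpha(Y/\lambda)] \leq (t_0 - 1) + 2N \int_{t_0}^\infty t^{-\beta} \, dt = (t_0 - 1) + \frac{t_0}{\beta - 1},
\end{equation*}
using $2N t_0^{-\beta} = 1$ in the final step.

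Since $\log t_0 = \log(2N)/(C_\alpha^\alpha \log N)$ stays uniformly bounded (and tends to $1/C_\alpha^\alpha$ as $N \to \infty$), both $t_0 - 1$ and $t_0/(\beta - 1)$ can be made arbitrarily small uniformly in $N \geq 2$ by choosing $C_\alpha$ sufficiently large depending only on $\alpha$; this drives the right-hand side below $1$, as required. The degenerate case $N = 1$ is absorbed into the implicit constant since $\|X_1\|_{\psi_\alpha} \leq M$ trivially.

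\emph{Main obstacle.} The one substantive point is the choice of the split threshold $t_0$: any smaller and the trivial bound $\P \leq 1$ dominates the first integral with an unwanted $N$-factor, while any larger and the tail integral of $t^{-\beta}$ fails to be controllable. The crossover $t_0 = (2N)^{1/\beta}$ is precisely the value where the two competing estimates balance; the conversion $\log t_0 = \log(2N)/\beta$ is exactly what transmutes the factor $N$ sitting inside the Chernoff union bound into the desired $\log N$ outside. Everything else is bookkeeping.
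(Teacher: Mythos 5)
Your proof is correct, but it takes a genuinely different route from the paper's. The paper argues through moments: it first derives $\E(|X_i|^p)\lesssim b^p\sqrt{p/\alpha}\,(p/(e\alpha))^{p/\alpha}$ from the Chernoff bound, then controls $\|\max_i|X_i|\|_p$ by the H\"older/$\ell^{q}$-interchange trick $\E(\max_i|X_i|^p)\leq(\sum_i\E|X_i|^{qp})^{1/q}$ with the crucial choice $q=\log N$, and finally converts the resulting moment growth $\sup_{p\geq\alpha}\|\max_i|X_i|\|_p/p^{1/\alpha}\lesssim_\alpha b(\log N)^{1/\alpha}$ back into a $\psi_\alpha$-bound via an external equivalence lemma (\cite[Lemma 3.5.5]{AAGM15}). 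You instead verify the Orlicz norm directly from its definition: layer-cake for $\E[\psi_\alpha(Y/\lambda)]$, the union bound $\P(Y>s)\leq\min(1,2Ne^{-s^\alpha})$, and a split of the resulting integral at the crossover $t_0=(2N)^{1/\beta}$. Your identity $\E[\psi_\alpha(Y/\lambda)]=\int_1^\infty\P\bigl(Y>\lambda(\log t)^{1/\alpha}\bigr)\,dt$ and the evaluation $(t_0-1)+t_0/(\beta-1)$ are both correct, and the choice of $C_\alpha$ large does drive this below $1$ uniformly in $N\geq 2$. What each approach buys: yours is fully self-contained given the Chernoff bound (no appeal to the moment-growth characterization of $\psi_\alpha$-norms), and it makes transparent where the $\log N$ comes from (the crossover computation $\log t_0=\log(2N)/\beta$); the paper's version routes through $L^p$ estimates that are themselves reusable and matches the presentation of the cited reference. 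Your side remark that the naive bound $\E[\psi_\alpha(Y/\lambda)]\leq\sum_i\E[\psi_\alpha(|X_i|/\lambda)]$ only yields $\lambda\sim N^{1/\alpha}$ is also accurate. One cosmetic caveat: for $N=1$ the right-hand side of the stated inequality literally vanishes, so it cannot be "absorbed into the constant"; the statement should be read with $\log N$ replaced by $\log(N+1)$ or restricted to $N\geq2$ — but this defect is in the paper's formulation as well and is immaterial to its application.
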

\begin{proof}
  Denote $\max_{1\leq i\leq N} \|X_i\|_{\psi_{\alpha}}=b$.
  
  First, by Theorem \ref{Chernoff bound}, we have for $p\geq \alpha,1\leq i\leq N,$ 
  \begin{align*}
    \E(|X_i|^p)& =p\int_{0}^{\infty} t^{p-1} \P(|X_i|\geq t) dt \leq 2p\int_{0}^{\infty} t^{p-1} e^{-(\frac{t}{b})^{\alpha}} dt =2 b^p \Gamma(\frac{p}{\alpha}+1). 
  \end{align*}
  Applying Stirling's formula gives 
  \begin{equation}\label{p-norm}
    \E(|X_i|^p)\lesssim b^p \sqrt{\frac{p}{\alpha}} (\frac{p}{e\alpha})^{\frac{p}{\alpha}} \ {\rm for} \ p\geq \alpha ,1\leq i\leq N. 
  \end{equation}
  
Next, for any $p\geq \alpha,q\geq 1$, one has 
  \begin{align*}
    \E(\big|\max_{1\leq i\leq N} |X_i| \big|^p) \leq  \E( (\sum_{i=1}^{N}|X_i|^{qp})^{\frac{1}{q}}) \leq \left( \E(\sum_{i=1}^{N}|X_i|^{qp})\right)^{\frac{1}{q}}, 
  \end{align*}
  where in   the last inequality   we use the  H\"older inequality.  As a result, 
  \begin{equation*}
    \| \max_{1\leq i\leq N} |X_i| \|_p\leq \left(\sum_{i=1}^{N} \E (|X_i|^{qp})\right)^{\frac{1}{qp}}. 
  \end{equation*}
  Now applying  \eqref{p-norm} implies 
  \begin{align*}
    \| \max_{1\leq i\leq N} |X_i| \|_p & \lesssim \left(N b^{qp} \sqrt{\frac{qp}{\alpha}} (\frac{qp}{e\alpha})^{\frac{qp}{\alpha}} \right)^{\frac{1}{qp}} \\
     &\lesssim_{\alpha} N^{\frac{1}{qp}} b (qp)^{\frac{1}{\alpha}}. 
  \end{align*}
  By taking  $q=\log N$,  we get 
  \begin{equation}\label{C3}
    \sup_{p\geq \alpha} \frac{\| \max_{1\leq i\leq N} |X_i| \|_p}{p^{1/ {\alpha}}} \lesssim_{\alpha} b (\log N)^{\frac{1}{\alpha}}. 
  \end{equation}

  Finally, combining \cite[Lemma 3.5.5]{AAGM15}  and  \eqref{C3}  finishes  the proof.
\end{proof}

The last fact is the sub-orthogonal property of sub-Gaussian random variables. We say $X$ is sub-Gaussian if  $X\in L^{\psi_2}(\Omega,\P)$.
\begin{thm}[\cite{V18}, Theorem 2.6.1]\label{sub orthogonal} 
Assume $X_1,\cdots, X_N$ are independent {\bf mean-zero} sub-Gaussian random variables. Then 
  \[\|\sum_{i=1}^{N}X_i\|_{\psi_2}^2 \lesssim \sum_{i=1}^{N}\|X_i\|_{\psi_2}^2.\]
\end{thm}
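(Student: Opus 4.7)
The plan is to characterize the $\psi_2$-norm via moment generating function (MGF) estimates, so that the independence hypothesis can be exploited through the product formula for MGFs of independent sums. This is the standard route, following \cite{V18}.

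First, I would establish the two-sided equivalence that a mean-zero random variable $X$ satisfies $\|X\|_{\psi_2}\le K$ if and only if
\[
\mathbb{E}[e^{\lambda X}] \le \exp(C K^{2}\lambda^{2}) \qquad \text{for all } \lambda\in\mathbb{R},
\]
for some absolute constant $C>0$. The forward direction proceeds by extracting from $\mathbb{E}[e^{X^{2}/K^{2}}]\le 2$ the moment bounds $\mathbb{E}[X^{2p}]\le 2\,p!\,K^{2p}$ (via the Taylor expansion of the exponential), then Taylor-expanding $e^{\lambda X}$ term by term and invoking $\mathbb{E}[X]=0$ to discard the order-$\lambda$ contribution. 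The reverse direction uses the Chernoff-type bound (Theorem \ref{Chernoff bound}) on $|X|$ combined with the integral representation $\mathbb{E}[e^{X^{2}/\lambda^{2}}]-1 = \lambda^{-2}\int_{0}^{\infty} e^{t^{2}/\lambda^{2}}\cdot 2t\cdot \mathbb{P}(|X|\ge t)\,dt$ to show that Gaussian tail decay $\mathbb{P}(|X|\ge t)\le 2\exp(-t^{2}/(CK)^{2})$ implies $\|X\|_{\psi_2}\le C'K$.

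Second, with the MGF characterization in hand, independence of $X_{1},\dots ,X_{N}$ immediately yields
\[
\mathbb{E}\!\left[\exp\!\Bigl(\lambda\sum_{i=1}^{N} X_{i}\Bigr)\right]
=\prod_{i=1}^{N}\mathbb{E}[e^{\lambda X_{i}}]
\le \exp\!\Bigl(C\lambda^{2}\sum_{i=1}^{N}\|X_{i}\|_{\psi_2}^{2}\Bigr),
\]
so that applying the reverse direction of the equivalence to $Y=\sum_{i}X_{i}$ with $\sigma^{2}=\sum_{i}\|X_{i}\|_{\psi_2}^{2}$ gives $\|\sum_{i}X_{i}\|_{\psi_2}\lesssim \sigma$, which is exactly the desired conclusion.

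The main obstacle is the careful bookkeeping of absolute constants throughout the MGF equivalence, and in particular ensuring the proper handling of the mean-zero hypothesis. Note that this hypothesis enters crucially in precisely one place: in the forward direction, the $\lambda\mathbb{E}[X]=0$ term is what allows the Taylor series of $e^{\lambda X}$ to be dominated by a pure Gaussian factor $\exp(CK^{2}\lambda^{2})$ rather than $\exp(CK|\lambda|+CK^{2}\lambda^{2})$; without this, the tensorization across the $N$ factors would accumulate a linear-in-$N$ error that destroys the square-additive estimate claimed in the theorem.
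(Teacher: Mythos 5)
The paper does not prove this statement itself—it is quoted directly from Vershynin \cite{V18}—and your proposal is precisely the standard argument from that source: the two-sided MGF characterization of the $\psi_2$-norm for centered variables, tensorization of the MGF bound via independence, and conversion back to a $\psi_2$-bound through the exponential Markov inequality and the layer-cake identity, with the correct observation that the mean-zero hypothesis is what kills the linear-in-$\lambda$ term. The only detail left implicit is that the term-by-term Taylor bound yields $\mathbb{E}e^{\lambda X}\le e^{CK^2\lambda^2}$ directly only for $|\lambda|\le 1/K$; for $|\lambda|\ge 1/K$ one uses $\lambda X\le \tfrac12\lambda^2K^2+\tfrac{X^2}{2K^2}$ (or absorbs the additive constant using $\lambda^2K^2\ge1$), which is routine.
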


\section*{Acknowledgement}
This work  is  supported by  the National Key R\&D Program
of China under Grant 2023YFA1008801.
 Y. Shi is supported by the NSFC (12271380) and  Z. Zhang is  supported by the NSFC (12288101).   

\section*{Data Availability}
		The manuscript has no associated data.
		\section*{Declarations}
		{\bf Conflicts of interest} \ The authors  state  that there is no conflict of interest.

\bibliographystyle{alpha}

\end{document}